\documentclass[compress,sort,twocolumn]{autart}

\usepackage{graphicx}          
\usepackage{subfigure}
\usepackage{fancyhdr}
\usepackage{amsmath}
\usepackage{multirow}
\usepackage{amssymb }
\usepackage{color}
\usepackage{graphics} 
\usepackage{graphicx}
\usepackage{amsmath}
\newtheorem{theorem}{\textbf{Theorem}}
\newtheorem{lemma}{\textbf{Lemma}}
\newtheorem{example}{\textbf{Example}}
\newtheorem{corollary}{\textbf{Corollary}}
\newtheorem{remark}{\textbf{Remark}}
\newtheorem{definition}{\textbf{Definition}}
\newtheorem{problem}{\textbf{Problem}}
\newtheorem{proposition}{\textbf{Proposition}}
\newtheorem{assumption}{\textbf{Assumption}}
\usepackage{url}
	\newenvironment{proof}{{{\bf Proof:}}}{\hfill $\square$\par}
	
	\usepackage{multirow} 
	\usepackage{xcolor}
	\usepackage{multirow}
	\usepackage{setspace}
	\usepackage{url}
	\usepackage{subfigure}
	\usepackage{fancyhdr}
	\usepackage{amsmath}
	\usepackage{multirow}
	\usepackage{amssymb }
	\usepackage{color}
	\usepackage{graphics} 
	\usepackage{graphicx}
	\usepackage{subeqnarray}
	\usepackage{cases}
	\usepackage{threeparttable}
	\usepackage{algorithm}
	\usepackage{algpseudocode}

\usepackage{graphicx}
\usepackage{needspace}

\newsavebox{\authorphotobox}
\newlength{\authorphototop}
\newcount\authorphotolines

\providecommand{\authorbiography}[3]{}
\renewcommand{\authorbiography}[3]{%
  \par\addvspace{3pt}%
  \Needspace{36mm}%
  \begingroup
    \setlength{\parindent}{0pt}%
    \setlength{\parskip}{0pt}%
    \sbox{\authorphotobox}{%
      \parbox[c][33mm][c]{25mm}{%
        \centering
        \IfFileExists{#1}{%
          \includegraphics[
            width=25mm,
            height=33mm,
            keepaspectratio
          ]{#1}%
        }{%
          \begingroup
          \setlength{\fboxsep}{0pt}%
          \fbox{%
            \parbox[c][\dimexpr33mm-2\fboxrule\relax][c]
            {\dimexpr25mm-2\fboxrule\relax}{%
              \centering\footnotesize Photo
            }%
          }%
          \endgroup
        }%
      }%
    }%
    \authorphotolines=\dimexpr33mm+\baselineskip\relax
    \divide\authorphotolines by \baselineskip
    \settoheight{\authorphototop}{T}%
    \noindent
    \parshape=0\relax
    \hangindent=28mm\relax
    \hangafter=-\authorphotolines\relax
    \makebox[0pt][l]{%
      \hspace*{-28mm}%
      \raisebox
        {\dimexpr\authorphototop-\ht\authorphotobox\relax}
        [0pt][0pt]{\usebox{\authorphotobox}}%
    }%
    \textbf{#2} #3\par
    \ifnum\prevgraf<\authorphotolines
      \advance\authorphotolines by -\prevgraf
      \vspace*{\authorphotolines\baselineskip}%
    \fi
    \par
  \endgroup
}

	 \normalsize
	\begin{document}
		
		\begin{frontmatter}


      \title{{\large{Weak and Strong Data Informativity with Structural Priors \\~ for System Analysis}}}
			
			
			\thanks[footnoteinfo]{This work was supported in part by the National Key Research and Development Program of China under Grant 2025YFA1018800, the National
				Natural Science Foundation of China under Grant 62373059, and the Beijing Institute of Technology Research Fund Program for Young Scholars.}
			
			{\small{
					{\author[bit]{ Yuan Zhang}\ead{zhangyuan14@bit.edu.cn},    \author[bit]{Wenxuan Xu}\ead{xuwenxuan1209@163.com}, \author[hunan]{Yun Feng}\ead{fyrobot@hnu.edu.cn},
					\author[qingdao]{Keke Huang}\ead{huangkeke@csu.edu.cn},  
					\author[thu]{Keyou You}\ead{
youky@tsinghua.edu.cn}}
			}} 
			
			\address[bit]{School of Automation, Beijing Institute of Technology, Beijing 100081, China}
			\address[hunan]{School of Artificial Intelligence and Robotics, and the National Engineering Research Center for Robot Visual Perception and Control Technology, Hunan University, Changsha 410082, China}
			\address[qingdao]{School of Automation, Central South University, Changsha 410083, China.} 
			\address[thu]{Department of Automation, and Beijing National Research Center for Information Science and Technology, Tsinghua University, Beijing 100084, China}

			\begin{keyword}                         
			Weak data informativity, strong data informativity, prior structural knowledge, structural controllability, stability 
			\end{keyword}                             
			\begin{abstract} 
When measurement data are insufficient for unique system identification, data informativity evaluates whether a dataset determines a property across all systems consistent with the data. This paper introduces a unified framework distinguishing between weak and strong data informativity under prior structural knowledge described by affine constraints: data are weakly informative if at least one system consistent with the data and prior knowledge possesses the target property, and strongly informative if every consistent system satisfies it. The latter is consistent with the notion of data informativity in the existing literature. By explicitly parameterizing all consistent systems as affine functions of free parameters via the Moore-Penrose pseudoinverse, we characterize these notions for fundamental system properties—including controllability, stabilizability, and stability. This formulation establishes that verifying weak and strong informativity, when tractable, reduces to rank tests of matrix pencils constructed from the data and prior knowledge. Our findings show that structural priors can lower data-excitation requirements and enable reliable analysis even when data fail to be strongly informative. Finally, the framework is extended to a class of noisy data.

			\end{abstract}
		\end{frontmatter}
		{{

\section{Introduction}

For decades, model-based control has dominated control engineering, relying heavily on accurate mathematical models derived from first principles or system identification. While an established model enables rigorous controller design and analysis using mature tools, developing such models can be time-consuming, resource-intensive, and prone to modeling inaccuracies when high-fidelity physics are unattainable \cite{6093751,8358007}. In response to these challenges, data-driven control has emerged as an attractive alternative paradigm \cite{2013Model}. Rather than requiring an explicit mathematical model, data-driven methods seek to analyze systems and design controllers directly from observational data collected during operation, thereby bypassing the potentially difficult intermediate step of first-principles modeling \cite{formulas2020,kang2023minimum}.

Generically, data-driven methods can be classified into direct and indirect approaches \cite{krishnan2021direct}. Direct methods evaluate system properties or design controllers directly from data without explicitly identifying a unique system model, whereas indirect methods follow the traditional identify-then-control pipeline. When data are sufficiently informative or persistently exciting---such that the dataset uniquely identifies the underlying system---both paradigms yield equivalent outcomes. However, when unique identification is precluded due to short experiments, insufficient excitation, or operational constraints, the seminal work of van Waarde et al.~\cite{informativity2020} introduced the framework of data informativity. This framework characterizes whether a given dataset is sufficient to determine a specific system property or controller across all systems consistent with the data. Although subsequently extended to diverse system properties and control tasks \cite{van2022data,zhu2024data}, existing informativity frameworks predominantly adopt a black-box or data-centric viewpoint, overlooking a vital resource in engineering practice: prior structural knowledge (e.g., physical laws, structural constraints, or known structural interconnections). Consequently, developing systematic methodologies that jointly exploit observational data and structural priors remains a critical and open challenge.

{A closer inspection of informativity for structural system properties such as controllability and stability reveals a deeper limitation. These conditions are typically formulated as rank conditions on data matrices \cite{informativity2020}. When these conditions fail because of short experiments or insufficient excitation, the data cannot certify the property in the standard sense. Crucially, such a failure is entirely ambiguous: it does not distinguish whether every data-consistent system violates the property or at least one consistent system possesses it. As a result, datasets conventionally deemed ``non-informative'' are often discarded or assumed useless, overlooking potentially valuable information and prompting costly additional experiments.

This observation motivates the central research question of this paper: What definitive conclusions can be drawn about system properties from data that fail to be informative in the standard sense?

To answer this question, we depart from the binary viewpoint of existing informativity and introduce a unified framework distinguishing between weak and strong data informativity in the presence of prior structural knowledge. Specifically, strong informativity requires every system consistent with the data and priors to possess the target property (consistent with the existing data informativity \cite{informativity2020}). In contrast, weak informativity requires only existential consistency---that is, at least one data- and prior-consistent system possesses the property. For fundamental properties such as controllability, stabilizability, and stability, weak informativity rigorously delineates the exact algebraic boundary between guaranteed property satisfaction and its guaranteed violation,  closing the analytical gap left by standard informativity. The main contributions are summarized as follows.

\begin{enumerate}
\item \textbf{Introducing weak data informativity.}
We introduce weak data informativity and derive necessary and sufficient conditions for controllability, stabilizability, and stability. For controllability, we establish a generic dichotomy: either no data-consistent system is controllable, or almost all such systems are controllable. Weak informativity therefore characterizes the intermediate regime between strong informativity for a property and strong informativity for its complement.

\item \textbf{Incorporating prior structural knowledge.}
We represent structural prior knowledge by affine constraints on the system matrices and parameterize the systems jointly consistent with the data and the prior. This yields characterizations of weak and strong informativity and reveals the computational limitations of general affine priors. For the tractable cases associated with single-affine constraints, we derive necessary and/or sufficient conditions based on matrix-pencil rank tests.

\item \textbf{Extending the framework to noisy data.}
We extend the framework to data corrupted by noise lying in a known linear subspace. In the absence of structural priors, weak informativity can still provide conclusions about controllability and stability in this setting, whereas strong informativity for these properties is possible only when the data are noiseless.
\end{enumerate}

The central technical innovation of this paper lies in explicitly parameterizing all systems jointly consistent with observational data and linear structural priors as affine functions of free parameters via the Moore--Penrose pseudoinverse. This characterization maps the space of consistent systems onto a structured parameter manifold, thereby transforming abstract property-verification problems into (tractable) algebraic conditions. Overall, our results show that meaningful conclusions about system properties can still be drawn from conventionally non-informative data and that structural prior knowledge can reduce the data excitation requirements for system analysis.}

{
{A growing body of work combines prior knowledge with data in systems and control \cite{berberich2022combining,chiuso2025harnessing,Breschi2023StructuralPriors,Kogan2025Robust,miller2025data,shakouri2026data,huang2025datadriven,vandenhof2026informativity,cao2026informativity}.} Existing studies exploit constraints on system matrices or uncertainties \cite{berberich2022combining,huang2025datadriven}, controller structures \cite{Breschi2023StructuralPriors,miller2025data}, stochastic distributions \cite{chiuso2025harnessing}, or controllability and stabilizability \cite{shakouri2026data}, mainly for controller design. More closely related, \cite{vandenhof2026informativity} uses known network structure to relax informativity conditions for target-module identification, whereas \cite{cao2026informativity} studies conventional informativity and controller synthesis for systems with partially known matrix entries. In contrast, this paper considers general affine structural priors, derives tractable conditions for the typical class of single-affine constraints, and characterizes both weak and strong data informativity for controllability and stability.
}

The paper is organized as follows. Section II presents the problem formulation. Section III characterizes weak data informativity for controllability, stabilizability, and stability. Section IV investigates weak data informativity in the presence of prior structural knowledge, while Section V studies strong data informativity under the same prior knowledge. Section VI extends the proposed framework to a class of noisy data, and Section VII concludes the paper.

{\bf Notations.}
For a matrix $M$, let $M^\dagger$ be the Moore--Penrose pseudoinverse, ${\rm im}(M)$ the column space of $M$, and ${\rm Tr}(M)$ the trace. Let $M^T$ denote the transpose of matrix $M$ and $\mathrm{rank}(M)$ the rank. Furthermore, we denote by $M^\perp=I-MM^\dagger$ and $M_\perp=I-M^\dagger M$. For a linear map $\Phi$, ${\rm im}(\Phi)$ denotes its image and ${\rm ker}\,\Phi$ denotes its kernel. For a linear space $\mathcal S$, ${\rm dim}(\mathcal S)$ denotes its dimension.

\section{Problem formulation}\label{sec-prob}
\begingroup
This section describes how observational data and structural priors restrict the admissible systems, and introduces weak and strong data informativity on the resulting set.

\subsection{System dynamics and observational data}

Consider the discrete-time linear time-invariant system
\begin{equation}\label{eq:sys}
    x(t+1)=A_u x(t)+B_u u(t),
\end{equation}
where $x(t)\in\mathbb{R}^n$ and $u(t)\in\mathbb{R}^m$ denote the state and input, respectively, and the true matrices $(A_u,B_u)$ are unknown. Given an input--state trajectory $\mathcal D=\{U,X\}$, with $U=\{u(0),\ldots,u(T-1)\}$ and $X=\{x(0),\ldots,x(T)\}$, define
\begin{align*}
X_p&:=\begin{bmatrix}x(0)&\cdots&x(T-1)\end{bmatrix},\\
X_f&:=\begin{bmatrix}x(1)&\cdots&x(T)\end{bmatrix},\\
U_p&:=\begin{bmatrix}u(0)&\cdots&u(T-1)\end{bmatrix},
\end{align*}
and $Z_p:=[X_p^T,U_p^T]^T\in\mathbb R^{(n+m)\times T}$. In the absence of noise, these matrices satisfy 
\begin{equation}\label{eq:data-equation}
    X_f=A_uX_p+B_uU_p=[A_u,B_u]Z_p.
\end{equation}
Let $\Sigma:=\mathbb R^{n\times n}\times\mathbb R^{n\times m}$ denote the ambient system space. The systems compatible with $\mathcal D$ form the \emph{data-consistent set}
\begin{equation}\label{eq:data-consistent-set}
    \Sigma_{\mathcal D}:=\{(A,B)\in\Sigma\mid [A,B]Z_p=X_f\}.
\end{equation}
If $\operatorname{rank}(Z_p)=n+m$, the data uniquely determine $[A_u,B_u]=X_fZ_p^\dagger$. Otherwise, $\Sigma_{\mathcal D}$ is an affine set containing infinitely many systems. All subsequent results apply to multiple trajectories by concatenating their respective data matrices horizontally. Note that each additional trajectory can potentially shrink $\Sigma_{\mathcal D}$.

\subsection{Prior structural knowledge}

Physical laws and network architecture often specify entries or relations among entries of $(A,B)$ even when its numerical parameters are unknown~\cite{berberich2022combining,zhang2019structural}. We represent such knowledge by affine equality constraints.

\begin{assumption}[General affine constraint]\label{general_prior}
There are known matrices $M_i\in\mathbb R^{n\times(n+m)}$ and a vector $\gamma=(\gamma_1,\ldots,\gamma_r)^T\in\mathbb R^r$ such that every admissible system satisfies
\begin{equation}\label{eq:structural_constraint}
    \langle M_i,[A,B]\rangle=\gamma_i,\qquad i=1,\ldots,r,
\end{equation}
where $\langle P,Q\rangle=\operatorname{Tr}(P^TQ)$ is the Frobenius inner product.
Denote the corresponding set by
\[
\Sigma_{\rm stru}:=\{(A,B)\in\Sigma\mid\langle M_i,[A,B]\rangle=\gamma_i,\ i=1,\ldots,r\}.
\]
\end{assumption}

For example, known entries, including zeros imposed by network topology, are obtained by choosing $M_i$ with a single nonzero entry. Symmetry and prescribed row sums are also affine constraints. Equivalently, the admissible matrices can be written as $[A,B]=N_0+\sum_{i=1}^{l}\mu_iN_i$, with known matrices $N_i$ and free scalars $\mu_i$~\cite{zhang2023real,miller2025data}.

We also consider a subclass for which more explicit criteria can be derived.
\begin{assumption}[Single-affine prior constraint]\label{assump:structured_block}
There are known matrices $S_\ell\in\mathbb R^{n_l\times n}$, $S_r\in\mathbb R^{(n+m)\times n_r}$, and $Q\in\mathbb R^{n_l\times n_r}$ such that
\begin{equation}\label{eq:Block_structural_constraint}
    S_\ell[A\,\,B]S_r=Q.
\end{equation}
The corresponding prior set is
\[
\hat\Sigma_{\rm stru}:=\{(A,B)\in\Sigma\mid S_\ell[A\,\,B]S_r=Q\}.
\]
\end{assumption}

Row and column selectors $S_\ell,S_r$ fix a prescribed subblock of $[A,B]$.  Setting $S_{\ell}=I_n$ (or $S_r=I_{n+m}$) imposes linear constraints on the columns (or rows) of $[A,B]$. For example, taking $S_\ell=I_n$, $S_r=[\mathbf 1_n^T,0_m^T]^T$, and $Q=\mathbf 1_n$ imposes $A\mathbf 1_n=\mathbf 1_n$, the row-sum constraint of a row-stochastic matrix, commonly arising in
discrete-time Laplacian dynamics.

Write $\Sigma_{\mathrm{prio}}$ for the prescribed prior set, such as $\Sigma_{\rm stru}$ or $\hat\Sigma_{\rm stru}$, and take $\Sigma_{\mathrm{prio}}=\Sigma$ when no prior is available. The jointly consistent systems form $\Sigma_{\mathcal D}\cap\Sigma_{\mathrm{prio}}$. Throughout, the prior is assumed valid for $(A_u,B_u)$, so this intersection is nonempty.

\subsection{Weak and strong data informativity}

For a property $\mathcal P$, let $\Sigma_{\mathcal P}\subseteq\Sigma$ denote the set of systems possessing it. We consider controllability, stability, and stabilizability, with
\begin{align*}
\Sigma_{\rm cont}&:=\{(A,B)\in\Sigma\mid(A,B)\text{ is controllable}\},\\
\Sigma_{\rm stab}&:=\{(A,B)\in\Sigma\mid\rho(A)<1\},\\
\Sigma_{\rm stbz}&:=\{(A,B)\in\Sigma\mid\exists K\in\mathbb R^{m\times n},\ \rho(A+BK)<1\},
\end{align*}
where $\rho(\cdot)$ denotes the spectral radius.

\begin{definition}[Weak and strong informativity]\label{def:weak-informativity}\label{def:weak-informativity-prior}\label{def:strong-informativity-prior}
The data $(U,X)$ are \emph{strongly informative for $\mathcal P$ on $\Sigma_{\mathrm{prio}}$} if every jointly consistent system possesses $\mathcal P$, i.e.,
\[
    (\Sigma_{\mathcal D}\cap\Sigma_{\mathrm{prio}})\subseteq\Sigma_{\mathcal P}.
\]
They are \emph{weakly informative for $\mathcal P$ on $\Sigma_{\mathrm{prio}}$} if at least one jointly consistent system possesses $\mathcal P$, i.e.,
\[
    (\Sigma_{\mathcal D}\cap\Sigma_{\mathrm{prio}})\cap\Sigma_{\mathcal P}\neq\emptyset.
\]
\end{definition}

When the prior set is not explicitly mentioned, we adopt the convention $\Sigma_{\mathrm{prio}}=\Sigma$, i.e., no prior constraints are imposed. In this case, the two conditions reduce, respectively, to
\begin{align}
\Sigma_{\mathcal D}&\subseteq\Sigma_{\mathcal P},\label{eq:strong-informativity}\\
\Sigma_{\mathcal D}\cap\Sigma_{\mathcal P}&\neq\emptyset.\label{eq:weak-informativity}
\end{align}
The former is consistent with the existing data informativity notion~\cite{informativity2020}. Hereafter, unless explicitly qualified, ``informative'' means ``strongly informative''.

Strong informativity certifies $\mathcal P$ for all systems in the jointly consistent set $ \Sigma_{\mathcal D}\cap\Sigma_{\mathrm{prio}}$ (thus for the true system); weak informativity certifies that $\mathcal P$ is compatible with the available information. Failure of weak informativity certifies the complementary property $\mathcal P^c$, since
\[
(\Sigma_{\mathcal D}\cap\Sigma_{\mathrm{prio}})\cap\Sigma_{\mathcal P}=\emptyset
\quad\Longleftrightarrow\quad
(\Sigma_{\mathcal D}\cap\Sigma_{\mathrm{prio}})\subseteq\Sigma_{\mathcal P^c}.
\]
Thus, weak but not strong informativity describes an unresolved regime. As illustrated in Fig.~\ref{fig:informativity-relation}, additional trajectories or valid priors shrink the consistent set: this regime may persist or resolve into strong informativity for $\mathcal P$ or $\mathcal P^c$. In this sense, weak informativity describes the transition between strong informativity for $\mathcal P$ and strong informativity for $\mathcal P^c$. Related experimental design questions are studied in~\cite{kang2023minimum}; here, the data are given, and the objective is to determine what can already be concluded from their consistent system set. 

\begin{figure}[t]
\centering
\includegraphics[width=\columnwidth]{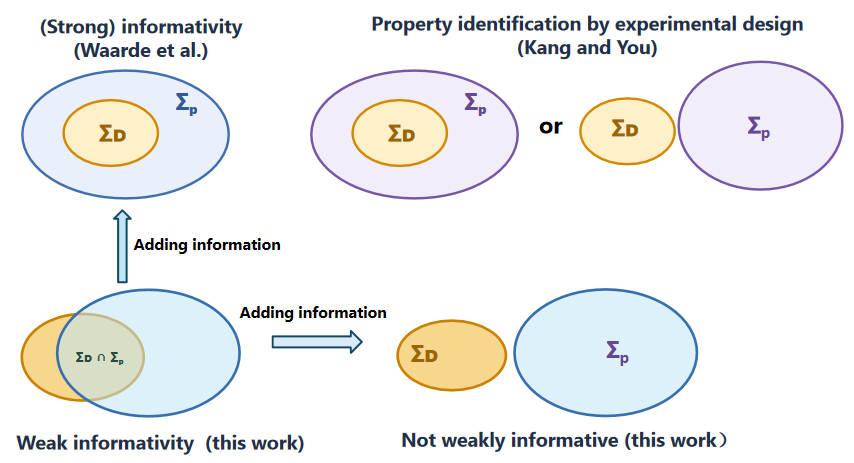}
\caption{Strong and weak data informativity without priors. Additional information shrinks the consistent set; the same interpretation applies to $\Sigma_{\mathcal D}\cap\Sigma_{\mathrm{prio}}$ when priors are available.}
\label{fig:informativity-relation}
\end{figure}

For example, without priors, strong informativity for controllability is equivalent to~\cite{informativity2020}
\begin{equation}\label{rankcondition}
\operatorname{rank}(\lambda X_p-X_f)=n,\qquad\forall\lambda\in\mathbb C.
\end{equation}
Its failure does not determine whether a controllable system remains consistent with the data. Weak informativity addresses this question.
The following two examples illustrate the distinction without priors.

\begin{example}\label{exa:not-weakly-informative}
Consider $n=3$, $m=1$, and the data
\[
X_p=\begin{bmatrix}0&0&0\\1&0&0\\0&0&1\end{bmatrix},\quad
U_p=\begin{bmatrix}0&1&0\end{bmatrix},\quad
X_f=\begin{bmatrix}0&0&0\\0&0&1\\0&1&1\end{bmatrix}.
\]
The data are not informative for controllability because the first row of $\lambda X_p-X_f$ is zero for every $\lambda\in\mathbb{C}$, so condition~\eqref{rankcondition} fails. By solving the linear system of equations $X_f=AX_p+BU_p$, every data-consistent system has the form
\begin{equation} \label{example-para}
    A=\begin{bmatrix}a&0&0\\b&0&1\\c&0&1\end{bmatrix},
\qquad
B=\begin{bmatrix}0\\0\\1\end{bmatrix},
\qquad a,b,c\in\mathbb{R}.
\end{equation}

The first state component is unreachable from the input, so every pair in $\Sigma_{\mathcal D}$ is uncontrollable. Any additional trajectory generated by the same true system produces a new consistent set contained in $\Sigma_{\mathcal D}$; hence, it cannot make the data informative for controllability. Instead, the current data already certify uncontrollability.
\end{example}

\begin{example}\label{exa:additional-data}
Consider the controllable pair
\[
    A_u=\begin{bmatrix}0&1\\0&0\end{bmatrix},
    \qquad
    B_u=\begin{bmatrix}0\\1\end{bmatrix}.
\]
Consider the transition generated from $x(0)=[1,0]^T$ and $u(0)=1$:
$
    x(1)=A_ux(0)+B_uu(0)=[0,1]^T.
$
This transition is also consistent with the uncontrollable pair
$
    A=0_{2\times2},\quad B=[0,1]^T.
$
Consequently, the initial data are weakly but not strongly informative for controllability. Continue the same trajectory with $u(1)=0$ and $u(2)=2$. The aggregated data become
\[
    X_p=\begin{bmatrix}1&0&1\\0&1&0\end{bmatrix},\qquad
    U_p=\begin{bmatrix}1&0&2\end{bmatrix},\qquad
    X_f=\begin{bmatrix}0&1&0\\1&0&2\end{bmatrix}.
\]
The stacked matrix $[X_p^T,U_p^T]^T$ is nonsingular. Hence, these consecutive trajectory data uniquely identify $(A_u,B_u)$ and are strongly informative for controllability.
\end{example}

\begin{problem}\label{prob:weak-informativity}\label{prob:informativity-prior}
Characterize weak informativity for controllability, stabilizability, and stability without priors, and weak and strong informativity for controllability and stability under the affine priors above.
\end{problem}
We investigate the computational limitations of general affine priors and derive explicit criteria for single-affine priors. Section~\ref{sec-noisy} further treats noise in known linear subspaces, where weak informativity can remain useful even when strong informativity fails.
\endgroup

				\section{Weak data informativity without priors}	
				

    	{This section characterizes weak data informativity for controllability, stabilizability, and stability, and establishes a genericity result for controllability over $\Sigma_{{\mathcal D}}$.}
				
				\subsection{Affine parameterization of the data-consistent set}

                {We first parameterize $\Sigma_{{\mathcal D}}$ affinely using the Moore--Penrose pseudoinverse.}
				
				Recall the stacked data matrix
				$Z_p := [X_p^T, U_p^T]^T$.
				Assume that the data are {\emph{consistent}}, namely, there exists at least one pair $(A,B)$ such that
				\(
				X_f = \begin{bmatrix} A & B \end{bmatrix} Z_p.
				\)
				Equivalently, this requires $X_f = X_f Z_p^\dagger Z_p$. Then, all pairs $(A,B)$ consistent with the data can be parameterized as
				\begin{equation}\label{eq:param_general}
					\begin{bmatrix} A_Y & B_Y \end{bmatrix}
					= X_f Z_p^\dagger + Y\bigl(I - Z_p Z_p^\dagger\bigr),
				\end{equation}
				where $Y \in \mathbb{R}^{n\times(n+m)}$ is an arbitrary matrix of free parameters. Consequently, the set of all data-consistent systems can be written as
				\begin{equation}\label{c}
					\Sigma_{{\mathcal D}} = \left\{ (A,B) \;\middle|\;
					\begin{aligned}
						\begin{bmatrix} A & B \end{bmatrix} &= X_f Z_p^\dagger + Y\bigl(I - Z_p Z_p^\dagger\bigr), \\
						Y &\in \mathbb{R}^{n\times(n+m)}
					\end{aligned}
					\right\}.
				\end{equation}

				\subsection{Structural controllability of linearly parameterized systems}
                	The parameterization in \eqref{eq:param_general} reveals that any data-consistent system matrices depend affinely on the free parameter matrix $Y$. Determining whether there exists a controllable pair in $\Sigma_{{\mathcal D}}$ is mathematically equivalent to checking if this affinely parameterized system can be made controllable for some choice of $Y$. This formulation naturally falls into the domain of \emph{structural controllability} for linearly parameterized systems. We therefore recall the framework of \cite{1101135}. 
                    
                    Consider a linear time-invariant system whose state-space matrices depend on a collection of free parameter matrices \(P_i\), \(i\in\mathbf{k}:=\{1,2,\dots,k\}\), in the following linearly parameterized form:
				\begin{equation}
					\label{eq:system_equations}
					\begin{aligned}
						A_p = A_0 + \sum_{i=1}^k B_i P_i C_i,
						B_p = B_0 + \sum_{i=1}^k B_i P_i D_i,
					\end{aligned}
				\end{equation}
				where \(A_0\), \(B_0\), \(B_i\), \(C_i\), and \(D_i\) are fixed real matrices of appropriate dimensions. To rule out triviality, assume that $[C_i,D_i]\ne 0$, $\forall i$.  The pair $(A_p,B_p)$ is said to be structurally controllable if there exist numerical values for the free parameters \(P_i\) ($i\in\mathbf{k}$) such that the resulting system \((A_p,B_p)\) is controllable.


Consider the system $(A_p,B_p)$ given in \eqref{eq:system_equations}.                For $i,j\in\{1,\ldots,k\}$, define
$H_{ij}(\lambda)=C_i(\lambda I-A_0)^{-1}B_j$ and
$H_{i0}(\lambda)=C_i(\lambda I-A_0)^{-1}B_0+D_i$.
The graph $\mathcal{G}$ associated with the parameterization has nodes
$\{0,1,\ldots,k\}$ and contains an edge $j\to i$ if and only if
$H_{ij}(\lambda)$ is not identically zero. For ${\bf s}=\{i_1,\ldots,i_q\}\subseteq
{\bf k}\doteq\{1,\ldots,k\}$, let
$B_{\bf s}=[B_{i_1}\ \cdots\ B_{i_q}]$, and define
\(
C_{\bf s}
=
\begin{bmatrix}
C_{i_1}^{\mathsf T} & \cdots & C_{i_q}^{\mathsf T}
\end{bmatrix}^{\mathsf T},
D_{\bf s}
=
\begin{bmatrix}
D_{i_1}^{\mathsf T} & \cdots & D_{i_q}^{\mathsf T}
\end{bmatrix}^{\mathsf T}.
\)
For ${\bf s}=\emptyset$, these matrices are understood to have compatible
zero dimensions.

\begin{lemma}\cite[Theo 1]{1101135} \label{lemma:sc}
The linearly parameterized system $(A_p,B_p)$ in \eqref{eq:system_equations} is structurally controllable if
and only if the following conditions hold:
\begin{enumerate}
\item for every $i\in{\bf k}$, there exists a directed path
from node $0$ to node $i$ in $\mathcal{G}$;
\item for every $\lambda\in\sigma(A_0)$ (eigenvalues of $A_0$) and every
${\bf s}\subseteq{\bf k}$,
\[
\operatorname{rank}
\begin{bmatrix}
\lambda I-A_0 & -B_0 & B_{\bf s}\\
C_{\bar{\bf s}} & D_{\bar{\bf s}} & 0
\end{bmatrix}\ge n,
\qquad
\bar{\bf s}:={\bf k}\setminus{\bf s}.
\]
\end{enumerate}
\end{lemma}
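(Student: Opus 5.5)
This is a cited result (Theorem 1 of \cite{1101135}), so the plan is to reconstruct its proof rather than derive anything new. The argument runs in three steps.

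\textbf{Step 1: reduce controllability to a rank condition.} First note that controllability holds for generic parameter values as soon as it holds for one value, because the set of uncontrollable parameter values is the zero set of minors of the controllability matrix, which are polynomial in the entries of $P_i$. By the PBH test, $(A_p,B_p)$ is controllable iff $[\lambda I-A_p,\ B_p]$ has rank $n$ for every $\lambda\in\mathbb C$. I would write this pencil as a linear fractional transformation:
\[
\begin{bmatrix}\lambda I-A_p & B_p\end{bmatrix}
=\begin{bmatrix}\lambda I-A_0 & B_0\end{bmatrix}-B_{\bf k}P\begin{bmatrix}C_{\bf k} & D_{\bf k}\end{bmatrix},
\]
with $P=\mathrm{diag}(P_1,\dots,P_k)$. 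Introducing the auxiliary variables $w=P\,[C_{\bf k},D_{\bf k}]\,\xi$, rank deficiency of the pencil becomes the existence of a nontrivial left null vector of an augmented matrix built from $\lambda I-A_0$, $B_0$, $B_i$, $C_i$, $D_i$ and $P$. The problem thus splits into two parts: $\lambda\notin\sigma(A_0)$, handled by transfer functions, and $\lambda\in\sigma(A_0)$, handled by condition 2.

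\textbf{Step 2: necessity.} For condition 2, suppose that for some $\lambda_0\in\sigma(A_0)$ and some ${\bf s}$ the stated matrix has rank $<n$. Then there is a left null vector $v$ with $v^T(\lambda_0I-A_0)=0$, $v^TB_0=0$, $v^TB_{\bf s}=0$ and $v^T$ in the span of the rows $C_{\bar{\bf s}},D_{\bar{\bf s}}$. I would show this produces a left eigenvector of $[\lambda_0 I-A_p,\ B_p]$ for every $P$, which amounts to a counting/dimension argument on a Kronecker-type structure. For condition 1, suppose node $i$ is not reachable from $0$. Let $R$ be the set of nodes not reachable from $0$. Then $H_{j\ell}\equiv0$ for $j\in R$ and $\ell\notin R$, so the channels $C_R(\lambda I-A_p)^{-1}[B_p,\ \cdot\,]$ never see the input. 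Hence the rows $C_R$ generate an $A_p$-invariant unobservable-from-input structure. The rank of the pencil then drops for generic $\lambda$ tied to $P$, which is verified via the Schur complement $I-PH(\lambda)$.

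\textbf{Step 3: sufficiency, the main obstacle.} Here I would argue that, for generic $P$, rank loss at $\lambda\notin\sigma(A_0)$ occurs only if the transfer matrix $[I-PH_{\bf kk}(\lambda)]^{-1}PH_{{\bf k}0}(\lambda)$ degenerates. Path connectivity from $0$ to every node is what ensures that no block of this matrix is identically zero. Handling $\lambda\in\sigma(A_0)$ requires, for each $\lambda$, partitioning the indices according to which $P_i$ act "through" $B_{\bf s}$ versus "through" $C_{\bar{\bf s}}$. I would then show that generic $P$ avoids all the finitely many bad configurations; since these are finitely many proper algebraic subsets, their union cannot cover the parameter space. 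I expect making this genericity and union argument rigorous, jointly over all $\lambda$, to be the hardest part, and would defer to the original graph-theoretic proof in \cite{1101135} for the combinatorial details.
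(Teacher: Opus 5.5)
Your proposal does not prove the sufficiency direction, which is the real content of the Corfmat--Morse theorem (the paper only cites it). The mechanism you outline for it cannot supply the missing argument.

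For $\lambda\notin\sigma(A_0)$, controllability at $\lambda$ is equivalent to full row rank of $[PH_{\mathbf k0}(\lambda),\ I-PH_{\mathbf{kk}}(\lambda)]$. This can fail only at the $P$-dependent roots of $\det(I-PH_{\mathbf{kk}}(\lambda))$, i.e., at the moving eigenvalues of $A_p$. For each fixed such $\lambda$, the set of bad $P$ is automatically a proper algebraic set, whether or not condition~1 holds ($P=0$ is good since $\lambda I-A_0$ is invertible). The uncontrollable set is the union of these over a continuum of $\lambda$, not over ``finitely many bad configurations''. For example, take $A_0=0_{2\times 2}$, $B_0=e_1$, $B_1=e_2$, $C_1=e_2^T$, $D_1=0$. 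Condition~2 holds and every per-$\lambda$ bad set $\{P_1=\lambda\}$ is proper, yet $A_p=\mathrm{diag}(0,P_1)$, $B_p=e_1$ is uncontrollable for every $P_1$.

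So everything rests on your assertion that path connectivity (no identically zero block) excludes such moving cancellations for generic $P$, and that assertion is the theorem itself. A proof must handle all $\lambda$ at once. The uncontrollable set is the common zero set of the $n\times n$ minors of $[B_p,\ A_pB_p,\ \dots,\ A_p^{n-1}B_p]$, in which $\lambda$ does not appear. Hence it suffices to exhibit one controllable $P$, or to show that the incidence variety $\{(\lambda,P):\operatorname{rank}[\lambda I-A_p,\ B_p]<n\}$ has dimension smaller than the number of free parameters. This is precisely where condition~1 must be used. Even at $\lambda\in\sigma(A_0)$, passing from condition~2 to generic full rank needs the nontrivial converse of the partition rank test (an Anderson--Clements-type fixed-mode lemma), which you also leave open.

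Your necessity sketches also need repair. For condition~2, write $\mathcal M_{\mathbf s}(\lambda)$ for the matrix appearing there. Rank deficiency of $\mathcal M_{\mathbf s}(\lambda_0)$ does not give a $v$ with $v^T(\lambda_0 I-A_0)=0$ and $v^TB_0=0$, and ``$v^T$ in the row span of $[C_{\bar{\mathbf s}},D_{\bar{\mathbf s}}]$'' does not match dimensions. Its left kernel vectors are $[v^T,w^T]$ with $v^T(\lambda_0I-A_0)=-w^TC_{\bar{\mathbf s}}$, $v^TB_0=w^TD_{\bar{\mathbf s}}$, $v^TB_{\mathbf s}=0$, and the uncontrollable left eigenvector of $(A_p,B_p)$ varies with $P$. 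The direct route is a rank bound. With $P=\mathrm{diag}(P_1,\dots,P_k)$ and $m_i$ the number of columns of $B_i$,
\[
\operatorname{rank}\begin{bmatrix}\lambda I-A_p & -B_p\end{bmatrix}+\sum_{i=1}^k m_i=\operatorname{rank}\begin{bmatrix}\lambda I-A_0 & -B_0 & B_{\mathbf k}\\ PC_{\mathbf k} & PD_{\mathbf k} & I\end{bmatrix}.
\]
Deleting the row blocks indexed by $\mathbf s$ and the column blocks containing $B_i$, $i\in\bar{\mathbf s}$, bounds the right-hand side by $\operatorname{rank}\mathcal M_{\mathbf s}(\lambda)+\sum_i m_i$. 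Hence $\operatorname{rank}[\lambda_0I-A_p,\ B_p]<n$ for every $P$.

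For condition~1, the pencil cannot lose rank ``for generic $\lambda$''; it can only do so at eigenvalues of $A_p$. What $y_R=H_{RR}P_Ry_R$ (with $y_R=C_Rx+D_Ru$ in closed loop) actually yields is $C_R(\lambda I-A_p)^{-1}B_p+D_R\equiv 0$. So $D_R=0$ and the reachable subspace of $(A_p,B_p)$ lies in $\ker C_R$. Concluding uncontrollability requires $C_R\neq 0$, which is exactly where the standing assumption $[C_i,D_i]\neq 0$ enters. Your sketch never uses it, and without it condition~1 would not be necessary.
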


                				\subsection{Weak data informativity analysis via structural controllability}
				
				By mapping the parameterization \eqref{eq:param_general} into the structural controllability framework (with $Y$ serving as the sole free-parameter block), we can assess the genericity of controllable systems within $\Sigma_{{\mathcal D}}$. Because controllability is characterized by the non-vanishing of polynomial determinants,
                the family \eqref{eq:param_general} has the following ``almost or nothing'' dichotomy. Its proof is standard in the literature (see, e.g., \cite{1101135,zhang2019structural,zhang2024zero}), and  thus omitted here. 
				
				{\bf Fact:}\label{prop:generic_controllability_measure}
					Consider the parameterization of all data-consistent systems
					$(A_Y,B_Y),Y\in\mathbb{R}^{n\times(n+m)},
					$
					given in \eqref{eq:param_general}. Then exactly one of the following two cases occurs:
					
					\medskip
					\noindent
					\textit{(i) Complete uncontrollability:}
					The pair $(A_Y,B_Y)$ is uncontrollable for all $Y\in\mathbb{R}^{n\times(n+m)}$.
					
					\medskip
					\noindent
					\textit{(ii) Generic controllability:}
					There exists $Y_0\in\mathbb{R}^{n\times(n+m)}$ such that $(A_{Y_0},B_{Y_0})$ is controllable. In this case, the set
					\[
					\mathcal U := \left\{ Y\in\mathbb{R}^{n\times(n+m)} \;\middle|\; (A_Y,B_Y) \text{ is uncontrollable} \right\}
					\]
					has Lebesgue measure zero in $\mathbb{R}^{n\times(n+m)}$. Namely, $(A_Y,B_Y)$ is controllable for almost all $Y$.
				

				
				\begin{remark}
                As revealed above,  over the data-consistent set, controllability is either completely impossible or generic. Accordingly,
                although weak informativity is  weaker  than strong informativity, it still guarantees that controllability is the generic situation within the data-consistent set whenever it is possible. Before additional information becomes available, this provides perhaps the most meaningful assessment of controllability that can be drawn from the data-consistent set.
				\end{remark}
				
				The following theorem shows how to computationally verify weak informativity for controllability. The core idea is to check the structural controllability of the linearly parameterized system \eqref{eq:param_general}. The proof is provided in the Appendix. 
				
				\begin{theorem} \label{theo-weak} 
					For consistent data $(U,X)$, let $Z_{p_1}^\dagger \in \mathbb{R}^{T\times n}$ and $Z_{p_2}^\dagger \in \mathbb{R}^{T\times m}$ denote the first $n$ columns and last $m$ columns of $Z_p^\dagger$, respectively. Let $Z_{p_1}^\perp \in \mathbb{R}^{(n+m)\times n}$ and $Z_{p_2}^\perp \in \mathbb{R}^{(n+m)\times m}$ denote the first $n$ columns and last $m$ columns of $Z_{p}^\perp$, respectively, with $Z_{p}^\perp := I - Z_p Z_p^\dagger$. Then, the data $(U,X)$ are weakly informative for controllability if and only if both of the following hold:
					\begin{enumerate}
						\item Either $Z_p^\perp=0$, or the transfer function $Z_{p_2}^\perp + Z_{p_1}^\perp (\lambda I - X_f Z_{p_1}^\dagger)^{-1} X_f Z_{p_2}^\dagger \not\equiv 0$. \footnote{This means the matrix function is not identically zero across all frequencies $\lambda$.}
						\item For all $\lambda \in \mathbb{C}$, the following rank condition holds:
						$$\operatorname{rank}\begin{bmatrix} \lambda I - X_f Z_{p_1}^\dagger & -X_f Z_{p_2}^\dagger \\ Z_{p_1}^\perp & Z_{p_2}^\perp \end{bmatrix} \geq n.$$
					\end{enumerate}
				\end{theorem}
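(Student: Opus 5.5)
The plan is to rewrite the parameterization \eqref{eq:param_general} as a linearly parameterized system of the form \eqref{eq:system_equations} with a single free block ($k=1$), and then apply Lemma~\ref{lemma:sc}. Splitting columns gives
\[
A_Y = X_f Z_{p_1}^\dagger + Y Z_{p_1}^\perp,\qquad B_Y = X_f Z_{p_2}^\dagger + Y Z_{p_2}^\perp .
\]
This matches \eqref{eq:system_equations} with
\[
A_0=X_fZ_{p_1}^\dagger,\quad B_0=X_fZ_{p_2}^\dagger,\quad B_1=I_n,\quad P_1=Y,\quad C_1=Z_{p_1}^\perp,\quad D_1=Z_{p_2}^\perp .
\]
By \eqref{c}, weak informativity for controllability, i.e.\ $\Sigma_{\mathcal D}\cap\Sigma_{\rm cont}\neq\emptyset$, holds exactly when some $Y$ makes $(A_Y,B_Y)$ controllable. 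This is precisely structural controllability of this family.

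First I handle the degenerate case $Z_p^\perp=0$, which is excluded by the nontriviality assumption $[C_1,D_1]\neq 0$ of Lemma~\ref{lemma:sc}. In this case $Z_p$ has full row rank, and $\Sigma_{\mathcal D}$ is the single system $(A_0,B_0)$. Weak and strong informativity then coincide with controllability of $(A_0,B_0)$, i.e.\ $\operatorname{rank}[\lambda I-A_0,\ -B_0]=n$ for all $\lambda$. This is exactly condition~2, since its bottom block row vanishes. Condition~1 holds trivially by the ``either'' clause, so the theorem holds in this case.

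For $Z_p^\perp\neq0$, I apply Lemma~\ref{lemma:sc} with $k=1$. The graph $\mathcal G$ has nodes $\{0,1\}$. Condition~1 of the lemma asks for a path $0\to1$, which here can only be the direct edge. That edge exists iff
\[
H_{10}(\lambda)=C_1(\lambda I-A_0)^{-1}B_0+D_1=Z_{p_2}^\perp+Z_{p_1}^\perp(\lambda I-X_fZ_{p_1}^\dagger)^{-1}X_fZ_{p_2}^\dagger\not\equiv0,
\]
which is condition~1 of the theorem. The self-loop $H_{11}$ is irrelevant. Condition~2 of the lemma ranges over $\mathbf s\in\{\emptyset,\{1\}\}$. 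For $\mathbf s=\{1\}$ the matrix contains $[\lambda I-A_0,\ -B_0,\ I_n]$, whose rank is always $n$, so that case is automatic. For $\mathbf s=\emptyset$ the matrix is
\[
\begin{bmatrix}\lambda I-A_0&-B_0\\ C_1&D_1\end{bmatrix},
\]
which is exactly the matrix in condition~2 of the theorem.

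The remaining point, and the main subtlety, is that Lemma~\ref{lemma:sc} imposes the rank test only at $\lambda\in\sigma(A_0)$, whereas the theorem states it for all $\lambda\in\mathbb C$. The two are equivalent: for $\lambda\notin\sigma(A_0)$ the block $\lambda I-A_0$ is invertible, so the rank is at least $n$ automatically. Combining the cases gives the stated equivalence. I would also double-check that Lemma~\ref{lemma:sc} is stated for real parameters and real controllability, so that the real $Y$ here is the correct notion; if needed, real structural controllability follows from the genericity Fact, since a nonempty Zariski-open set in complex parameter space meets the reals.
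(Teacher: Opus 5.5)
Your proposal is correct and follows the paper's proof essentially step for step. It uses the same identification $A_0=X_fZ_{p_1}^\dagger$, $B_0=X_fZ_{p_2}^\dagger$, $B_1=I_n$, $C_1=Z_{p_1}^\perp$, $D_1=Z_{p_2}^\perp$, the same separate treatment of $Z_p^\perp=0$, and the same reduction of Lemma~\ref{lemma:sc} to the single edge $0\to1$ and the subset $\mathbf s=\emptyset$. Your extra remarks are correct details that the paper leaves implicit: the rank test extends from $\sigma(A_0)$ to all of $\mathbb C$, and the result holds for real parameters $Y$.
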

				
				\begin{remark} \label{remark_compare}
					Conditions (1)--(2) of Theorem \ref{theo-weak} are necessary for \eqref{rankcondition}; we verify only Condition (2). Let $A_0=X_fZ_{p_1}^\dagger$, $B_0=X_fZ_{p_2}^\dagger$, $C_1=Z_{p_1}^\perp$, $D_1=Z_{p_2}^\perp$, and $M(\lambda):=\begin{bmatrix}\lambda I-A_0&-B_0\\C_1&D_1\end{bmatrix}$. Since $C_1X_p+D_1U_p=0$ and $A_0X_p+B_0U_p=X_f$,
					\[
					M(\lambda)Z_p=\begin{bmatrix}\lambda X_p-X_f\\0\end{bmatrix}.
					\]
					Thus \eqref{rankcondition} gives $\operatorname{rank}M(\lambda)\ge\operatorname{rank}(M(\lambda)Z_p)=n$, proving Condition (2).
				\end{remark}

				The next corollary gives necessary and sufficient conditions for the data $(U,X)$ to be weakly informative for stabilizability. Note that this corollary is not a trivial extension of Theorem \ref{theo-weak},  whose proof is given in the Appendix. 
				\begin{corollary} \label{coro:weak-stabilizability}
					The consistent data $(U,X)$ are weakly informative for stabilizability if and only if 
					\begin{equation} \label{eq:weak-stabilizability}
						\operatorname{rank}\begin{bmatrix} \lambda I - X_f Z_{p_1}^\dagger & -X_f Z_{p_2}^\dagger \\ Z_{p_1}^\perp & Z_{p_2}^\perp \end{bmatrix} \geq n, \forall |\lambda|\ge 1. 
					\end{equation}
				\end{corollary}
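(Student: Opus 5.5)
The plan is to translate stabilizability of $(A_Y,B_Y)$ into a pointwise rank statement about the pencil in \eqref{eq:weak-stabilizability}, using the PBH test and the parameterization \eqref{eq:param_general}. As in Remark~\ref{remark_compare}, write $A_0=X_fZ_{p_1}^\dagger$, $B_0=X_fZ_{p_2}^\dagger$, $C_1=Z_{p_1}^\perp$, $D_1=Z_{p_2}^\perp$, and let $M(\lambda)$ be the pencil of \eqref{eq:weak-stabilizability}. Then $[A_Y,B_Y]=[A_0,B_0]+Y[C_1,D_1]$, which gives the identity
\[
\begin{bmatrix} I & -Y\end{bmatrix} M(\lambda)=\begin{bmatrix}\lambda I-A_Y & -B_Y\end{bmatrix}.
\]
By the PBH test, $(A_Y,B_Y)$ is stabilizable if and only if $\operatorname{rank}[\lambda I-A_Y,\,B_Y]=n$ for all $|\lambda|\ge 1$.

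\emph{Necessity} is then immediate. If some $Y$ gives a stabilizable pair, then for every $|\lambda|\ge1$ the identity yields $n=\operatorname{rank}([I,-Y]M(\lambda))\le\operatorname{rank}M(\lambda)$, which is \eqref{eq:weak-stabilizability}.

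\emph{Sufficiency} has a pointwise part and a uniform part. For a fixed $\lambda_0$ with $\operatorname{rank}M(\lambda_0)\ge n$, write the top block rows of $M(\lambda_0)$ as $R_1$ and the bottom ones as $R_2$. Since the row space of $[R_1;R_2]$ has dimension at least $n$, one can choose $Y$ so that $R_1-YR_2$ has rank $n$: each dependent row of $R_1$ is modified by a suitable combination of rows of $R_2$. As rank $n$ is a nonvanishing-minor condition polynomial in $Y$, it then holds for generic $Y$. The difficulty is that a single $Y$ must work for all $|\lambda|\ge1$ simultaneously, which is an uncountable set. I would handle this through the uncontrollable modes. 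Let $g(\lambda,Y)$ be the greatest common divisor, in $\mathbb R(Y)[\lambda]$, of the $n\times n$ minors of $[\lambda I-A_Y,\,-B_Y]$. For generic $Y$, the uncontrollable eigenvalues of $(A_Y,B_Y)$ are exactly the roots of $g(\cdot,Y)$.

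The key claim, and the expected main obstacle, is that every root of $g$ is \emph{fixed}: for generic $Y$, the set of $\lambda$ with $\operatorname{rank}[\lambda I-A_Y,B_Y]<n$ coincides with the finite set $\Lambda:=\{\lambda:\operatorname{rank}M(\lambda)<n\}$. This is the analogue of the fixed-mode characterization for the "feedback" $A_0+YC_1$, $B_0+YD_1$, in the spirit of Condition~(2) of Lemma~\ref{lemma:sc} with the single parameter block $Y$ entering through $B_1=I$. My first attempt would exploit that $B_1=I$ is full rank. I would apply a Kalman decomposition of a generic $(A_Y,B_Y)$ and show that if its uncontrollable part had an eigenvalue moving with $Y$, then perturbing $Y$ in the free row directions $\operatorname{im}(Z_p^\perp)^T$ would move the uncontrollable subspace. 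This would contradict the fact that generic rank profiles are constant on an open dense set. If that route stalls, I would instead invoke the Anderson--Clements/Wang--Davison fixed-mode theorem for static output feedback with feedthrough, where the fixed modes are exactly the rank-drop points of $M(\lambda)$. Granting the claim, \eqref{eq:weak-stabilizability} says that $\Lambda$ lies in the open unit disk. Hence for generic $Y$ every uncontrollable mode of $(A_Y,B_Y)$ is stable, so $(A_Y,B_Y)$ is stabilizable, which proves weak informativity. When $\Lambda=\emptyset$ and Condition~(1) of Theorem~\ref{theo-weak} holds, this case reduces to Theorem~\ref{theo-weak} and serves as a consistency check.
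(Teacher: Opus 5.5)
\paragraph{Gap in the sufficiency argument.} Your necessity argument is correct and is the paper's. The sufficiency argument, however, rests on a key claim that is false in general. It is not true that, for generic $Y$, the uncontrollable modes of $(A_Y,B_Y)$ are exactly the points of $\Lambda$. Example~\ref{exa:not-weakly-informative} is a counterexample. There $\operatorname{rank}M(\lambda)\ge 3=n$ for every $\lambda\in\mathbb C$, so $\Lambda=\emptyset$. Yet every consistent pair has the uncontrollable mode $a$ in \eqref{example-para}, which is the $(1,1)$ entry of $Y$ and therefore moves with $Y$. Your own ``consistency check'' exposes the problem: if the claim held, Condition~(2) of Theorem~\ref{theo-weak} alone would give generic controllability, and Condition~(1) would be redundant. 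The claim breaks down whenever $H(\lambda)=C_1(\lambda I-A_0)^{-1}B_0+D_1\equiv0$ while $C_1\neq0$. In that regime stabilizability is not even a generic property: in the example, every $Y$ with $|a|>1$ gives a non-stabilizable pair. So no argument of the form ``a generic $Y$ works'' can prove sufficiency. Neither your perturbation heuristic nor the fixed-mode theorem can repair it, since the latter describes $Y$-invariant eigenvalues, not $Y$-dependent uncontrollable modes.

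\paragraph{The missing case split.} You need to split on whether $H\equiv0$, which is how the paper argues.

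If $H\not\equiv0$, your claim is true, and your pointwise argument plus a dimension count proves it. Work over $\mathbb C$ with transposes. For $\lambda\notin\sigma(A_0)$, an uncontrollable mode with left eigenvector $v$ gives $q^T:=v^TY$. This $q$ satisfies $q^TC_1(\lambda I-A_0)^{-1}=v^T\neq0$ (so $q\neq0$), $q^TH(\lambda)=0$, and $q^T=q^TC_1(\lambda I-A_0)^{-1}Y$. For fixed $(\lambda,q)$, the last equation cuts out an affine set of codimension $n+m$ in $Y$. Meanwhile $(\lambda,[q])$ ranges over a family of dimension at most $n+m-r$, where $r\ge1$ is the normal rank of $H$. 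Hence the bad $Y$ lie in a proper algebraic set. The finitely many $\lambda\in\sigma(A_0)\setminus\Lambda$ are then handled pointwise.

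If $H\equiv0$, then $D_1=0$ and $B_Y=B_0$, and $C_1$ annihilates the reachable subspace of $(A_0,B_0)$. Take Kalman coordinates $T$ for $(A_0,B_0)$, with $C_1T=[\,0\ \ C_{12}\,]$ and $T^{-1}Y=[\,Y_1^T\ \ Y_2^T\,]^T$. The uncontrollable modes of $(A_Y,B_Y)$ are then exactly the eigenvalues of $A_{22}+Y_2C_{12}$. Condition \eqref{eq:weak-stabilizability} reduces to detectability of $(A_{22},C_{12})$. You must then \emph{choose} $Y_2$ as an output-injection gain making $A_{22}+Y_2C_{12}$ Schur. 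Your proposal omits this second case, and it is exactly where a nongeneric choice of $Y$ is essential.
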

				
				From \cite{informativity2020}, $(U,X)$ are informative for stabilizability if and only if ${\rm rank}(\lambda X_p-X_f)=n$, $\forall |\lambda|\ge 1$. Following Remark \ref{remark_compare}, Condition \eqref{eq:weak-stabilizability} is less restrictive than this condition. 
				
				\begin{theorem} \label{theo-stability}
					The consistent data $(U,X)$ are weakly informative for stability if and only if the pair $(X_f Z_{p_1}^\dagger, Z_{p_1}^\perp)$ is detectable, i.e., 
					\begin{equation} \label{weak-stability}
						\operatorname{rank}\begin{bmatrix} \lambda I - X_f Z_{p_1}^\dagger \\ Z_{p_1}^\perp \end{bmatrix} = n, \quad \forall |\lambda|\ge 1. 
					\end{equation}
				\end{theorem}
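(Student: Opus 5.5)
Using the parameterization \eqref{eq:param_general}, every data-consistent system has
\[
A_Y = X_f Z_{p_1}^\dagger + Y Z_{p_1}^\perp,
\]
with $Y\in\mathbb{R}^{n\times(n+m)}$ free, since the first $n$ columns of $X_fZ_p^\dagger + YZ_p^\perp$ are $X_fZ_{p_1}^\dagger + YZ_{p_1}^\perp$. Write $A_0:=X_fZ_{p_1}^\dagger$ and $C:=Z_{p_1}^\perp$. Stability involves only $A$, and $B$ plays no role in it. Hence weak informativity for stability is equivalent to the existence of some $Y$ with $\rho(A_0+YC)<1$. This is exactly the classical output-injection (observer-gain) problem for the pair $(A_0,C)$, with $Y$ playing the role of an arbitrary gain $L\in\mathbb{R}^{n\times(n+m)}$. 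The plan is therefore to reduce the claim to the standard fact that $\exists L:\rho(A_0+LC)<1$ if and only if $(A_0,C)$ is detectable, and that detectability is equivalent to the PBH test \eqref{weak-stability}.

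For sufficiency, assume \eqref{weak-stability} holds. Then $(A_0,C)$ is detectable, so by duality $(A_0^T,C^T)$ is stabilizable. Hence there is $K$ with $\rho(A_0^T+C^TK)<1$. Setting $Y=K^T$ gives $\rho(A_0+YC)<1$. The corresponding $(A_Y,B_Y)$ lies in $\Sigma_{\mathcal D}$, because the parameterization is exact under the consistency assumption $X_f=X_fZ_p^\dagger Z_p$. So the data are weakly informative for stability.

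For necessity, suppose \eqref{weak-stability} fails at some $|\lambda|\ge1$. Then there is a nonzero $v\in\mathbb{C}^n$ with $A_0v=\lambda v$ and $Cv=0$. For every $Y$ we get $(A_0+YC)v=\lambda v$, so $\rho(A_Y)\ge|\lambda|\ge1$. Thus no consistent system is stable, and weak informativity fails.

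The only point needing care is that $Y$ is genuinely unrestricted. It is, because in \eqref{c} any $Y$ yields a consistent pair; no projection or restriction is needed, even though $C$ is a block of a projector. I would also remark that the rank condition "$=n$" is equivalent to "$\ge n$" here, since the matrix has $n$ columns. Finally, I would note the degenerate case $Z_p^\perp=0$: then $C=0$ and the condition reduces to $\rho(A_0)<1$, consistent with unique identification. No real obstacle is expected; the main step is recognizing the dual stabilizability argument.
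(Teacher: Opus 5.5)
Your proposal is correct and follows the paper's own route. Like the paper, you restrict the parameterization to $A_Y = X_fZ_{p_1}^\dagger + YZ_{p_1}^\perp$ with $Y$ free and invoke the standard equivalence between existence of a stabilizing output-injection gain and detectability of $(X_fZ_{p_1}^\dagger, Z_{p_1}^\perp)$; you additionally spell out the duality argument for sufficiency and the PBH-eigenvector argument for necessity, which the paper leaves implicit.
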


                \begin{proof}
                    From the parameterization \eqref{eq:param_general}, $A_Y=X_fZ_{p1}^\dagger+YZ_{p1}^\perp$. Therefore, there exists a $Y$ such that $A_Y$ is stable if and only if $(X_fZ_{p1}^\dagger,Z_{p1}^\perp)$ is detectable, leading to \eqref{weak-stability}.
                \end{proof}

                
				
\begin{example}[Example~\ref{exa:not-weakly-informative}, continued]
{For the data in Example~\ref{exa:not-weakly-informative}, every consistent system is uncontrollable. We now examine stabilizability and stability. Choosing $-1<a<1$ in~\eqref{example-para} yields stabilizable realizations, whereas the fixed eigenvalue $1$ rules out stability for every realization.}

				Now, let us apply Theorem \ref{theo-weak} to the data $(U,X)$. This yields
				\[
				A_0 = \begin{bmatrix} 0 & 0 & 0\\ 0 & 0 & 1\\ 0 & 0 & 1 \end{bmatrix}, 
				B_0 = \begin{bmatrix} 0\\ 0\\ 1 \end{bmatrix}, 
				C_1 = {\begin{bmatrix} 1 & 0 & 0\\ 0 & 0 & 0\\ 0 & 0 & 0\\0&0&0 \end{bmatrix}}, 
				D_1 = {0_{4\times1}},
				\]where $A_0,B_0,C_1,D_1$ are defined in Remark \ref{remark_compare}. 
				Direct calculation verifies that the transfer function $D_1 + C_1 (\lambda I - A_0)^{-1} B_0 \equiv 0$, violating Condition (1) of Theorem \ref{theo-weak}. This confirms that no controllable system exists within $\Sigma_{{\mathcal D}}$. Besides, Condition \eqref{eq:weak-stabilizability} holds, confirming that at least one stabilizable system exists.    Moreover, $(A_0,C_1)$ is undetectable, violating the weak informativity condition for stability.  {\hfill $\square$\par}
                \end{example}
				

				\section{Weak data informativity with priors} \label{sec-prior}
				
				In this section, we investigate weak data informativity for controllability and stability with prior structural knowledge defined in Section~\ref{sec-prob}. 

					\subsection{General affine structural constraints}
					To proceed with our analysis, the following auxiliary result is needed, whose proof is given in the Appendix.

					\begin{proposition} \label{auxiliary_lemma} 
						Let $M_1, \dots, M_r \in \mathbb{R}^{n \times q}$ be a set of given matrices and $\gamma = (\gamma_1, \dots, \gamma_r)^T \in \mathbb{R}^r$ be a vector of given scalars. Consider the linear system of equations for the unknown matrix $W \in \mathbb{R}^{n \times q}$:
						\begin{equation} \label{eq:system} 
							\langle M_i, W \rangle = \gamma_i, \quad i = 1, \dots, r. 
						\end{equation}
						Let $G \in \mathbb{R}^{r \times r}$ be the Gramian matrix defined by $G_{ij} = \langle M_i, M_j \rangle$. Then,
						\begin{enumerate}
							\item \textbf{Consistency:} The system \eqref{eq:system} is consistent if and only if $\gamma \in \operatorname{im}(G)$.
							\item \textbf{General Solution:} If the system is consistent, the general solution is
							\begin{equation} \label{eq:solution}
								W = Y - \sum_{i=1}^r \sum_{j=1}^r (G^\dagger)_{ij} \langle M_j, Y \rangle M_i+\sum_{j=1}^r (G^\dagger \gamma)_j M_j,
							\end{equation}
							where $Y\in\mathbb R^{n\times q}$ is arbitrary.
						\end{enumerate}
					\end{proposition}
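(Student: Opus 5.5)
The plan is to vectorize the problem and reduce it to the standard theory of linear systems $Ax=b$ via pseudoinverses, working entirely in the Frobenius inner-product space $\mathbb R^{n\times q}$. Define the linear map $\mathcal M:\mathbb R^{n\times q}\to\mathbb R^r$, $\mathcal M(W)=(\langle M_1,W\rangle,\dots,\langle M_r,W\rangle)^T$. Its adjoint with respect to the Frobenius and Euclidean inner products is $\mathcal M^*(c)=\sum_{j} c_j M_j$. Moreover, $\mathcal M\mathcal M^*=G$, since $(\mathcal M\mathcal M^*c)_i=\sum_j\langle M_i,M_j\rangle c_j$. Everything then follows from the identities $\operatorname{im}(\mathcal M)=\operatorname{im}(\mathcal M\mathcal M^*)$ and the pseudoinverse formula $\mathcal M^\dagger=\mathcal M^*(\mathcal M\mathcal M^*)^\dagger$.

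For \textbf{consistency}, I would show $\operatorname{im}(\mathcal M)=\operatorname{im}(G)$. The inclusion $\operatorname{im}(G)\subseteq\operatorname{im}(\mathcal M)$ is immediate from $G=\mathcal M\mathcal M^*$. For the converse, I compare orthogonal complements in $\mathbb R^r$. If $Gc=0$, then $0=c^TGc=\|\mathcal M^*c\|_F^2$, so $\mathcal M^*c=0$. Hence $\ker G\subseteq\ker\mathcal M^*=\operatorname{im}(\mathcal M)^\perp$, and since $G$ is symmetric, $\operatorname{im}(\mathcal M)\subseteq(\ker G)^\perp=\operatorname{im}(G)$. Solvability of $\mathcal M(W)=\gamma$ is therefore equivalent to $\gamma\in\operatorname{im}(G)$.

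For the \textbf{general solution}, I would verify two facts.
\begin{enumerate}
\item \emph{Particular solution.} Set $W_0=\mathcal M^*(G^\dagger\gamma)=\sum_j(G^\dagger\gamma)_jM_j$. Then $\mathcal M(W_0)=GG^\dagger\gamma=\gamma$, because $GG^\dagger$ is the orthogonal projector onto $\operatorname{im}(G)$ and $\gamma\in\operatorname{im}(G)$.
\item \emph{Homogeneous part.} The map $\Pi(Y)=Y-\mathcal M^*G^\dagger\mathcal M(Y)$ is exactly the first two terms of \eqref{eq:solution}, since $(\mathcal M^*G^\dagger\mathcal M(Y))=\sum_i\sum_j (G^\dagger)_{ij}\langle M_j,Y\rangle M_i$. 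I would show that $\Pi$ maps onto $\ker\mathcal M$.
\end{enumerate}
For the second point, first note $\mathcal M\Pi(Y)=\mathcal M(Y)-GG^\dagger\mathcal M(Y)=0$, because $\mathcal M(Y)\in\operatorname{im}(\mathcal M)=\operatorname{im}(G)$. Conversely, if $\mathcal M(Y)=0$ then $\Pi(Y)=Y$, so every element of the kernel is attained. Combining the two points, the solution set $W_0+\ker\mathcal M$ coincides with $\{W_0+\Pi(Y):Y\in\mathbb R^{n\times q}\}$, which is \eqref{eq:solution}.

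The main obstacle is showing that $\Pi$ annihilates the constraints, i.e. the step $GG^\dagger\mathcal M(Y)=\mathcal M(Y)$. This is exactly where the consistency characterization $\operatorname{im}(\mathcal M)=\operatorname{im}(G)$ is reused. It requires care only because $G$ may be singular when the $M_i$ are linearly dependent, which is why $G^\dagger$ appears instead of $G^{-1}$.
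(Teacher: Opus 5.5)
Your proof is correct and follows essentially the same route as the paper. You use the same operator $W\mapsto(\langle M_i,W\rangle)_i$ with $\mathcal M\mathcal M^*=G$, the same particular solution $\sum_j(G^\dagger\gamma)_jM_j$, and the same homogeneous term; the paper obtains that term as the orthogonal projection of $Y$ onto $\operatorname{span}\{M_i\}^\perp$ via the normal equations $Gc=b$, and your direct check that $\Pi$ maps into $\ker\mathcal M$ and fixes it (using $\operatorname{im}(\mathcal M)=\operatorname{im}(G)$) makes explicit a step the paper leaves tacit, namely that $c=G^\dagger b$ actually solves $Gc=b$ when $G$ is singular.
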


					\begin{theorem}[Parameterizations for data and priors]
						\label{thm:structured_parameterization2}
						Consider the set of data-consistent systems $\Sigma_{{\mathcal D}}$. Define  $\tilde M_i:=M_i(Z_p^\perp)^T$ and $\tilde\gamma_i:=\gamma_i-\langle M_i,X_fZ_p^\dagger\rangle$ for $i=1,\dots,r$. Let $\tilde G\in\mathbb R^{r\times r}$ be the Gram matrix with $\tilde G_{ij}:=\langle\tilde M_i,\tilde M_j\rangle$. Then $\Sigma_{{\mathcal D}}\cap\Sigma_{\rm stru}\neq\emptyset$ if and only if $\tilde\gamma\in\operatorname{im}(\tilde G)$. If this condition holds, then the set $\Sigma_{{\mathcal D}}\cap \Sigma_{\rm stru}$ can be parameterized as
											\begin{equation} \label{para_data_prior}
						\Sigma_{{\mathcal D}} \cap \Sigma_{\mathrm{\rm stru}} = \left\{ \Gamma_0 + \Omega(Y) Z_p^\perp \;\middle|\; Y \in \mathbb{R}^{n\times (n+m)} \right\},
					\end{equation}
					where $\Gamma_0 := X_f Z_p^\dagger + \sum_{j=1}^r (\tilde{G}^\dagger \tilde{\gamma})_j \tilde{M}_j Z_p^\perp$ and 
					\[
					\Omega(Y) := Y - \sum_{i=1}^r \sum_{j=1}^r (\tilde{G}^\dagger)_{ij} \langle \tilde{M}_j, Y \rangle \tilde{M}_i.
					\]
					\end{theorem}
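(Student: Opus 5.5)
The plan is to compose two affine parameterizations: first the data parameterization \eqref{eq:param_general}, then Proposition~\ref{auxiliary_lemma} applied to the free parameter $Y$. By \eqref{c}, a pair lies in $\Sigma_{\mathcal D}$ iff $[A,B]=X_fZ_p^\dagger+YZ_p^\perp$ for some $Y\in\mathbb R^{n\times(n+m)}$. Substituting this into \eqref{eq:structural_constraint} and using the trace identity $\langle M_i,YZ_p^\perp\rangle=\operatorname{Tr}(M_i^TYZ_p^\perp)=\operatorname{Tr}(Z_p^\perp M_i^T Y)=\langle M_i(Z_p^\perp)^T,Y\rangle$, membership in $\Sigma_{\rm stru}$ becomes the linear system $\langle\tilde M_i,Y\rangle=\tilde\gamma_i$, $i=1,\dots,r$, in the unknown $Y$. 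Hence $\Sigma_{\mathcal D}\cap\Sigma_{\rm stru}$ is exactly the image of the solution set of this system under $Y\mapsto X_fZ_p^\dagger+YZ_p^\perp$.

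Next, apply Proposition~\ref{auxiliary_lemma} with $q=n+m$, matrices $\tilde M_i$, data $\tilde\gamma$, and Gram matrix $\tilde G$. Part~1 gives solvability iff $\tilde\gamma\in\operatorname{im}(\tilde G)$. Since the intersection is nonempty iff some $Y$ solves the reduced system, this is the nonemptiness criterion. Part~2 gives all solutions as $Y'=\Omega(Y)+\sum_j(\tilde G^\dagger\tilde\gamma)_j\tilde M_j$ with $Y$ arbitrary. Substituting $Y'$ into $X_fZ_p^\dagger+Y'Z_p^\perp$ yields exactly $\Gamma_0+\Omega(Y)Z_p^\perp$, establishing \eqref{para_data_prior}.

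The main point to check is that both inclusions hold. The reverse inclusion (every $\Gamma_0+\Omega(Y)Z_p^\perp$ lies in the intersection) follows because each such $Y'$ solves the reduced system, so its image lies in both sets. The forward inclusion needs care, because the map $Y'\mapsto Y'Z_p^\perp$ is not injective. A given $(A,B)$ in the intersection comes from some $Y'$ satisfying $[A,B]=X_fZ_p^\dagger+Y'Z_p^\perp$. The reduced constraint was derived as an exact equivalence, $\langle M_i,[A,B]\rangle=\gamma_i\iff\langle\tilde M_i,Y'\rangle=\tilde\gamma_i$, so every such representative $Y'$ solves the reduced system. Proposition~\ref{auxiliary_lemma} then writes $Y'$ in the stated form with $Y=Y'$, since $\Omega$ fixes solutions of the homogeneous part up to the particular term. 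This gives the forward inclusion, and no issue arises from non-injectivity. I expect no real obstacle beyond bookkeeping the transpose in $\tilde M_i=M_i(Z_p^\perp)^T$. Note that $Z_p^\perp$ is symmetric, so the transpose is harmless, but I will keep it as stated.
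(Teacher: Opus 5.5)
Your proposal is correct and follows the same route as the paper: substitute the data parameterization \eqref{eq:param_general} into the constraints, use the trace identity to get $\langle\tilde M_i,Y\rangle=\tilde\gamma_i$, then invoke Proposition~\ref{auxiliary_lemma} and substitute back. Your explicit check of the forward inclusion is a valid detail that the paper leaves implicit: every solution $Y'$ is recovered by taking $Y=Y'$, because $\sum_j(\tilde G^\dagger\tilde\gamma)_j\tilde M_j$ is exactly the orthogonal projection of $Y'$ onto $\operatorname{span}\{\tilde M_i\}$.
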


					\begin{proof}
						Substituting the data-based parameterization from \eqref{eq:param_general} into the structural constraint \eqref{eq:structural_constraint} yields:
						\(
						\langle M_i, X_f Z_p^\dagger + Y Z_p^\perp \rangle = \gamma_i,\,\, i=1,...,r.
						\)
						By distributing the terms, we obtain:
						\[
						\langle M_i,YZ_p^\perp \rangle=\gamma_i-\langle M_i,X_fZ_p^\dagger \rangle,\,\, i=1,...,r.
						\]
						
						Using the Frobenius inner product, $\langle M_i, YZ_p^\perp \rangle = \operatorname{Tr}(M_i^T Y Z_p^\perp) = \operatorname{Tr}(Z_p^\perp M_i^T Y) = \operatorname{Tr}((M_i (Z_p^\perp)^T)^T Y)=\langle M_i(Z_p^\perp)^T,Y\rangle.$ Thus, with $\tilde M_i:=M_i(Z_p^\perp)^T$ and $\tilde\gamma_i:=\gamma_i-\langle M_i,X_fZ_p^\dagger\rangle$, the structural constraints are equivalent to the standard system
						\begin{equation} \label{eq:standard_form}
							\langle \tilde{M}_i,Y\rangle=\tilde{\gamma}_i,
							\quad i=1,\dots,r.
						\end{equation}
						
						By Proposition \ref{auxiliary_lemma}, the system \eqref{eq:standard_form} is consistent if and only if $\tilde\gamma\in\operatorname{im}(\tilde G)$. When this condition holds, the set of all solutions for $Y$ is given by \eqref{eq:solution}.
						Substituting this general solution for $Y$ back into the original data-consistent parameterization \eqref{eq:param_general} yields the parameterization of $\Sigma_{{\mathcal D}}\cap \Sigma_{\rm stru}$. 
					\end{proof}


									The parametrization in \eqref{para_data_prior} describes the set of systems $(A, B)$ that are consistent with the available data and structural priors. Since each entry of $(A,B)$ in $\Sigma_{{\mathcal D}}\cap \Sigma_{\rm stru}$ depends affinely on $Y$,  it is easy to see that if this set contains one controllable system, then almost all systems consistent with the data and the prior structure, i.e., almost all systems in $\Sigma_{{\mathcal D}}\cap \Sigma_{\rm stru}$, are controllable. 
						
						Let \(\{E_{ij}\}_{1\le i\le n,\;1\le j\le n+m}\) denote the standard basis of \(\mathbb{R}^{n\times (n+m)}\).
						Since $\Omega(Y)$ is linear in $Y$, expanding $Y = \sum_{i=1}^n \sum_{j=1}^{n+m} \mu_{ij} E_{ij}$ in \eqref{para_data_prior} yields
					\begin{equation}\label{matrix_net}
						[A, B] = \Gamma_0 + \sum_{i=1}^n \sum_{j=1}^{n+m} \mu_{ij} \, \Omega(E_{ij}) Z_p^{\perp},
					\end{equation}
					where $\mu_{ij} \in \mathbb{R}$ are free scalar parameters. Expression \eqref{matrix_net} forms a \emph{matrix net} in the sense of \cite{anderson1982structural}.
					 We summarize our result in the following proposition, whose proof is apparent based on Theorem \ref{thm:structured_parameterization2}. 
					\begin{proposition}
						The consistent data $(U,X)$ are weakly informative for controllability on the prior set $\Sigma_{\rm stru}$ if and only if 
						$\tilde \gamma\in {\rm im}(\tilde G)$ and the system described by the matrix net \eqref{matrix_net} is structurally controllable.
					\end{proposition}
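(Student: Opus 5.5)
The plan is to unwind the definition of weak informativity on $\Sigma_{\rm stru}$ and match it with the parameterization of Theorem~\ref{thm:structured_parameterization2}. By Definition~\ref{def:weak-informativity}, the data are weakly informative for controllability on $\Sigma_{\rm stru}$ exactly when $(\Sigma_{\mathcal D}\cap\Sigma_{\rm stru})\cap\Sigma_{\rm cont}\neq\emptyset$. This has two ingredients: the jointly consistent set must be nonempty, and it must contain a controllable pair. I will treat them in turn.

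\textbf{Nonemptiness.} Theorem~\ref{thm:structured_parameterization2} states that $\Sigma_{\mathcal D}\cap\Sigma_{\rm stru}\neq\emptyset$ if and only if $\tilde\gamma\in\operatorname{im}(\tilde G)$. Hence this range condition is necessary, and it is needed before the parameterization \eqref{para_data_prior} can be used at all.

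\textbf{Controllable member.} Assume $\tilde\gamma\in\operatorname{im}(\tilde G)$. By \eqref{para_data_prior}, the jointly consistent set is exactly $\{\Gamma_0+\Omega(Y)Z_p^\perp : Y\in\mathbb R^{n\times(n+m)}\}$. The map $Y\mapsto\Omega(Y)$ is linear, so I expand $Y=\sum_{i,j}\mu_{ij}E_{ij}$ and get
\[
\Omega(Y)Z_p^\perp=\sum_{i,j}\mu_{ij}\,\Omega(E_{ij})Z_p^\perp .
\]
As $Y$ ranges over $\mathbb R^{n\times(n+m)}$, the coefficients $\mu_{ij}$ range over all real values. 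The image of the parameterization therefore coincides with the set of all realizations of the matrix net \eqref{matrix_net}.

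It follows that some member of $\Sigma_{\mathcal D}\cap\Sigma_{\rm stru}$ is controllable if and only if some choice of the $\mu_{ij}$ in \eqref{matrix_net} gives a controllable $(A,B)$. By the definition of structural controllability recalled before Lemma~\ref{lemma:sc}, this is precisely structural controllability of the matrix net. The net fits the form \eqref{eq:system_equations}: write each scalar term as $B_iP_iC_i$ and $B_iP_iD_i$ with a scalar parameter $P_i=\mu_{ij}$. Combining the two steps gives both directions of the equivalence.

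\textbf{Main obstacle.} The only delicate point is to check that the parameterization is onto the jointly consistent set, not merely contained in it, so that an existence statement about $Y$ matches an existence statement about $(A,B)$. Theorem~\ref{thm:structured_parameterization2} provides this as a set equality. I will also note that a term with $\Omega(E_{ij})Z_p^\perp=0$ contributes nothing and may be dropped, so the nontriviality assumption $[C_i,D_i]\neq0$ in \eqref{eq:system_equations} causes no difficulty.
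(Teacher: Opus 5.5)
Your argument is correct and matches the paper's own, which the paper simply calls apparent from Theorem~\ref{thm:structured_parameterization2}. Nonemptiness is exactly $\tilde\gamma\in\operatorname{im}(\tilde G)$, and the set equality \eqref{para_data_prior} plus linearity of $\Omega$ makes the jointly consistent set coincide with the realizations of the matrix net \eqref{matrix_net}, so a controllable member exists if and only if some real $\mu_{ij}$ give a controllable pair.

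One correction: drop the claim that the net fits \eqref{eq:system_equations} with scalar $P_i=\mu_{ij}$. With a scalar $P_i$, the term $B_iP_i[C_i,\ D_i]$ has rank at most one, whereas $\Omega(E_{ij})Z_p^\perp$ generally does not; the paper stresses exactly this right afterwards. The claim is also unnecessary, because structural controllability of a matrix net just means that such a real parameter choice exists, and that is all your argument uses.
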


					Generally, checking the structural controllability of a matrix net incurs exponentially increasing complexity in the number of free parameters $\mu_{ij}$ if their coefficient matrices $\Omega({E_{ij}})Z_p^\perp$ are not of rank-one \cite{anderson1982structural}. 
					Similarly, data $(U,X)$ are weakly informative for stability on the prior set $\Sigma_{\mathrm{\rm stru}}$ if and only if $\tilde \gamma \in \operatorname{im}(\tilde G)$ and there exists a stable system realization within \eqref{para_data_prior}. However, the following result reveals that certifying weak data informativity for stability under general affine priors is computationally intractable.

                    \begin{proposition}
                        Verifying whether consistent data $(U,X)$ are weakly informative for stability on a prior set $\Sigma_{\rm stru}$ is NP-hard.
                    \end{proposition}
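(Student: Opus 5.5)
The plan is a polynomial-time reduction from a known NP-hard problem about affine matrix families. The natural source is the result of Blondel and Tsitsiklis that deciding whether an \emph{interval} or \emph{affine} matrix family $\{A_0+\sum_{k=1}^{l}\mu_k A_k\}$ contains a Schur-stable member is NP-hard. A convenient special case is the \emph{static output feedback stabilization} problem: given $(A,B,C)$, decide whether some $K$ makes $A+BKC$ Schur stable. This is an affine family in the entries of $K$, and deciding existence of a stabilizing $K$ is NP-hard, even in a form with bounded or sign-restricted entries. If the unbounded static output feedback version is not accepted as NP-hard, I would use the decentralized or structured-feedback version instead. There, certain entries of $K$ are forced to zero, which is exactly an affine prior, and Blondel--Tsitsiklis show that existence of a stabilizing structured gain is NP-hard.

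Given such an instance, i.e.\ an affine family $\mathcal A=\{A_0+\sum_k\mu_kA_k\}$, I will build data $(U,X)$ and priors $(M_i,\gamma)$ such that $\{A:(A,B)\in\Sigma_{\mathcal D}\cap\Sigma_{\rm stru}\}=\mathcal A$. The simplest choice is to take uninformative data: $T=1$ with $x(0)=0$, $u(0)=0$, $x(1)=0$. Then $Z_p=0$, so $\Sigma_{\mathcal D}=\Sigma$, and the data are trivially consistent. Alternatively I can take $m$ arbitrary and choose data that pin down $B$, since stability only concerns $A$.

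All information is then placed in the prior. Any affine subspace $\mathcal A\subseteq\mathbb R^{n\times n}$ is the solution set of finitely many equations $\langle M_i,[A,B]\rangle=\gamma_i$. These are obtained by taking $M_i$ to span the orthogonal complement of $\mathrm{span}\{A_k\}$ (padded with zero $B$-columns) and setting $\gamma_i=\langle M_i,A_0\rangle$. The $M_i$ are computable by Gaussian elimination in polynomial time with rational bit-size polynomial in the input. Weak informativity for stability on $\Sigma_{\rm stru}$ then holds if and only if $\mathcal A$ contains a Schur matrix. This also fits Theorem~\ref{thm:structured_parameterization2}: $Z_p^\perp=I$, so the parameterization \eqref{para_data_prior} reproduces exactly $\mathcal A$, and the condition $\tilde\gamma\in{\rm im}(\tilde G)$ holds automatically.

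The main obstacle is the first step: pinning down a source problem whose NP-hardness is established for \emph{unbounded} affine parameters and \emph{strict} Schur stability. Our prior is a linear equality constraint and cannot impose interval bounds. My first attempt is the structured (decentralized) static-feedback stabilization result. If only the interval version is available, I would try to encode the bounds $|\mu_k|\le1$ through stability itself, by adding diagonal blocks whose spectral radius is below one only when $|\mu_k|<1$; for example, a scalar block $\mu_k$ appended as an extra state. This keeps the family affine while forcing the bounds, so that stability of the enlarged block-diagonal matrix is equivalent to the original member being stable with parameters in the box. I would also check that the closed-versus-open bound distinction is harmless, by scaling the original problem slightly.
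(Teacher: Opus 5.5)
Your proposal is correct and follows the paper's route. The paper likewise takes uninformative data ($Z_p=0$, so $\Sigma_{\mathcal D}=\Sigma$), puts all the information into an affine prior, and reduces from static output feedback stabilization of $A_0+B_0KC_0$, whose NP-hardness it takes from the recently cited results. Your explicit polynomial-time construction of the $M_i,\gamma_i$ via the orthogonal complement fills in a step the paper leaves implicit. Your fallback is also sound: it uses Blondel--Tsitsiklis interval families (in their Schur form) and enforces the box through appended scalar diagonal states. Closed and open boxes are equivalent simply because Schur stability is an open condition, so no rescaling is needed. This fallback would remove the dependence on the very recent static-output-feedback hardness results.
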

                    \begin{proof}
                                          When no data are available (e.g., $U = 0, X = 0$, leading to $\Sigma_{{\mathcal D}} = \Sigma$), this problem contains the Static Output Feedback stabilization (SOF) problem\footnote{The SOF problem seeks to determine if, given matrices $A_0 \in \mathbb{R}^{n \times n}$, $B_0 \in \mathbb{R}^{n \times m}$, and $C_0 \in \mathbb{R}^{p \times n}$, there exists a feedback gain $K \in \mathbb{R}^{m \times p}$ such that the closed-loop matrix $A_0 + B_0 K C_0$ is Schur stable.} as a special case, which has recently been shown to be NP-hard \cite{lofberg2026static,ahmadi2026complexity}. 
                    \end{proof}

					These computational barriers motivate us to focus on the structured single-affine prior constraint introduced in Assumption \ref{assump:structured_block}. 

					\subsection{Single-affine prior knowledge} \label{sec-block-zero}
							We now investigate weak data informativity under single-affine constraints \eqref{eq:Block_structural_constraint}. The following proposition establishes the parameterization of data-consistent systems in this case.

					\begin{proposition}
						\label{thm:block_structured_param}
						Consider consistent data $(U,X)$ and a single-affine constraint with $S_{\ell}$ and $S_r$. Define
						\[
						\Sigma_0:=S_\ell X_f Z_p^\dagger S_r-Q,
						\Sigma_1:=S_\ell,
						\Sigma_2:=\bigl(I-Z_pZ_p^\dagger\bigr)S_r.
						\]
						Then the set \(\Sigma_{{\mathcal D}}\cap\hat{\Sigma}_{{\rm stru}}\) is nonempty if and only if
						\begin{equation}
							\label{eq:block_structured_solvability}
							\Sigma_1\Sigma_1^\dagger \Sigma_0 \Sigma_2^\dagger \Sigma_2=\Sigma_0.
						\end{equation}
						In this case, the set of all systems \((A,B)\) in $\Sigma_{{\mathcal D}}\cap\hat{\Sigma}_{\rm stru}$ is given by
				\begin{equation}\label{parA_uingle_a}
    				\left\{ (A,B) \;\middle|\;
					\begin{aligned}
						&[A,B] = X_fZ_p^\dagger
							+\Sigma_1^\dagger(-\Sigma_0)\Sigma_2^\dagger Z_p^\perp
							+\Sigma_{1\perp} U Z_p^\perp
							\\ &+V\Sigma_2^\perp Z_p^\perp,\quad 
						U,V \in \mathbb{R}^{n\times(n+m)}
					\end{aligned}
					\right\}.
\end{equation}
					\end{proposition}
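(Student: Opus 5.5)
The plan is to substitute the data-based parameterization \eqref{eq:param_general} into the single-affine constraint \eqref{eq:Block_structural_constraint}. This reduces the problem to a two-sided linear matrix equation in the free parameter $Y$, which we then solve with the classical Penrose theorem for $AXB=C$.

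\textbf{Reduction.} Writing $[A,B]=X_fZ_p^\dagger+YZ_p^\perp$, the constraint $S_\ell[A,B]S_r=Q$ becomes
\[
S_\ell\,Y\,Z_p^\perp S_r \;=\; Q-S_\ell X_fZ_p^\dagger S_r,
\qquad\text{i.e.,}\qquad
\Sigma_1 Y \Sigma_2=-\Sigma_0 .
\]
Here $\Sigma_2=(I-Z_pZ_p^\dagger)S_r=Z_p^\perp S_r$. Hence $\Sigma_{\mathcal D}\cap\hat\Sigma_{\rm stru}\neq\emptyset$ if and only if this equation has a solution $Y$. Moreover, the intersection is exactly the image of its solution set under $Y\mapsto X_fZ_p^\dagger+YZ_p^\perp$.

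\textbf{Solvability.} By Penrose's theorem, $\Sigma_1Y\Sigma_2=C$ is solvable if and only if $\Sigma_1\Sigma_1^\dagger C\Sigma_2^\dagger\Sigma_2=C$. With $C=-\Sigma_0$, the sign cancels, giving \eqref{eq:block_structured_solvability}.

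For a self-contained argument, the two directions go as follows.
\begin{itemize}
\item Necessity: if $C=\Sigma_1Y\Sigma_2$, then using $\Sigma_1\Sigma_1^\dagger\Sigma_1=\Sigma_1$ and $\Sigma_2\Sigma_2^\dagger\Sigma_2=\Sigma_2$ gives $\Sigma_1\Sigma_1^\dagger C\Sigma_2^\dagger\Sigma_2=C$.
\item Sufficiency: $Y_0=\Sigma_1^\dagger C\Sigma_2^\dagger$ is a solution.
\end{itemize}

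\textbf{General solution.} Penrose's theorem also gives every solution in the form
\[
Y=\Sigma_1^\dagger C\Sigma_2^\dagger+U-\Sigma_1^\dagger\Sigma_1 U\Sigma_2\Sigma_2^\dagger ,\qquad U\ \text{arbitrary}.
\]
We rewrite the homogeneous part as $\Sigma_{1\perp}U+V\Sigma_2^\perp$ using
\[
U-\Sigma_1^\dagger\Sigma_1U\Sigma_2\Sigma_2^\dagger=(I-\Sigma_1^\dagger\Sigma_1)U+\Sigma_1^\dagger\Sigma_1U(I-\Sigma_2\Sigma_2^\dagger).
\]
Setting $V:=\Sigma_1^\dagger\Sigma_1U$ shows that every homogeneous solution has the form $\Sigma_{1\perp}U+V\Sigma_2^\perp$.

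Conversely, any $\Sigma_{1\perp}U+V\Sigma_2^\perp$ with $U,V$ independent solves the homogeneous equation. This holds because $\Sigma_1\Sigma_{1\perp}=0$ and $\Sigma_2^\perp\Sigma_2=0$. So the two descriptions of the homogeneous solution set coincide.

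Dimension bookkeeping requires some care. Since $Y\in\mathbb R^{n\times(n+m)}$ and $\Sigma_2^\perp=I-\Sigma_2\Sigma_2^\dagger$ is $(n+m)\times(n+m)$, a fully rigorous version should take $U\in\mathbb R^{n\times(n+m)}$ and $V\in\mathbb R^{n\times(n+m)}$, with $\Sigma_{1\perp}=I_n-\Sigma_1^\dagger\Sigma_1$ acting on the left.

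Substituting back into $X_fZ_p^\dagger+YZ_p^\perp$ yields exactly \eqref{parA_uingle_a}.

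\textbf{Main obstacle.} The main care point is the equality of the two homogeneous solution sets, specifically the inclusion just verified. The rest is direct substitution, so I expect no real difficulty beyond this.
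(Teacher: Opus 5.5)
Your proposal is correct and follows essentially the same route as the paper: substitute the data parameterization to get $\Sigma_1 Y\Sigma_2=-\Sigma_0$, then apply the standard (Penrose) solvability criterion and general-solution formula. Your explicit check that Penrose's one-parameter homogeneous solution set equals the two-parameter form $\Sigma_{1\perp}U+V\Sigma_2^\perp$ fills in a step the paper only asserts.
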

					
					\begin{proof}
	{Substitution of \eqref{eq:param_general} into $S_\ell[A\,B]S_r=Q$ gives
						\begin{equation}
							\label{eq:block_matrix_eq}
							\Sigma_1 Y \Sigma_2=-\Sigma_0.
						\end{equation}
						The standard solvability criterion for this two-sided equation is precisely \eqref{eq:block_structured_solvability}; when it holds, all solutions are
					\begin{equation} \label{expresssion_Y}
						Y=\Sigma_1^\dagger(-\Sigma_0)\Sigma_2^\dagger+\Sigma_{1\perp} U+V\Sigma_2^\perp,
						\end{equation}
						where $U,V$ are arbitrary. Substitution in \eqref{eq:param_general} gives \eqref{parA_uingle_a}.}
                    	\end{proof}
					

					
					By defining
					\(
					W_0:=X_f Z_p^\dagger+\Sigma_1^\dagger(-\Sigma_0)\Sigma_2^\dagger Z_p^\perp,
					S_1:=U, S_2:=V, G_1:=\Sigma_{1\perp}, G_2:=I, H_1:=Z_p^\perp,H_2:=\Sigma_2^\perp Z_p^\perp,
					\)
					any member $(A,B)$ in $\Sigma_{{\mathcal D}}\cap\hat{\Sigma}_{\rm stru}$ can be parameterized by
					\begin{equation}
						\label{eq:block_structured_rank_one_form}
						[A\,\, B]
						=
						W_0+\sum_{i=1}^2 G_i S_i H_i,
					\end{equation}where $S_i$ are arbitrary matrices of compatible dimensions.  Assessing whether data are weakly informative for controllability on $\hat \Sigma_{\mathrm{\rm stru}}$ is thus equivalent to testing structural controllability for \eqref{eq:block_structured_rank_one_form}, where Lemma \ref{lemma:sc} provides a criterion that can be verified in polynomial time \cite{zhang2019structural}.  Unlike the general matrix net \eqref{matrix_net}, each free element in $S_i$ within \eqref{eq:block_structured_rank_one_form} enters with rank-one coefficient matrices. This differentiates the computational complexity of the single-affine case from that of the general case.   

                    Furthermore, one can suitably choose $U,X$ such that $\Sigma_2^\perp Z_p^\perp=0$ while $\Sigma_{1\perp}$ and $Z_p^\perp$ in \eqref{parA_uingle_a} are general enough. In this way, one can conclude that
					the problem of verifying whether the data $(U,X)$ are weakly informative for stability on the prior set $\hat \Sigma_{\rm stru}$ is at least as hard as the SOF problem. 	

					\section{Strong data informativity with priors}
					In this section, we consider strong data informativity with prior structural knowledge for controllability, stabilizability, and stability.

					\begin{proposition}\label{hard}
						The problem of verifying whether data $(U,X)$ are informative for controllability on the prior set $\Sigma_{\rm stru}$ is NP-hard. 
					\end{proposition}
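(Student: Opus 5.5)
We reduce from a known NP-hard problem, mirroring the NP-hardness proof for weak informativity for stability. As there, we take trivial data ($U=0$, $X=0$, or any data with $Z_p^\perp=I$), so that $\Sigma_{\mathcal D}=\Sigma$. Then strong informativity for controllability on $\Sigma_{\rm stru}$ means that every pair $(A,B)$ satisfying the affine constraints \eqref{eq:structural_constraint} is controllable. Equivalently, the affine family $[A,B]=N_0+\sum_{i=1}^{l}\mu_iN_i$ has no uncontrollable member. So it suffices to show that deciding whether an affinely parameterized matrix pair (a matrix net) contains an uncontrollable realization is NP-hard, since the complement of an NP-hard decision problem is hard under polynomial reductions. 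Every affine family is realizable via Assumption~\ref{general_prior}: the constraints $\langle M_i,[A,B]\rangle=\gamma_i$ can cut out any prescribed affine subspace, with $M_i$ spanning the orthogonal complement of ${\rm span}\{N_i\}$.

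The core step is a reduction from a classical NP-hard problem on affine matrix families. My first choice is \emph{singularity of an affine matrix family}: given matrices $P_0,\dots,P_l$, decide whether some real $\mu$ makes $P_0+\sum_i\mu_iP_i$ singular. This is NP-hard; it is closely related to robust nonsingularity of interval matrices (Poljak--Rohn), and interval matrices are affine families with box constraints. Box constraints are not affine, so I would instead use a version known to be hard over unconstrained real parameters. One route encodes the feasibility of a system of quadratic equations, e.g.\ via a determinantal representation. Another route reduces directly from deciding whether a real quadratic/polynomial system has a common zero. Then we embed: choose $m=n$ and $A=0$ (fixed by the affine constraints), with $B=P_0+\sum_i\mu_iP_i$. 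The pair $(0,B)$ is controllable iff $\operatorname{rank}B=n$. Hence strong informativity holds iff the family contains no singular matrix, completing the reduction.

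The main obstacle is hardness of singularity over \emph{unbounded} real parameters. Without box constraints, a nonconstant affine family often trivially contains singular matrices, because $\det$ is a nonconstant polynomial that typically changes sign. To avoid this, I would exploit the richer condition of uncontrollability rather than just singularity of $B$. Take $m=1$, $B=e_1$ fixed, and let $A$ be affine in $\mu$. Uncontrollability is the existence of $\lambda$ and a left eigenvector orthogonal to $B$. This can encode existence of a real solution of a system of quadratic equations $q_k(\mu)=0$, which is NP-hard (it contains 3-SAT via $\mu_i^2=\mu_i$ plus clause equations).

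Alternatively, the simplest rigorous path would be to borrow the reduction used in \cite{lofberg2026static,ahmadi2026complexity}, or the known NP-hardness of structural controllability for general matrix nets noted after \eqref{matrix_net} (cf.~\cite{anderson1982structural}). I would first check whether the literature supplies NP-hardness of ``all members of an affine family are controllable.'' If so, the proof reduces to the embedding with trivial data plus the observation that any affine family is expressible as $\Sigma_{\rm stru}$.
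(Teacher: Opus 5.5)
Your reduction skeleton matches the paper's proof. With $U=0$, $X=0$ one has $\Sigma_{\mathcal D}=\Sigma$, so strong informativity on $\Sigma_{\rm stru}$ becomes the question whether \emph{every} member of an affinely parameterized family is controllable. Your observation that any affine family can be written as $\Sigma_{\rm stru}$ is also correct.

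The gap is that the step carrying all the content, hardness of this universal question, is never established. The paper closes it by citing \cite[Theo~1]{zhang2023real}, which shows that deciding whether an affine family contains an \emph{uncontrollable} pair (feasibility of the real structured controllability radius) is NP-hard. Strong informativity is exactly the complement of that problem. The sources you point to do not supply this result:
\begin{itemize}
\item \cite{lofberg2026static,ahmadi2026complexity} concern the existence of a stabilizing static output feedback, which is a different property and uses the existential quantifier.
\item \cite{anderson1982structural} concerns structural controllability of matrix nets, i.e., the existence of a controllable member, which is the weak question. The paper only remarks that known tests for it scale exponentially, not that it is NP-hard. Since controllability is generic whenever it is possible, that existential question is answered with high probability by testing one random parameter value, so it is an unlikely source of hardness.
\item Your Route~2 ($B=e_1$, $A$ affine) is only asserted; no encoding of the quadratic system is given.
\end{itemize}

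Your Route~1 is in fact viable, and you abandoned it on a mistaken premise. Over unconstrained real parameters, singularity of an affine family is not trivial, because the determinant can be a polynomial with no real zeros, e.g. $\det\begin{bmatrix}\mu&-1\\1&\mu\end{bmatrix}=\mu^2+1$. Worst-case hardness only needs such instances. To complete the route:
\begin{itemize}
\item Encode a 3-SAT instance as $p(\mu)=\sum_i(\mu_i^2-\mu_i)^2+\sum_j c_j(\mu)^2$, where $c_j$ is the product of the literal-falsifying factors of clause $j$ (e.g.\ $(1-\mu_1)\mu_2(1-\mu_3)$ for $x_1\vee\neg x_2\vee x_3$). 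Then $p$ has a real zero iff the instance is satisfiable.
\item Since $p$ has a polynomial-size formula, Valiant's universality of the determinant gives a polynomial-size matrix $P(\mu)$, with entries that are constants or single parameters, such that $\det P(\mu)=p(\mu)$.
\item Take $A=0$ and $B=P(\mu)$. This is an affine family, hence a valid $\Sigma_{\rm stru}$, and all jointly consistent pairs are controllable iff the instance is unsatisfiable.
\end{itemize}
This gives a self-contained proof in place of the citation. Finally, state the complexity conclusion precisely rather than via your sentence about complements. The reduction shows co-NP-hardness under many-one reductions, i.e., NP-hardness in the Turing sense, which is how the proposition must be read.
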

					
					\begin{proof}
						Consider the case $U=0,X=0$. Then, $\Sigma_{{\mathcal D}}=\Sigma$, and the informativity problem becomes determining whether all systems in the affinely parameterized set $\Sigma_{\rm stru}$ are controllable. This problem has been shown to be NP-hard in \cite[Theo 1]{zhang2023real}.
					\end{proof}
					
					In the case $U=0,X=0$, verifying data informativity for stability on the set $\Sigma_{\rm stru}$ reduces to the problem of determining whether all systems in the affinely parameterized set $\Sigma_{\rm stru}$ are stable. There is no known deterministic polynomial-time algorithm for this problem \cite{nemirovskii1993several}. {Below, we focus on single-affine prior knowledge. For controllability, we give an exact characterization and sufficient rank conditions, which become necessary and sufficient when $S_\ell=I_n$.}

\begingroup
\begin{theorem}\label{theo:informativity-controllability}
Suppose $\Sigma_{\mathcal D}\cap\hat\Sigma_{\rm stru}\ne\emptyset$, and let $\Sigma_0,\Sigma_1,\Sigma_2$ be as in Proposition~\ref{thm:block_structured_param}. Define
$E_n=[I_n\ \ 0_{n\times m}],\quad M_0=S_\ell^\dagger(-\Sigma_0)\Sigma_2^\dagger$,
$R=\Sigma_2^\perp Z_p^\perp,\quad W=I-R^\dagger R$,
and $\mathcal K_\lambda=\{v\in\mathbb C^n\mid v^*(\lambda X_p-X_f)=0\}$, where $*$ denotes conjugate transpose. For $v\in\mathcal K_\lambda$, set
$\mathcal L(v)=\begin{bmatrix}
(Z_p^\perp)^T\otimes(v^*S_{\ell\perp})&R^T\otimes v^*
\end{bmatrix},\quad
b(\lambda,v)=\bigl[v^*(\lambda E_n-M_0)Z_p^\perp\bigr]^T$,
where $\otimes$ denotes the Kronecker product, and define the real matrices
\[
\mathcal L_{\mathbb R}(v)=\begin{bmatrix}\operatorname{Re}\mathcal L(v)\\\operatorname{Im}\mathcal L(v)\end{bmatrix},\qquad
b_{\mathbb R}(\lambda,v)=\begin{bmatrix}\operatorname{Re}b(\lambda,v)\\\operatorname{Im}b(\lambda,v)\end{bmatrix},
\]where ${\rm Re}(\cdot)$ and ${\rm Im}(\cdot)$ denote the real and imaginary parts, respectively. 
The data $(U,X)$ are informative for controllability on $\hat\Sigma_{\rm stru}$ if and only if
\begin{equation}\label{eq:exact-strong-controllability}
\operatorname{rank}\begin{bmatrix}\mathcal L_{\mathbb R}(v)&b_{\mathbb R}(\lambda,v)\end{bmatrix}
>\operatorname{rank}\mathcal L_{\mathbb R}(v)
\end{equation}
for every $\lambda\in\mathbb C$ and $v\in\mathcal K_\lambda\setminus\{0\}$. In particular, the following two conditions, imposed for every $\lambda\in\mathbb C$, are sufficient:
\begin{enumerate}
\item[1.] $\operatorname{rank}\begin{bmatrix}\lambda X_p-X_f&S_{\ell\perp}\end{bmatrix}=\operatorname{rank}(\lambda X_p-X_f)$;
\item[2.] $\operatorname{rank}\begin{bmatrix}\lambda X_p-X_f&(\lambda E_n-M_0)Z_p^\perp W\end{bmatrix}=n$.
\end{enumerate}
When $S_\ell=I_n$, Condition~2 alone is necessary and sufficient for data $(U,X)$ to be informative for controllability on $\hat\Sigma_{\rm stru}$.
\end{theorem}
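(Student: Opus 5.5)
We will use the PBH test together with the parameterization of Proposition~\ref{thm:block_structured_param}. Every jointly consistent system has the form $[A,B]=X_fZ_p^\dagger+M_0Z_p^\perp+S_{\ell\perp}UZ_p^\perp+VR$ with $U,V$ real, because $\Sigma_1^\dagger=S_\ell^\dagger$ and $\Sigma_{1\perp}=S_{\ell\perp}$. By PBH, the data fail to be informative if and only if there are such $(U,V)$, a $\lambda\in\mathbb C$ and a $v\neq0$ with $v^*[\lambda I-A,\,-B]=0$.

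\emph{Step 1: reduce to $\mathcal K_\lambda$.} Write $v^*[\lambda I-A,-B]=v^*(\lambda E_n-[A,B])$ and multiply on the right by $Z_p$. Since $[A,B]Z_p=X_f$ and $E_nZ_p=X_p$, the left eigenvector condition forces $v^*(\lambda X_p-X_f)=0$, i.e.\ $v\in\mathcal K_\lambda$. Conversely, decompose $I=Z_pZ_p^\dagger+Z_p^\perp$. Then $v^*(\lambda E_n-[A,B])=0$ is equivalent to the pair of conditions $v\in\mathcal K_\lambda$ and $v^*(\lambda E_n-[A,B])Z_p^\perp=0$. For the second condition we use $Z_pZ_p^\dagger Z_p^\perp=0$, $Z_p^\perp Z_p^\perp=Z_p^\perp$, and $RZ_p^\perp=R$ (as $R=\Sigma_2^\perp Z_p^\perp$ and $Z_p^\perp$ is idempotent). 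This reduces it to
\[
v^*(\lambda E_n-M_0)Z_p^\perp=v^*S_{\ell\perp}UZ_p^\perp+v^*VR .
\]

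\emph{Step 2: vectorize.} Transpose the displayed identity and apply $\operatorname{vec}(XYZ)=(Z^T\otimes X)\operatorname{vec}(Y)$. This gives $b(\lambda,v)=\mathcal L(v)\,[\operatorname{vec}U;\operatorname{vec}V]$, where the unknowns are real and the coefficients are complex. Splitting into real and imaginary parts, a real solution exists if and only if $b_{\mathbb R}\in\operatorname{im}\mathcal L_{\mathbb R}$, i.e.\ if and only if the two ranks are equal. Negating this yields \eqref{eq:exact-strong-controllability}.

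\emph{Step 3: sufficient conditions.} Suppose $v\in\mathcal K_\lambda\setminus\{0\}$ and a real solution $(U,V)$ exists. Condition~1 says that $\operatorname{im}S_{\ell\perp}\subseteq\operatorname{im}(\lambda X_p-X_f)$, hence $v^*S_{\ell\perp}=0$. The identity then reads $v^*(\lambda E_n-M_0)Z_p^\perp=v^*VR$. Multiplying on the right by $W=I-R^\dagger R$ kills $RW$, so $v^*(\lambda E_n-M_0)Z_p^\perp W=0$. Combined with $v^*(\lambda X_p-X_f)=0$, this contradicts Condition~2.

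\emph{Step 4: the case $S_\ell=I_n$.} Here $S_{\ell\perp}=0$, so Condition~1 is automatic. Necessity of Condition~2 is the main step. Suppose some $v\ne0$ satisfies $v^*(\lambda X_p-X_f)=0$ and $v^*(\lambda E_n-M_0)Z_p^\perp W=0$. Put $c^*:=v^*(\lambda E_n-M_0)Z_p^\perp$; then $c^*=c^*R^\dagger R$, so $c^*$ lies in the row space of $R$. We must produce a \emph{real} $V$ with $v^*VR=c^*$. The difficulty is that $v$ and $c$ may be complex, so a naive complex choice of $V$ is not admissible. The plan is to take $V=\operatorname{Re}(w)\,a^T+\operatorname{Im}(w)\,\tilde a^T$, with $w$ chosen so that $v^*w=1$ and $v^*\bar w=0$, and with $a^T,\tilde a^T$ real rows built from $c^*R^\dagger$.

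This choice of $w$ is possible when $v$ and $\bar v$ are linearly independent. Otherwise $v$ is a multiple of a real vector, $\lambda$ can be handled as a real eigenvalue, and $c$ can be rescaled to be real; in that case $V=w\,c^*R^\dagger$ with real $w$ satisfying $v^*w=1$ works. In the genuinely complex case we should instead exploit the conjugate pair $(\bar\lambda,\bar v)$, which satisfies the same equations, and check that the real-part equations remain consistent. This complex-to-real lifting is the most delicate point. The safest route is to verify it through the rank criterion \eqref{eq:exact-strong-controllability}: with $S_{\ell\perp}=0$ we have $\mathcal L(v)=R^T\otimes v^*$, and we would check directly that $b_{\mathbb R}\in\operatorname{im}\mathcal L_{\mathbb R}$ precisely when $b\in\operatorname{im}(R^T)$, i.e.\ precisely when $bW^T=0$.
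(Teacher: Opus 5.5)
Steps~1--3 are correct and match the paper's proof: the PBH reduction to $v\in\mathcal K_\lambda$ together with the $Z_p^\perp$-equation, vectorization with a real/imaginary split to get the exact rank criterion, and multiplication by $W$ for sufficiency of Conditions~1--2. In Step~4 two cases are also fine. When $v_{\rm R},v_{\rm I}$ are independent, your $V=\operatorname{Re}(w)a^T+\operatorname{Im}(w)\tilde a^T$ works with $g:=c^*R^\dagger$, $a^T=2\operatorname{Re}g$, $\tilde a^T=-2\operatorname{Im}g$; this is equivalent to the paper's $V=\mathcal V^\dagger G$. When $v$ is real up to scaling and $\lambda$ is real, your construction also works.

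The gap is the remaining case: $v=\alpha w$ with $w\in\mathbb R^n$ and $\lambda\notin\mathbb R$. You only assert that ``$\lambda$ can be handled as a real eigenvalue.'' The route you call safest rests on a false claim. For such $v$, real $V$ gives $v^*VR=\bar\alpha\, w^TVR$, which is $\bar\alpha$ times a \emph{real} row in the row space of $R$. So $v^*VR=c^*$ is solvable at $(\lambda,v)$ only if $w^T(\lambda E_n-M_0)Z_p^\perp$ is real, i.e.\ $w^TE_nZ_p^\perp=0$. That does not follow from $c^*W=0$, so $b_{\mathbb R}\in\operatorname{im}\mathcal L_{\mathbb R}$ is not equivalent to $b\in\operatorname{im}(R^T)$.

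Here is a concrete counterexample. Take $n=m=T=1$, $x(0)=0$, $u(0)=1$, $x(1)=0$, $S_\ell=1$, $S_r=0$, $Q=0$. Then $M_0=0$, $R=Z_p^\perp=\operatorname{diag}(1,0)$ and $W=\operatorname{diag}(0,1)$. For nonreal $\lambda$ and $v=1$, Condition~2 fails and $b=[\lambda,\ 0]^T\in\operatorname{im}(R^T)$. Yet no real $V$ satisfies $VR=[\lambda,\ 0]$. So at the violating pair itself there may be no real uncontrollable system, and neither the conjugate pair $(\bar\lambda,\bar v)$ nor a real-part consistency check at $(\lambda,v)$ can fix this.

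The missing idea is to \emph{move the eigenvalue to the real axis}. Since $w$ is real and $\operatorname{Im}\lambda\neq0$, split $w^T(\lambda X_p-X_f)=0$ and $w^T(\lambda E_n-M_0)Z_p^\perp W=0$ into real and imaginary parts. This gives $w^TX_p=w^TX_f=0$ and $w^TE_nZ_p^\perp W=w^TM_0Z_p^\perp W=0$. Hence $(\mu,w)$ violates Condition~2 for every real $\mu$. With $z_\mu:=w^T(\mu E_n-M_0)Z_p^\perp$, the real matrix $V=w\,z_\mu R^\dagger/(w^Tw)$ then yields a jointly consistent system that fails the PBH test at $\mu$. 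This is how the paper completes the necessity argument.
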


\begin{proof}
    The proof is given in the Appendix. 
\end{proof}

Condition~\eqref{eq:exact-strong-controllability} retains the restriction to real system matrices. Although each fixed $(\lambda,v)$ requires only a real linear consistency test, the condition quantifies over all complex eigenvalues $\lambda$ and candidate left vectors $v$ in ${\mathcal K}_{\lambda}$ and does not itself provide an efficient finite verification procedure. Conditions~1--2 give a sufficient alternative. 
\endgroup

{When $S_{\ell}=0, Q=0$, Conditions 1 and 2 collapse to ${\rm rank}(\lambda X_p-X_f)=n$, consistent with the strong data informativity condition without priors. When $S_\ell=I_n$, Condition~2 reduces to
						$
				\operatorname{rank}[
						\lambda X_p-X_f, \lambda E_nS_r-Q
					]=n,\, \forall\lambda\in\mathbb C.
						$
						Compared with the conventional condition $\operatorname{rank}(\lambda X_p-X_f)=n$, the additional block contributed by the prior may supply the missing ranks. Hence, strong informativity can be satisfied from fewer data, possibly $T<n$.}

\begin{example}
Consider the discrete-time linear system obtained from the
inverted-pendulum model given in \cite[Example 3--6]{Modern_Control_Ogata}
with sampling period $T_s=0.1$, whose system matrices $(A_u,B_u)$ in \eqref{eq:sys} are given by
\[
A_u=
\begin{bmatrix}
1 & T_s & 0 & 0\\
T_s\dfrac{(M+m)g}{Ml} & 1 & 0 & 0\\
0 & 0 & 1 & T_s\\
-T_s\dfrac{mg}{M} & 0 & 0 & 1
\end{bmatrix},
\qquad
B_u=
\begin{bmatrix}
0\\
-\dfrac{T_s}{Ml}\\
0\\
\dfrac{T_s}{M}
\end{bmatrix}.
\]
Here, $m$ and $M$ are the masses of the pendulum bob and the cart,
$l$ is the length of the massless rod, and $g$ is the acceleration
of gravity. Suppose that the fourth column of $A_u$ is available as
prior structural knowledge, i.e.,
$A_u(:,4)=[0~0~T_s~1]^\top$.
This prior knowledge is determined by the kinematic structure of the
inverted-pendulum model and does not require the exact values of the
physical parameters $M$, $m$, $l$, and $g$.

The above prior information can be represented in the single-affine
form $S_\ell[A~B]S_r=Q$, where
\[
S_\ell=I_4,
S_r=
\begin{bmatrix}
0&0&0&1&0
\end{bmatrix}^{\top},
Q=
\begin{bmatrix}
0&0&T_s&1
\end{bmatrix}^{\top}.
\]

In the simulation, we choose $M=1$, $m=0.2$, $g=9.8$, $l=0.5$. We collect only $T=3$ input-state samples, yielding
\[
U_p=
\begin{bmatrix}
1&0&-1
\end{bmatrix},
X_p=
\begin{bmatrix}
0 & -0.1 & -0.22\\
-1 & -1.2 & -1.4352\\
0 & 0.05 & 0.11\\
0.5 & 0.6 & 0.6196
\end{bmatrix},
\]
and
\[
X_f=
\begin{bmatrix}
-0.1 & -0.22 & -0.3635\\
-1.2 & -1.4352 & -1.7526\\
0.05 & 0.11 & 0.172\\
0.6 & 0.6196 & 0.5627
\end{bmatrix}.
\]

Under the conventional data informativity framework, the condition
$\operatorname{rank}(\lambda X_p-X_f)=4$ for all
$\lambda\in\mathbb{C}$ cannot be satisfied, since
$\operatorname{rank}(\lambda X_p-X_f)\leq T=3<4$.
Therefore, the collected data alone are not informative for
controllability.

We next show that the system matrices cannot be uniquely determined
even after incorporating the prior knowledge. Since $\operatorname{rank}([X_p^\top~U_p^\top]^\top)=3$ and
$\operatorname{rank}([Z_p~S_r])=4<5$, the data together with
the prior knowledge are still insufficient to uniquely determine
$[A~B]$. We now apply Theorem~\ref{theo:informativity-controllability}. Since $S_\ell=I_4$, we have
$S_\ell^\perp=0$, and hence Condition~1 of
Theorem~\ref{theo:informativity-controllability} is automatically
satisfied. For Condition~2, a direct calculation shows that a $4\times4$ minor of
$
\begin{bmatrix}
\lambda X_p-X_f&
(\lambda E_4-M_0)Z_p^\perp W
\end{bmatrix}
$
has a nonzero determinant for all $\lambda\in\mathbb{C}$. Hence, the
augmented matrix has full row rank, and Condition~2 of
Theorem~\ref{theo:informativity-controllability} is satisfied.\qed 
\end{example}

\begingroup
The same argument yields a stabilizability counterpart. The proof is omitted due to its obviousness. 
\begin{corollary}
Under the assumptions and notation of Theorem~\ref{theo:informativity-controllability}, the data $(U,X)$ are informative for stabilizability on $\hat\Sigma_{\rm stru}$ if and only if~\eqref{eq:exact-strong-controllability} holds for every $\lambda\in\mathbb C$ with $|\lambda|\ge1$ and every $v\in\mathcal K_\lambda\setminus\{0\}$. Conditions~1--2 of that theorem, imposed only for $|\lambda|\ge1$, are sufficient. When $S_\ell=I_n$, they are also necessary, and Condition~1 holds automatically.
\end{corollary}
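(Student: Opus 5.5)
The plan is to rerun the argument of Theorem~\ref{theo:informativity-controllability}, replacing the controllability (PBH) test by the stabilizability test. By the Hautus criterion, $(A,B)$ is stabilizable if and only if $\operatorname{rank}[\lambda I-A,\ B]=n$ for all $|\lambda|\ge1$. So the data are informative for stabilizability on $\hat\Sigma_{\rm stru}$ exactly when no pair $(\lambda,v)$ with $|\lambda|\ge1$, $v\neq0$, and no jointly consistent real $(A,B)$ satisfy $v^*[A-\lambda I,\ B]=0$.

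First I would parameterize $[A,B]$ through Proposition~\ref{thm:block_structured_param}: $[A,B]=X_fZ_p^\dagger+M_0Z_p^\perp+S_{\ell\perp}UZ_p^\perp+VR$, where I use $\Sigma_1^\dagger=S_\ell^\dagger$. Right-multiplying the left-eigenvector relation $v^*[\lambda I - A,\,-B]=0$ by $Z_p$, and using $[A,B]Z_p=X_f$, forces $v\in\mathcal K_\lambda$. Right-multiplying by $Z_p^\perp$, the remaining part of the relation becomes
\[
v^*S_{\ell\perp}UZ_p^\perp+v^*VR=v^*(\lambda E_n-M_0)Z_p^\perp .
\]
Here I use $[\lambda I,0]=\lambda E_n$, $X_fZ_p^\dagger Z_p^\perp=0$, and the fact that $R$, $Z_p^\perp$ are idempotent and commute, as in the appendix proof. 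After vectorization this is the system $\mathcal L(v)\,[\operatorname{vec}U;\operatorname{vec}V]=b(\lambda,v)$, which must be solved with real unknowns; splitting it into real and imaginary parts gives the system with $\mathcal L_{\mathbb R}(v)$ and $b_{\mathbb R}(\lambda,v)$. A consistent system (rank equality) therefore produces a non-stabilizable jointly consistent system. The rank inequality \eqref{eq:exact-strong-controllability} for all $|\lambda|\ge1$ and $v\in\mathcal K_\lambda\setminus\{0\}$ is thus necessary and sufficient. This is exactly the controllability argument with $\lambda$ restricted to the closed exterior of the unit disk, and each step is pointwise in $\lambda$.

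For sufficiency of Conditions~1--2 restricted to $|\lambda|\ge1$, I would reuse the theorem's derivation verbatim, because that derivation is pointwise in $\lambda$. Condition~1 gives $v^*S_{\ell\perp}=0$ for $v\in\mathcal K_\lambda$, so the $U$-block of $\mathcal L(v)$ vanishes. Solvability in $V$ then requires $b^T W=0$ (with $W=I-R^\dagger R$), i.e.\ $v^*(\lambda E_n-M_0)Z_p^\perp W=0$. Condition~2 excludes this for $v\neq 0$ in $\mathcal K_\lambda$.

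When $S_\ell=I_n$, $S_{\ell\perp}=0$ and Condition~1 is trivial. The remaining point is that, with only $V$ present, the real-versus-complex issue disappears: $v^*VR=c^*$ is solvable over real $V$ whenever $c^*W=0$. Indeed one can take $V=\operatorname{Re}(\cdot)$-based constructions, or simply note that complex solvability of $v^* V R = c^*$ with $c^*\in\operatorname{im}(R^T)$ admits $V=v(v^*v)^{-1}c^*R^\dagger$, which is complex. The main obstacle is exactly this realness gap. I would handle it as in the theorem's $S_\ell=I_n$ argument: for real $\lambda$ take $v$ real; for complex $\lambda$, use that $v$ and $\bar v$ are linearly independent and solve the real system for the pair of conjugate equations jointly. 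Since that argument does not depend on $|\lambda|$, it carries over unchanged to $|\lambda|\ge1$, which gives necessity.
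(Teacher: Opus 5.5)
Your exact characterization and the sufficiency of Conditions~1--2 for $|\lambda|\ge1$ are correct. They follow the paper's intended route: rerun the proof of Theorem~\ref{theo:informativity-controllability} with the Hautus stabilizability test, which works because every step there is pointwise in $\lambda$.

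The gap is in the necessity direction for $S_\ell=I_n$. Your intermediate claim that $v^*VR=c^*$ is solvable over real $V$ whenever $c^*W=0$ is false. More importantly, the fix you adopt fails in general: you assume that for nonreal $\lambda$ the violating $v$ and $\bar v$ are linearly independent. A vector $v=\alpha w$ with $w\in\mathbb R^n$ lies in $\mathcal K_\lambda$ for nonreal $\lambda$ exactly when $w^TX_p=w^TX_f=0$. This happens, e.g., when some state component is identically zero in the data, and the left kernel of $[\lambda X_p-X_f,\ (\lambda E_n-M_0)Z_p^\perp W]$ at such a $\lambda$ can be spanned by such a $w$. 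For that $(\lambda,v)$ no real $V$ works at all. The product $w^TVR$ is real, whereas the required right-hand side $w^T(\lambda E_n-M_0)Z_p^\perp$ has imaginary part $\operatorname{Im}(\lambda)\,w^TE_nZ_p^\perp$, which is generally nonzero. Equivalently, a real matrix cannot have a real left eigenvector for a nonreal eigenvalue.

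The missing idea, used in the appendix proof of the theorem, is to change the eigenvalue in this subcase. Separating real and imaginary parts gives $w^TX_p=w^TX_f=0$ and $w^TE_nZ_p^\perp W=w^TM_0Z_p^\perp W=0$, so $(\mu,w)$ violates Condition~2 for every real $\mu$. Choose a real $\mu$ with $|\mu|\ge1$ (e.g., $\mu=1$), and set $g_\mu:=w^T(\mu E_n-M_0)Z_p^\perp R^\dagger$ and $V:=wg_\mu/(w^Tw)$. This yields a real jointly consistent system with uncontrollable mode $\mu$, which is therefore not stabilizable.

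Your case split should therefore be on whether $\operatorname{Re}v$ and $\operatorname{Im}v$ are linearly independent, not on whether $\lambda$ is real. In the independent case your joint real solve works for any $\lambda$. Contrary to your remark, this subcase is exactly where the restriction $|\lambda|\ge1$ matters, since the replacement eigenvalue must lie outside the open unit disc.
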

\endgroup

Before proceeding to informativity for stability with structural prior knowledge, we first present the result without prior knowledge. 
					
					\begin{proposition} \label{stable-no-prior}
						Let $E^T_n = \begin{bmatrix} I_n \\ 0_{m \times n} \end{bmatrix} \in \mathbb{R}^{(n+m) \times n}$. The consistent data $(U,X)$ are informative for stability if and only if the following two conditions hold:
						\begin{enumerate}
							\item $\operatorname{rank} \begin{bmatrix} X_p \\ U_p \end{bmatrix} = n + \operatorname{rank}(U_p)$;
							\item The matrix $A_0 = X_f Z_p^\dagger E_n^T$ is Schur stable.
						\end{enumerate}
					\end{proposition}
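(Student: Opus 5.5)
The plan is to work directly from the parameterization \eqref{eq:param_general}. It gives $A_Y=X_fZ_{p_1}^\dagger+YZ_{p_1}^\perp$, where $Z_{p_1}^\perp=Z_p^\perp E_n^T$ and $X_fZ_{p_1}^\dagger=X_fZ_p^\dagger E_n^T=A_0$. Strong informativity for stability means that $A_0+YZ_p^\perp E_n^T$ is Schur stable for every $Y\in\mathbb R^{n\times(n+m)}$. The argument then splits into two regimes, according to whether $Z_p^\perp E_n^T$ vanishes.

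\textbf{Step 1: unique determination of $A$.} First I show that Condition 1 is equivalent to $Z_p^\perp E_n^T=0$, i.e., to every data-consistent system sharing the same $A$. Note that $Z_p^\perp E_n^T=0$ holds iff ${\rm im}(E_n^T)\subseteq{\rm im}(Z_p)$. This in turn holds iff every vector $[\xi^T,0]^T$ lies in ${\rm im}(Z_p)$, equivalently iff $\ker Z_p^T\subseteq\{[0;\eta]\}$.

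A dimension count translates this into Condition 1. The intersection ${\rm im}(Z_p)\cap{\rm im}(E_n^T)$ has dimension $\operatorname{rank}Z_p-\dim(\text{projection of }{\rm im}Z_p\text{ onto the input block})$. That projection is ${\rm im}(U_p)$, so the intersection has dimension $\operatorname{rank}Z_p-\operatorname{rank}U_p$. Hence ${\rm im}(E_n^T)\subseteq{\rm im}(Z_p)$ iff $\operatorname{rank}Z_p-\operatorname{rank}U_p=n$, which is Condition 1. In this case $A_Y\equiv A_0$, so the data are informative for stability iff $A_0$ is Schur stable. This yields sufficiency of Conditions 1--2.

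\textbf{Step 2: necessity, the main obstacle.} Suppose $C:=Z_p^\perp E_n^T\neq0$. I must exhibit a real $Y$ with $\rho(A_0+YC)\ge1$. First I pick a unit vector $w$ with $Cw\neq0$. Then I choose $Y=\alpha\, e\,(Cw)^T/\|Cw\|^2$ for a fixed unit vector $e$ and a scalar $\alpha$, which gives $YCw=\alpha e$.

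To force an eigenvalue outside the unit disk, I consider the rank-one perturbation $A_0+\alpha\,e\,c^T$ with $c^T=(Cw)^TC/\|Cw\|^2$. Its characteristic polynomial is $\det(\lambda I-A_0)-\alpha\,c^T\operatorname{adj}(\lambda I-A_0)e$. By construction $c^Tw=1$, so the polynomial $c^T\operatorname{adj}(\lambda I-A_0)e$ can be made not identically zero by choosing $e$ suitably. For instance, take $e=w$; then the polynomial is $\lambda^{n-1}+\cdots$, since the leading coefficient of $c^T\operatorname{adj}(\lambda I-A_0)w$ is $c^Tw=1$.

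The product of the eigenvalues then equals, up to sign, the constant term. More simply, the trace of the perturbed matrix is ${\rm Tr}(A_0)+\alpha$, which is unbounded in $\alpha$. Since the trace is the sum of the eigenvalues and is bounded by $n\rho$, a large $|\alpha|$ forces $\rho(A_0+\alpha wc^T)>1$. So some consistent system is unstable, proving that Condition 1 is necessary.

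Given Condition 1, $A_Y=A_0$ for every consistent system, so Condition 2 is also necessary. The trace argument keeps everything real and avoids subtleties with complex eigenvectors, which is why I would use it as the key step.
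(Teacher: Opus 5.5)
Your proof is correct, and its skeleton matches the paper's. Both write $A_Y=A_0+YC$ with $C=Z_p^\perp E_n^T$, show that $C\neq0$ yields an unstable data-consistent system, show that $C=0$ is equivalent to Condition~1, and then use $A_Y\equiv A_0$. The differences lie in the two mechanical steps. For the rank equivalence, the paper writes $C=0$ as $\operatorname{rank}[Z_p\ \ E_n^T]=\operatorname{rank}Z_p$ and evaluates $\operatorname{rank}\begin{bmatrix}X_p&I_n\\U_p&0\end{bmatrix}=n+\operatorname{rank}U_p$ by column elimination. You instead apply rank--nullity to the projection of ${\rm im}(Z_p)$ onto the input block; the two are interchangeable one-liners. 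For necessity, the paper takes $v$ with $w=Cv\neq0$ and the real rank-one matrix $Y=(\lambda I-A_0)v\,w^T/\|w\|^2$. This gives $A_Yv=\lambda v$ for any prescribed real $\lambda$ with $|\lambda|\ge1$, i.e., exact eigenvalue placement with a real eigenpair. So your stated motivation for the trace route, avoiding complex eigenvectors, is not an advantage over the paper. Your trace argument, $\operatorname{Tr}(A_0+\alpha wc^T)=\operatorname{Tr}A_0+\alpha$ together with $|\operatorname{Tr}|\le n\rho$, is equally valid and equally short. It only gives $\rho>1$ for sufficiently large $|\alpha|$ (e.g.\ $|\alpha|>n+|\operatorname{Tr}A_0|$), not an explicit unstable mode. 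The paper's construction gives the sharper conclusion that, when $C\neq0$, every real number is an eigenvalue of some data-consistent $A$. It is also the template reused for the $V\Sigma_2^\perp C$ term in Theorem~\ref{stability_prior}. You can drop the adjugate/characteristic-polynomial paragraph, including the dangling sentence about the product of eigenvalues, since the trace argument makes it unnecessary.
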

					
					\begin{proof}
	By multiplying the parameterization \eqref{eq:param_general} from the right by $E_n^T$, we extract $A_Y = \begin{bmatrix} A_Y & B_Y \end{bmatrix} E_n^T = X_f Z_p^\dagger E_n^T + Y (I - Z_p Z_p^\dagger) E_n^T$.
	Let $A_0 = X_f Z_p^\dagger E_n^T$ and $C = (I - Z_p Z_p^\dagger) E_n^T$. The expression simplifies to $A_Y = A_0 + YC$.
	
	\textbf{Necessity:} Suppose $C \neq 0$. Then there exists a non-zero vector $v \in \mathbb{R}^n$ such that $w = Cv \neq 0$. Since $Y$ is arbitrary, we can choose a rank-1 matrix $Y$ such that $Yw = (\lambda I - A_0)v$ for any $\lambda \in \mathbb{R}$. Consequently, $A_Y v = (A_0 + YC)v = A_0 v + Yw = \lambda v$. By choosing $\lambda$ such that $|\lambda| \ge 1$, the matrix $A_Y$ fails to be Schur stable. Thus, $C=0$ is a necessary condition. 
	
	The condition $C = (I - Z_p Z_p^\dagger) E_n^T = 0$ implies that the columns of $E_n^T$ must lie in the range of $Z_p$. Geometrically, this is equivalent to:
	\begin{equation}
		\label{eq:rank_equiv}
		\operatorname{rank} \begin{bmatrix} Z_p & E_n^T \end{bmatrix} = \operatorname{rank}(Z_p).
	\end{equation}
	Expanding the left-hand side of \eqref{eq:rank_equiv} using the definitions of $Z_p$ and $E_n^T$:
	\begin{equation*}
		\operatorname{rank} \begin{bmatrix} X_p & I_n \\ U_p & 0 \end{bmatrix}= n + \operatorname{rank}(U_p).
	\end{equation*}
	Thus, $C=0$ is equivalent to Condition 1. If $C=0$, then $A_Y = A_0$ for all $Y$. For $A_0$ to be Schur stable, Condition 2 must hold.
	
	\textbf{Sufficiency:} If Condition 1 holds, then $C = (I - Z_p Z_p^\dagger) E_n^T = 0$. Substituting this into the expression for $A_Y$, we obtain $A_Y = A_0$ for all $Y \in \mathbb{R}^{n \times (n+m)}$. If Condition 2 holds, $A_0$ is Schur stable, and therefore $A_Y$ is Schur stable for every choice of $Y$.
\end{proof}
					
					\begin{remark}
						Proposition \ref{stable-no-prior} extends Corollary 11 of \cite{informativity2020}, which considered the stability of an autonomous system $x(t+1)=Ax(t)$.
						In this case, $U_p$ does not exist ($m=0$), $Z_p=X_p$, and $E_n=I_n$. Then, conditions (1) and (2) reduce to that $X_p$ has full row rank and $X_fX_p^{\dagger}$ is stable, which coincides with \cite{informativity2020}. Note that in this case, the system matrices $A$ can be uniquely identified.  For $m>0$, $A$ is unique, whereas $B$ need not be.
					\end{remark}

                    We now consider informativity for stability with the single-affine prior knowledge. 

            \begin{theorem} \label{stability_prior}
	Given the consistent data $(U,X)$ and the prior set $\hat \Sigma_{\rm stru}$, let $\Sigma_0$ and $\Sigma_2$ be defined as in Proposition \ref{thm:block_structured_param}  and  $M_0=S_{\ell}^\dagger(-\Sigma_{0})\Sigma_2^\dagger$. Assume that $\Sigma_{{\mathcal D}}\cap \hat \Sigma_{\rm stru}\ne \emptyset$. 	Define matrices $A_0 \in \mathbb{R}^{n \times n}$ and  $C \in \mathbb{R}^{(n+m) \times n}$ as:
	\begin{align*}
		A_0 &= (X_f Z_p^\dagger + M_0(I - Z_p Z_p^\dagger)) E_n^T, \\
		C &= (I - Z_p Z_p^\dagger) E_n^T.
	\end{align*}
	Then, the data $(U,X)$ are informative for stability on the prior set $\hat \Sigma_{\rm stru}$ if and only if the following three conditions hold:
	\begin{enumerate}
		\item $A_0$ is Schur stable;
		\item $\Sigma_{2}^\perp C = 0$;
		\item $C A_0^k S_{\ell \perp} = 0$ for all $k = 0, 1, \dots, n-1$.
	\end{enumerate}
\end{theorem}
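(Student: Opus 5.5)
By Proposition~\ref{thm:block_structured_param}, every system in $\Sigma_{\mathcal D}\cap\hat\Sigma_{\rm stru}$ has the form \eqref{parA_uingle_a} with $\Sigma_1=S_\ell$, so $\Sigma_1^\dagger(-\Sigma_0)\Sigma_2^\dagger=M_0$ and $\Sigma_{1\perp}=S_{\ell\perp}$. The plan is to right-multiply \eqref{parA_uingle_a} by $E_n^T$, exactly as in the proof of Proposition~\ref{stable-no-prior}. This gives the $A$-block of every jointly consistent system as
\[
A_{U,V}=A_0+S_{\ell\perp}\,U\,C+V\,\Sigma_2^\perp C,\qquad U,V\in\mathbb R^{n\times(n+m)} \text{ arbitrary}.
\]
Informativity for stability is then the statement that $A_{U,V}$ is Schur stable for all $U,V$. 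I would verify the three conditions in the order (2), then (1), then (3).

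\emph{Condition 2.} For necessity, suppose $\Sigma_2^\perp C\neq0$. Take $U=0$ and repeat the rank-one argument of Proposition~\ref{stable-no-prior}. Pick $x$ with $w=\Sigma_2^\perp Cx\neq0$ and choose a rank-one $V$ with $Vw=(\lambda I-A_0)x$ for a real $\lambda$ with $|\lambda|\ge1$. Then $A_{0,V}x=\lambda x$, so $A_{0,V}$ is not Schur stable. Hence condition~2 is necessary, and under it the family reduces to $A_U=A_0+BUC$ with $B:=S_{\ell\perp}$. Taking $U=0$ shows that condition~1 is necessary.

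\emph{Condition 3 in terms of $(A_0,B,C)$.} Condition~3 is equivalent to $C(\lambda I-A_0)^{-1}B\equiv0$, by Cayley--Hamilton applied to the Markov parameters $CA_0^kB$.

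For sufficiency, let $\mathcal R=\operatorname{im}[B,A_0B,\dots,A_0^{n-1}B]$. This subspace is $A_0$-invariant, contains $\operatorname{im}B$, and lies in $\ker C$ by condition~3. In a basis adapted to $\mathcal R$,
\[
A_0=\begin{bmatrix}A_{11}&A_{12}\\0&A_{22}\end{bmatrix},\qquad
B=\begin{bmatrix}B_1\\0\end{bmatrix},\qquad
C=\begin{bmatrix}0&C_2\end{bmatrix},
\]
so that
\[
A_U=\begin{bmatrix}A_{11}&A_{12}+B_1UC_2\\0&A_{22}\end{bmatrix}.
\]
Thus $A_U$ has the same spectrum as $A_0$ and is Schur stable by condition~1.

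For necessity of condition~3, suppose $C(\lambda I-A_0)^{-1}B\not\equiv0$. Then there are real vectors $u,w$ such that the scalar rational function $g(\lambda)=w^TC(\lambda I-A_0)^{-1}Bu$ is not identically zero. It has finitely many zeros and real coefficients, so one can choose a real $\lambda^\star>\max\{1,\rho(A_0)\}$ with $g(\lambda^\star)\in\mathbb R\setminus\{0\}$. Set $U=\varepsilon uw^T$ with $\varepsilon=1/g(\lambda^\star)$. By the matrix determinant lemma,
\[
\det(\lambda I-A_U)=\det(\lambda I-A_0)\bigl(1-\varepsilon g(\lambda)\bigr),
\]
which vanishes at $\lambda^\star$. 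Hence $A_U$ has an eigenvalue $\lambda^\star$ with $|\lambda^\star|\ge1$, contradicting stability. Together with the previous steps, this gives the equivalence.

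The main obstacle is this last necessity step. The free parameter $U$ must be real, so the destabilizing eigenvalue has to be placed at a real point where $g$ is real and nonzero. Choosing $\lambda^\star$ real and large handles this, but it must be checked that $g\not\equiv0$ on the real axis beyond $\rho(A_0)$. This holds because a nonzero rational function has only finitely many zeros.
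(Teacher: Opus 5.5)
Your proposal is correct and follows essentially the same route as the paper. It uses the same expansion $A_0+S_{\ell\perp}UC+V\Sigma_2^\perp C$, the same rank-one choice of $V$ for Condition~2, $U=V=0$ for Condition~1, and the determinant lemma at a real point $\lambda^\star>1$ for the necessity of Condition~3, where your choice $U=\varepsilon uw^T$ makes the paper's existence claim explicit. The only variation is in sufficiency: you use the $A_0$-invariant subspace $\operatorname{im}[S_{\ell\perp},\dots,A_0^{n-1}S_{\ell\perp}]\subseteq\ker C$ to get a block-triangular form, whereas the paper uses the identity $\det(zI-A_Y)=\det(zI-A_0)\det(I-UH(z))$ with $H\equiv 0$, and both arguments are valid.
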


\begin{proof}
	From the proof of Proposition \ref{thm:block_structured_param}, $(A_Y,B_Y)\in \hat \Sigma_{\rm stru}\cap \Sigma_{{\mathcal D}}$ if and only if
	there exists a matrix $Y$, written as \eqref{expresssion_Y}, such that $[A_Y,B_Y]$ is parameterized as \eqref{eq:param_general}.	
	Substituting \eqref{expresssion_Y} into \eqref{eq:param_general}, we expand $A_Y$ as follows:
	\begin{align*}
		A_Y &= X_f Z_p^\dagger E_n^T + (M_0 + S_{\ell \perp} U + V \Sigma_{2}^\perp)(I - Z_p Z_p^\dagger) E_n^T \\
		&= (X_f Z_p^\dagger E_n^T + M_0 C) + S_{\ell \perp} U C + V \Sigma_{2}^\perp C \\
		&= A_0 + S_{\ell \perp} U C + V \Sigma_{2}^\perp C.
	\end{align*}
	
	\textbf{Necessity:} 
	Setting $U = 0$ and $V = 0$, we have $A_Y = A_0$. For $A_Y$ to be stable for all $U, V$, $A_0$ must be Schur stable, satisfying Condition (1). Next, suppose $\Sigma_{2}^\perp C \neq 0$. There exists a vector $v$ such that $w = \Sigma_{2}^\perp C v \neq 0$. Since $V$ is arbitrary, we can choose $V$ such that $V w = \gamma v - A_0 v$. Setting $U=0$, we get $A_Y v = \gamma v$. For $|\gamma| \ge 1$, stability is lost. Thus, $\Sigma_{2}^\perp C = 0$ is necessary, satisfying Condition (2).
	With $\Sigma_{2}^\perp C = 0$, $A_Y = A_0 + S_{\ell \perp} U C$. Then $\det(zI - A_Y) = \det(zI - A_0) \cdot \det\left(I - H(z) U\right)$ for any $z\in \mathbb C$ with $\det(zI-A_0)\ne 0$, where the transfer function $H(z) = C (zI - A_0)^{-1} S_{\ell \perp}$.  For the eigenvalues to be independent of $U$, the transfer function $H(z)$ must be identically zero; otherwise, for any $z_0\in \mathbb R$ with $z_0>1$ and $H(z_0)\ne 0$, there exists a real $U$ making $\det(I-H(z_0)U)=0$, making $z_0$ an unstable eigenvalue of $A_Y$. Expanding the resolvent $(zI - A_0)^{-1}$ into a power series, this requires $C A_0^k S_{\ell \perp} = 0$ for all $k \ge 0$, which is Condition (3).
	
	\textbf{Sufficiency:}
	If Conditions (2) and (3) hold, then $V \Sigma_{2}^\perp C = 0$ and the term $S_{\ell \perp} U C$ does not affect the characteristic polynomial of $A_0$ because:
	\begin{equation*}
		\det(zI - A_Y)\!\! =\!\! \det(zI - A_0) \det(I - U H(z)) \!=\!\! \det(zI - A_0).
	\end{equation*}
	Since $A_0$ is Schur stable by Condition (1), all eigenvalues of $A_Y$ lie within the unit circle for all $U, V$.
\end{proof}

\begin{remark}
    From the proof of Proposition \ref{stable-no-prior}, without priors, the data $(U,X)$ are informative for stability if and only if $C=0$ and $A_0$ is stable. These conditions are indeed relaxed when more prior knowledge is utilized, as shown in Theorem \ref{stability_prior}.  
\end{remark}

                \section{Extensions with noisy data} \label{sec-noisy}

Consider system \eqref{eq:sys} affected by process noise, given by
\begin{equation} \label{sys-noise}
	x(t+1) = A_u x(t) + B_u u(t) + d(t),
\end{equation} 
where $d(t) \in \mathbb{R}^n$ is an unknown disturbance (or noise) signal. Let $D_p = \begin{bmatrix} d(0) & d(1) & \cdots & d(T-1) \end{bmatrix}.$
With $X_p, X_f$, and $U_p$ defined as in Section \ref{sec-prob}, the corresponding data equation becomes
$
X_f = A X_p + B U_p + D_p.
$
Assume that the noise sequence $D_p$ belongs to a prior set $\mathbb{D} \subseteq \mathbb{R}^{n\times T}$, defined as
\[
\mathbb{D} := \left\{ D \in \mathbb{R}^{n\times T} \mid \operatorname{im}(D) \subseteq \mathcal{D} \right\},
\]
where $\mathcal{D} \subseteq \mathbb{R}^n$ is a known linear subspace. This scenario naturally arises when the noise stems from unknown sparse disturbances affecting only a specific subset of the state variables (e.g., $d(t) = F e(t)$ for a known matrix $F$ and an unknown signal $e(t) \in \mathbb{R}^q$).

The set of all systems that can explain the data $(U, X)$ for some allowable noise $D_p \in \mathbb{D}$ is given by
\[
\Sigma_{{\mathcal D}}^n = \left\{ (A,B) \mid \exists D_p \in \mathbb{D} \text{ s.t. } X_f = A X_p + B U_p + D_p \right\}.
\]
Note that without prior knowledge on $\mathbb{D}$ (i.e., if $\mathbb{D} = \mathbb{R}^{n\times T}$), informativity analysis becomes impossible, as any pair $[A,B] \in \mathbb{R}^{n\times (n+m)}$ could trivially explain the data by setting $D_p = X_f - A X_p - B U_p$.

Let $E \in \mathbb{R}^{r\times n}$ be a full-row-rank matrix with the maximum possible number of rows such that $ED = 0$ for all $D \in \mathbb{D}$. Left-multiplying the data equation by $E$ yields
\begin{equation} \label{noise-equation}
	E X_f = E [A \,\, B] \begin{bmatrix} X_p \\ U_p \end{bmatrix} = E [A \,\, B] Z_p.
\end{equation}
Assuming the consistency condition $E X_f Z_p^\perp = 0$ holds, the general solution for $[A,B]$ satisfying \eqref{noise-equation} is parameterized as
\begin{equation} \label{parameterize-noisy} 
	[A, B] = E^\dagger E X_f Z_p^\dagger + E_\perp U + V Z_p^\perp,
\end{equation}
where $U$ and $V$ are arbitrary matrices of compatible dimensions.

\begin{remark}
	We have adopted the same noise setting as in \cite{eising2021informativity}. For stability analysis, it is possible to extend our results to general quadratic bound constraints as in \cite{formulas2020}, which is left for future work.
\end{remark}

We say the data $(U,X)$ are \emph{weakly informative for property $\mathcal{P}$ on the noise set $\mathbb{D}$} if $\Sigma_{{\mathcal D}}^n \cap \Sigma_{\mathcal{P}} \ne \emptyset$. Similarly, the data $(U,X)$ are \emph{strongly informative for property $\mathcal{P}$ on $\mathbb{D}$} if $\Sigma_{{\mathcal D}}^n \subseteq \Sigma_{\mathcal{P}}$.

\begin{corollary} \label{weak-controllability}
	Suppose the data $(U,X)$ and the noise set $\mathbb{D}$ satisfy the consistency condition $E X_f Z_{p\perp} = 0$. The data $(U,X)$ are weakly informative for controllability on $\mathbb{D}$ if and only if the following two conditions hold:
	\begin{enumerate}
		\item $Z_{p_1}^\perp (\lambda I - E^\dagger E X_f Z_{p_1}^\dagger)^{-1} [E^\dagger E X_f Z_{p_2}^\dagger, E_\perp] + [Z_{p_2}^\perp, 0] \not\equiv 0$.
		\item For all $\lambda \in \mathbb{C}$,
		\[
		\operatorname{rank} \begin{bmatrix} 
			\lambda I - E^\dagger E X_f Z_{p_1}^\dagger & -E^\dagger E X_f Z_{p_2}^\dagger & E_\perp \\ 
			Z_{p_1}^\perp & Z_{p_2}^\perp & 0 
		\end{bmatrix} \ge n,
		\]
	\end{enumerate}
	where $Z_{p_1}^\dagger$ (resp. $Z_{p_1}^\perp$) and $Z_{p_2}^\dagger$ (resp. $Z_{p_2}^\perp$) denote the first $n$ columns and the last $m$ columns of $Z_p^\dagger$ (resp. $Z_p^\perp$), respectively.
\end{corollary}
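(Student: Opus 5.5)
The plan is to reduce the statement to Lemma~\ref{lemma:sc}, in the same way Theorem~\ref{theo-weak} is obtained from the parameterization \eqref{eq:param_general}. The starting point is the noisy parameterization \eqref{parameterize-noisy}, $[A,B]=E^\dagger E X_f Z_p^\dagger + E_\perp U + V Z_p^\perp$ with $U,V$ free. This describes $\Sigma_{\mathcal D}^n$ exactly: a pair $(A,B)$ explains the data for some $D_p\in\mathbb D$ if and only if $E(X_f-[A,B]Z_p)=0$, because $\ker E$ equals $\mathcal D$ by maximality of the rows of $E$. Hence weak informativity for controllability on $\mathbb D$ is equivalent to structural controllability of this linearly parameterized family.

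Next I would cast the family in the form \eqref{eq:system_equations}. Split columns into the first $n$ and last $m$ using $E_n^T$ and $[0,I_m]^T$, giving
\[
A = E^\dagger E X_f Z_{p_1}^\dagger + E_\perp U_1 + V Z_{p_1}^\perp, \qquad
B = E^\dagger E X_f Z_{p_2}^\dagger + E_\perp U_2 + V Z_{p_2}^\perp ,
\]
where $U=[U_1,U_2]$. The cleanest route, and the one that produces the stated formula, is to treat $U$ as an input-side perturbation. Augment the input as $\tilde B=[B,\ E_\perp]$. The $U_1$-term, $E_\perp U_1 = E_\perp\cdot U_1$, can then be viewed as a feedback from the augmented input channel $E_\perp$. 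This identification must be justified by an invariance argument. Since ${\rm im}(E_\perp U)\subseteq {\rm im}(E_\perp)$, one checks via the PBH test that controllability of the family for some $(U,V)$ is equivalent to structural controllability of the single-block family
\[
A_0=E^\dagger E X_f Z_{p_1}^\dagger,\quad B_0=[E^\dagger E X_fZ_{p_2}^\dagger,\ E_\perp],\quad B_1=I,\ P_1=V,\ C_1=Z_{p_1}^\perp,\ D_1=[Z_{p_2}^\perp,\ 0].
\]
Alternatively, I would take both $U$ and $V$ as parameter blocks with $k=2$ and then show that the node-$0$-to-node-$2$ path condition and the rank conditions over subsets ${\bf s}$ collapse to the stated ones.

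With this identification, Lemma~\ref{lemma:sc} for $k=1$ gives the two conditions. The path condition $0\to1$ reads $H_{10}(\lambda)=Z_{p_1}^\perp(\lambda I-A_0)^{-1}B_0+D_1\not\equiv0$, which is Condition~1. For ${\bf s}=\emptyset$ the rank condition yields the matrix in Condition~2, with $-B_0$ appearing as $[-E^\dagger E X_fZ_{p_2}^\dagger,\ -E_\perp]$; the sign of the $E_\perp$ column does not affect rank. For ${\bf s}=\{1\}$ the rank condition is $\operatorname{rank}[\lambda I-A_0,\ -B_0,\ I]\ge n$, which holds trivially. It suffices to check $\lambda\in\sigma(A_0)$, and for other $\lambda$ the rank is automatically $n$, so Condition~2 may be stated for all $\lambda\in\mathbb C$. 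The degenerate cases $Z_p^\perp=0$ and $E_\perp=0$ need to be handled as in Theorem~\ref{theo-weak}.

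The main obstacle is the rigorous treatment of the $E_\perp U$ term. It enters $B$ as a genuine free perturbation, $E_\perp U_2$, but it enters $A$ as well, whereas the formula in the statement treats $E_\perp$ only as extra input columns. I would justify this by a PBH argument. An uncontrollable left eigenvector $w$ of the full family must satisfy $w^*E_\perp=0$ whenever $U_2$ is chosen generically. Given that, $w^*E_\perp U_1=0$ as well, so the $U$-terms are invisible to every left eigenvector and can be replaced by the augmented input $E_\perp$. Making this genericity step precise, with $U$ generic and the argument holding for all $\lambda$ simultaneously, is the delicate part; I would use the almost-or-nothing Fact to fix a generic $U$.
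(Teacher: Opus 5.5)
Your fallback, which treats $U$ and $V$ as two parameter blocks with $k=2$, is exactly the paper's proof. The main route you lead with, however, has a genuine gap at the step you flag yourself.

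That route rests on an equivalence. Let $A_V:=E^\dagger EX_fZ_{p_1}^\dagger+VZ_{p_1}^\perp$ and $B_V:=E^\dagger EX_fZ_{p_2}^\dagger+VZ_{p_2}^\perp$. The claim is that some $(U,V)$ makes $(A_V+E_\perp U_1,\ B_V+E_\perp U_2)$ controllable if and only if some $V$ makes $(A_V,[B_V\ \ E_\perp])$ controllable. This equivalence is true, but your sketch does not prove its nontrivial direction (augmented $\Rightarrow$ full).

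As you phrase it, the key claim is false. Take $n=2$, $m=1$, $E_\perp=I_2$ (so $E^\dagger E=0$), $U_1=0$ and $V=0$. Then $A=0$ and $B=U_2$, which is uncontrollable for \emph{every} $U_2$. Any $w\perp U_2$ is an uncontrollable left eigenvector with $w^*E_\perp=w^*\neq0$. So genericity of $U_2$ forces nothing; only the freedom in $U_1$ can, and you bring $U_1$ in only after the claim is assumed.

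Moreover, you need the claim precisely when the full family is uncontrollable for every $(U,V)$, since you argue by contradiction. In that case the almost-or-nothing Fact is silent: it says nothing about where uncontrollable left eigenvectors lie. Fixing a generic $U$ through the Fact therefore delivers no such property.

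What is missing is a Heymann-type step: for fixed $V$, if $(A_V,[B_V\ \ E_\perp])$ is controllable, then $(A_V+E_\perp U_1,\ B_V+E_\perp U_2)$ is controllable for some $U$. This follows from Lemma~\ref{lemma:sc} with $V$ frozen, $k=1$, $B_1=E_\perp$, $C_1=\begin{bmatrix}I_n\\0\end{bmatrix}$ and $D_1=\begin{bmatrix}0\\I_m\end{bmatrix}$:
\begin{itemize}
\item the path condition holds because $D_1$ contains $I_m$ ($m\ge1$);
\item the rank test with $\mathbf s=\emptyset$ is trivial because $[C_1\ \ D_1]=I_{n+m}$;
\item the rank test with $\mathbf s=\{1\}$ is exactly the PBH test of $(A_V,[B_V\ \ E_\perp])$.
\end{itemize}
The converse direction is the identity
\[
\begin{bmatrix}A_V+E_\perp U_1-\lambda I & B_V+E_\perp U_2 & E_\perp\end{bmatrix}=\begin{bmatrix}A_V-\lambda I & B_V & E_\perp\end{bmatrix}\begin{bmatrix}I_n&0&0\\0&I_m&0\\U_1&U_2&I_n\end{bmatrix}.
\]

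The detour is unnecessary, though. Apply Lemma~\ref{lemma:sc} once with $k=2$, as the paper does. Set $A_0=E^\dagger EX_fZ_{p_1}^\dagger$ and $B_0=E^\dagger EX_fZ_{p_2}^\dagger$, take block 1 as above with $P_1=U$, and take $B_2=I_n$, $P_2=V$, $C_2=Z_{p_1}^\perp$, $D_2=Z_{p_2}^\perp$.
\begin{itemize}
\item Path to node 1: $H_{10}=\begin{bmatrix}(\lambda I-A_0)^{-1}B_0\\ I_m\end{bmatrix}\not\equiv0$, so node 1 is always reached.
\item Path to node 2: it is reached via $0\to2$ or $0\to1\to2$ if and only if $[H_{20}\ \ H_{21}]=Z_{p_1}^\perp(\lambda I-A_0)^{-1}[B_0\ \ E_\perp]+[Z_{p_2}^\perp\ \ 0]\not\equiv0$. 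This is Condition 1.
\item Rank tests: every test with $1\in\bar{\mathbf s}$ contains the rows $[C_1\ \ D_1]=I_{n+m}$, and every test with $2\in\mathbf s$ contains the columns $B_2=I_n$. Only $\mathbf s=\{1\}$ survives, and it is Condition 2.
\end{itemize}
No eigenvector-genericity argument is needed.

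Your caveat about degenerate cases is well placed. If $Z_p^\perp=0$, then $[C_2\ \ D_2]=0$, which violates the standing assumption of Lemma~\ref{lemma:sc}. Condition 1 then reads $0\not\equiv0$, even when, say, $E=I_n$ and the uniquely identified true system is controllable. As in Theorem~\ref{theo-weak}, an ``either $Z_p^\perp=0$ or'' clause is needed, after which Condition 2 alone decides. The paper's statement and proof both omit this.
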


\begin{proof}
	By definition, the consistent data $(U,X)$ are weakly informative for controllability on the noise set $\mathbb{D}$ if and only if the linearly parameterized system \eqref{parameterize-noisy} is structurally controllable. By mapping \eqref{parameterize-noisy} into the standard form $A_P = A_0 + \sum_{i=1}^{2} B_i P_i C_i$ and $B_P = B_0 + \sum_{i=1}^{2} B_i P_i D_i$ using the assignments $A_0 = E^\dagger E X_f Z_{p_1}^\dagger$, $B_0 = E^\dagger E X_f Z_{p_2}^\dagger$, $B_1 = E_\perp$, $C_1 = \begin{bmatrix} I_n \\ 0 \end{bmatrix}$, $D_1 = \begin{bmatrix} 0 \\ I_m \end{bmatrix}$, $B_2 = I_n$, $C_2 = Z_{p_1}^\perp$, and $D_2 = Z_{p_2}^\perp$, we can directly apply \cite[Theorem 1]{1101135} to $(A_P, B_P)$ to obtain the simplified conditions as required.
\end{proof}

\begin{corollary} \label{weak-stability-noise-corollary}
	The data $(U,X)$ are weakly informative for stability on the noise set $\mathbb{D}$ if and only if $E X_f Z_{p\perp}= 0$ (consistency condition) and
	\begin{equation}\label{weak-stability-noisy}
		\operatorname{rank} \begin{bmatrix} E^\dagger E X_f Z_{p_1}^\dagger - \lambda I & E_\perp \\ Z_{p_1}^\perp & 0 \end{bmatrix} \ge n, \quad \forall |\lambda| \ge 1,
	\end{equation}
	where $Z_{p_1}^\perp$ and $Z_{p_1}^\dagger$ are defined as in Theorem \ref{theo-weak}.
\end{corollary}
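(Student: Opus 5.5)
Starting from the noisy parameterization \eqref{parameterize-noisy}, I would extract the $A$-block and reduce weak informativity for stability to a detectability question, in the same way as in the proof of Theorem~\ref{theo-stability}. The consistency condition $EX_fZ_p^\perp=0$ is exactly what makes \eqref{noise-equation} solvable. It is therefore necessary for $\Sigma_{\mathcal D}^n\neq\emptyset$, and under it every element of $\Sigma_{\mathcal D}^n$ has the form \eqref{parameterize-noisy}.

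First I need to check that \eqref{parameterize-noisy} describes all of $\Sigma_{\mathcal D}^n$, and not just all solutions of \eqref{noise-equation}. If $E[A,B]Z_p=EX_f$, set $D_p:=X_f-[A,B]Z_p$. Then $ED_p=0$, so ${\rm im}(D_p)\subseteq\ker E$. By maximality of $E$, $\ker E=\mathcal D$, hence $D_p\in\mathbb D$. So $\Sigma_{\mathcal D}^n$ is exactly the solution set of \eqref{noise-equation}. The general solution of the two-sided equation $E\,[A,B]\,Z_p=EX_f$ is $E^\dagger EX_fZ_p^\dagger+E_\perp U+VZ_p^\perp$. Here I use $E^\dagger E X_f Z_p^\dagger Z_p=E^\dagger E X_f$, which follows from consistency, together with the standard formula already used in Proposition~\ref{thm:block_structured_param}.

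Multiplying \eqref{parameterize-noisy} on the right by $E_n^T$ gives
\[
A=\bar A_0+E_\perp U_1+VZ_{p_1}^\perp ,\qquad \bar A_0:=E^\dagger EX_fZ_{p_1}^\dagger ,
\]
where $U_1=UE_n^T$ ranges over all of $\mathbb R^{n\times n}$ as $U$ varies, and $V$ is free. The question is then whether there exist $U_1,V$ with $\bar A_0+E_\perp U_1+VZ_{p_1}^\perp$ Schur stable. This is a stabilization problem of the form $A_0+BK_1+K_2C$, with an "input" matrix $E_\perp$ and an "output" matrix $Z_{p_1}^\perp$.

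The key step is to show that stabilizability by this sum of two independent gains is equivalent to the single rank condition \eqref{weak-stability-noisy}.

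\textbf{Necessity.} Suppose \eqref{weak-stability-noisy} fails at some $|\lambda|\ge1$. Then there are $(x,y)\neq0$ with $(\bar A_0-\lambda I)x+E_\perp y=0$ and $Z_{p_1}^\perp x=0$. If $x=0$, then $E_\perp y=0$ and the rank deficit is only apparent. I would handle this case by reading the rank condition column-wise on the block $\bigl[\begin{smallmatrix}\bar A_0-\lambda I\\ Z_{p_1}^\perp\end{smallmatrix}\bigr]$ modulo ${\rm im}\,E_\perp$. This amounts to projecting with $E^\dagger E$, which annihilates $E_\perp$. Applying $E^\dagger E$, failure means there is $x\neq0$ with $Z_{p_1}^\perp x=0$ and $E^\dagger E(\bar A_0-\lambda I)x=0$. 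I then expect to show $\lambda$ is an eigenvalue of every admissible $A$: for any choice of gains, $E^\dagger E(A-\lambda I)x=0$. This gives an invariant-subspace or quotient argument. Concretely, on ${\rm im}(E^\dagger E)=({\rm im}E_\perp)^\perp$ the compressed map $E^\dagger E A$ restricted to $\ker Z_{p_1}^\perp$ does not depend on the gains.

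\textbf{Sufficiency.} I would use a change of coordinates splitting $\mathbb R^n={\rm im}\,E_\perp\oplus{\rm im}\,E^T$. In the ${\rm im}\,E_\perp$ rows, $U_1$ assigns those rows of $A$ arbitrarily. The remaining rows give a block $E A$ of the form $E\bar A_0+EVZ_{p_1}^\perp$, where $EV$ is arbitrary. Choosing $U_1$ to make $A$ block-triangular reduces the problem to detectability of the compressed pair, which is exactly \eqref{weak-stability-noisy}. The stabilizing $V$ then comes from the observer-gain argument used in Theorem~\ref{theo-stability}.

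I expect the main obstacle to be the rank-condition equivalence. In particular, one must show that the two-sided freedom $E_\perp U_1+VZ_{p_1}^\perp$ yields a stabilizable problem precisely when the combined pencil has rank at least $n$, rather than a more intricate condition. It also has to be confirmed that the weak inequality $\ge n$, as opposed to full rank $n+\cdots$, is the right criterion. I would try the block-triangular coordinate decomposition first, and fall back on the fixed-mode characterization of decentralized-type structures (Lemma~\ref{lemma:sc}-style rank tests restricted to $|\lambda|\ge1$) if it does not go through.
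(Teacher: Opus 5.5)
Your preliminary steps are sound. Your check via $\ker E=\mathcal D$ that \eqref{parameterize-noisy} exhausts $\Sigma^n_{\mathcal D}$ is something the paper leaves implicit. However, both directions of your core argument misread what \eqref{weak-stability-noisy} says.

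\textbf{What the condition actually says.} Eliminating the $E_\perp$ block (its columns span $\ker E$) shows that \eqref{weak-stability-noisy} is equivalent to $\operatorname{rank}\begin{bmatrix}E(\bar A_0-\lambda I)\\ Z_{p_1}^\perp\end{bmatrix}\ge\operatorname{rank}E$. This is a row-count condition on a matrix with $n$ columns. It is neither of the conditions you work with:
\begin{enumerate}
\item for necessity you use a column-rank statement: no $x\neq0$ with $Z_{p_1}^\perp x=0$ and $E^\dagger E(\bar A_0-\lambda I)x=0$;
\item for sufficiency you use detectability of the compressed pair.
\end{enumerate}
Both are strictly stronger, and in general they agree with \eqref{weak-stability-noisy} only when $E_\perp=0$.

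\textbf{A concrete instance.} Take $n=2$, $m=1$, the single transition $x(0)=e_1$, $u(0)=0$, $x(1)=2e_1$, and $\mathcal D=\operatorname{span}\{e_2\}$. Then $E=e_1^T$ and $\Sigma^n_{\mathcal D}=\{(A,B):a_{11}=2\}$. Here $\bar A_0=\operatorname{diag}(2,0)$ and $E_\perp=\operatorname{diag}(0,1)$. The matrix in \eqref{weak-stability-noisy} has rank $3$ for $\lambda\neq2$ and rank $2=n$ at $\lambda=2$. Indeed $A=\begin{bmatrix}2&1\\-4&-2\end{bmatrix}$ is nilpotent, so the data are weakly informative.

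Yet $x=e_1$ satisfies $Z_{p_1}^\perp x=0$ and $E^\dagger E(\bar A_0-2I)x=0$. So the implication you intend to prove is false: such an $x$ does not make $\lambda$ an eigenvalue of every admissible $A$. All that $E^\dagger E(A-\lambda I)x=0$ gives is $(A-\lambda I)x\in\ker E$.

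On $\operatorname{im}E^T=\operatorname{span}\{e_1\}$ the compressed pair is $(2,0)$, which is not detectable. So block-triangularizing with $U_1$ cannot prove sufficiency here. The eigenvalue $2$ can be moved only through the loop formed by $a_{21}$ (free via $U_1$) and $a_{12}$ (free via $V$), and block-triangularization cuts that loop.

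\textbf{Necessity: what is missing.} Work with the whole pencil rather than a kernel vector, as the paper does. The block-unimodular factors $\begin{bmatrix}I&V\\0&I\end{bmatrix}$ and $\begin{bmatrix}I&0\\U_1&I\end{bmatrix}$ show that
$\operatorname{rank}\begin{bmatrix}\bar A_0-\lambda I&E_\perp\\Z_{p_1}^\perp&0\end{bmatrix}=\operatorname{rank}\begin{bmatrix}A-\lambda I&E_\perp\\Z_{p_1}^\perp&0\end{bmatrix}\ge\operatorname{rank}(A-\lambda I)$
for every admissible $A$.

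\textbf{Sufficiency: what is missing.} You must keep the coupling. Let $\Psi$ be a basis of the left kernel of $E_\perp$. Choose $V$ so that $\Psi^T(\bar A_0-\lambda I)+\Psi^TVZ_{p_1}^\perp$ has full row rank for all $|\lambda|\ge1$. This means $(\bar A_0+VZ_{p_1}^\perp,E_\perp)$ has no unstable uncontrollable mode. Only then use $U_1$ as a stabilizing state feedback through $E_\perp$.

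Producing such a $V$ from the rank condition is the real content of the sufficiency proof; the paper obtains it from a descriptor-system detectability test. Your fallback points the right way: \eqref{weak-stability-noisy} says exactly that the two-channel structure $\bar A_0+E_\perp U_1+VZ_{p_1}^\perp$ has no unstable decentralized fixed modes. You would still need a static-gain version of that fact, and your proposal does not supply it.
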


\begin{proof}
					Define $A_0=E^\dagger E X_f Z_{p1}^\dagger$, $B_1=E_\perp$, $C_1= Z_{p1}^\perp$. Then, $A=A_0 + B_1 U + V C_1$.
					
					\textbf{Necessity:} 
					Assume there exist $U$ and $V$ such that $A = A_0 + B_1 U + V C_1$ is Schur stable. For any $\lambda \in \mathbb{C}$ with $|\lambda| \ge 1$, the matrix $A - \lambda I$ must be non-singular. We apply invertible block transformations to the system matrix:
					\begin{align*}
						&\begin{bmatrix} I_n & V \\ 0 & I_{n+m} \end{bmatrix} \begin{bmatrix} A_0 - \lambda I & B_1 \\ C_1 & 0 \end{bmatrix} \begin{bmatrix} I_n & 0 \\ U & I_n \end{bmatrix}\\
						& = \begin{bmatrix} A_0 + B_1 U + V C_1 - \lambda I & B_1 \\ C_1 & 0 \end{bmatrix} = \begin{bmatrix} A - \lambda I & B_1 \\ C_1 & 0 \end{bmatrix}
					\end{align*}
					Since multiplying by invertible matrices preserves rank, we have:
					\begin{equation}
						\operatorname{rank} \begin{bmatrix} A_0 - \lambda I & B_1 \\ C_1 & 0 \end{bmatrix} \!=\! \operatorname{rank} \begin{bmatrix} A - \lambda I & B_1 \\ C_1 & 0 \end{bmatrix}\ge \operatorname{rank}(A - \lambda I) \!=\! n.
					\end{equation}
					This proves the necessity of the condition.
					
					\vspace{0.5cm}
					
					\textbf{Sufficiency:}
Assume that for all \(|\lambda| \ge 1\), \eqref{weak-stability-noisy} holds. Let \(\Psi_1 \in \mathbb{R}^{n \times r}\) be a matrix whose columns form a basis for the left null space of \(B_1\) (meaning \(\Psi_1^T B_1 = 0\), and it has full column rank \(r\)). Let $\tilde B_1$ consist of $n-r$ columns of $B_1$ such that ${\rm im ( \tilde B_1)}={\rm im (B_1)}$.  Left-multiplying the block matrix by a non-singular matrix yields:
\begin{equation*}
\begin{bmatrix} \Psi_1^T & 0 \\ 0 & I_{n-r} \\ \tilde B_1^T & 0 \end{bmatrix} \begin{bmatrix} A_0 - \lambda I & \tilde B_1 \\ C_1 & 0 \end{bmatrix} = \begin{bmatrix} \Psi_1^T (A_0 - \lambda I) & 0 \\ C_1 & 0 \\ \tilde B_1^T (A_0 - \lambda I) & \tilde B_1^T \tilde B_1 \end{bmatrix}
\end{equation*}
As \(\tilde B_1^T \tilde B_1\) is invertible, the rank of this product matrix reduces to
\begin{align*}
&	n-r + \operatorname{rank} \begin{bmatrix} \Psi_1^T (A_0 - \lambda I) \\ C_1 \end{bmatrix} \ge n \\ &
\implies \operatorname{rank} \begin{bmatrix} \Psi_1^T A_0 - \lambda \Psi_1^T \\ C_1 \end{bmatrix} \ge  r, \quad \forall |\lambda| \ge 1.
\end{align*}

By the PBH test for detectability of rectangular descriptor systems \cite{darouach2002design,Hou1999Observer}, this rank condition guarantees the existence of a matrix \(V_2 \in \mathbb{R}^{r\times (n+m)}\) such that
\begin{equation}
\operatorname{rank} \left( \Psi_1^T(A_0 - \lambda I) + V_2 C_1 \right) =  r, \quad \forall |\lambda| \ge 1
\end{equation}

We can now define $V$ by back-projecting \(V_2\) as
\(V = \Psi_1(\Psi_1^T \Psi_1)^{-1} V_2\).
This assignment implies \(\Psi_1^T V = V_2\). Therefore, \(\Psi_1^T (A_0 + V C_1 - \lambda I)\) has full row rank for all \(|\lambda| \ge 1\). According to the PBH test, this guarantees that there is no unstable left eigenvector of \(A_0 + V C_1\) in the left null space of \(B_1\). Thus, the pair \((A_0 + V C_1, B_1)\) is Schur stabilizable. There exists a state feedback matrix U such that the closed-loop matrix \(A = (A_0 + V C_1) + B_1 U\)
is Schur stable. This completes the proof of sufficiency.
				\end{proof}

\begin{remark}
	In the noise-free case, i.e., $E = I_n$, we have $E_\perp = 0$ and $E^\dagger E = I_n$. Consequently, condition \eqref{weak-stability-noisy} reduces to condition \eqref{weak-stability}. Similarly, Corollary \ref{weak-controllability} reduces to the controllability condition in Theorem \ref{theo-weak}.
\end{remark}

The data $(U,X)$ are informative for controllability (resp. stability) on the noise set $\mathbb{D}$ if the system $(A,B)$ parameterized by \eqref{parameterize-noisy} is controllable (resp. stable) for \emph{all} matrices $U, V$ of compatible dimensions. Somewhat surprisingly, the following results demonstrate that conventional informativity can only be achieved for the linear subspace noise model $\mathbb D$ when $\mathbb{D} = \{0_{n\times T}\}$, i.e., when the data are noiseless. 

\begin{corollary}
	The data $(U,X)$ are informative for controllability on the noise set $\mathbb{D}$ if and only if $E_\perp = 0$ and $\operatorname{rank}(\lambda X_p - X_f) = n$ for all $\lambda \in \mathbb{C}$.
\end{corollary}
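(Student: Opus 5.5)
The plan is to work directly with the parameterization \eqref{parameterize-noisy}, $[A,B]=E^\dagger E X_fZ_p^\dagger+E_\perp U+VZ_p^\perp$. We then show that, whenever $E_\perp\neq0$, the free term $E_\perp U$ can be used to destroy controllability. When $E_\perp=0$, the set $\Sigma^n_{\mathcal D}$ reduces to the noiseless data-consistent set $\Sigma_{\mathcal D}$, where the classical condition \eqref{rankcondition} applies.

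\emph{Necessity of $E_\perp=0$.} Suppose $E_\perp\neq0$. Since $E_\perp=I-E^\dagger E$ is the orthogonal projector onto $\ker E$, we can pick a nonzero real vector $w\in{\rm im}(E_\perp)$. Fix any consistent pair, for instance $V=0$, $U=0$, giving $[A_0,B_0]$. We then choose $U=w u^T$ with the row vector $u^T$ designed so that $E_\perp U$ adds $w u^T$ to $[A_0,B_0]$, because $E_\perp w=w$.

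Next we look for a left vector making the resulting pair uncontrollable. The simplest choice is to kill the whole pair along a direction. Take $\eta$ with $\eta^T w\neq0$, e.g.\ $\eta=w$. Choose $u^T=-(\eta^Tw)^{-1}\eta^T[A_0,B_0]+(\eta^T w)^{-1}[\lambda\eta^T,0]$ for any real $\lambda$. Then $\eta^T[A,B]=[\lambda\eta^T,0]$, so $\eta^T(\lambda I-A)=0$ and $\eta^TB=0$. By the PBH test the pair is uncontrollable. Hence informativity forces $E_\perp=0$. This is the step needing care: one must confirm that the modified pair still lies in $\Sigma^n_{\mathcal D}$. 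This holds automatically, because $U$ is arbitrary in \eqref{parameterize-noisy}, or equivalently because the added noise $w(u^TZ_p)$ has image in $\ker E\subseteq\mathcal D$.

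\emph{Reduction when $E_\perp=0$.} Then $E^\dagger E=I$, so $E$ is invertible on $\mathbb R^n$ and $ED=0$ forces $\mathcal D=\{0\}$. Thus $\Sigma^n_{\mathcal D}=\Sigma_{\mathcal D}$, and \eqref{parameterize-noisy} coincides with \eqref{eq:param_general}. Strong informativity for controllability on $\mathbb D$ is therefore exactly strong informativity without noise. By the result of \cite{informativity2020} recalled in \eqref{rankcondition}, this is equivalent to $\operatorname{rank}(\lambda X_p-X_f)=n$ for all $\lambda\in\mathbb C$. Combining the two parts gives both directions.

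The only genuine obstacle is the explicit construction in the necessity step. The main potential pitfall is ensuring real matrices, which is handled by using a real $\lambda$ and a real $w$.
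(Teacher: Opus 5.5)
Your proposal is correct and follows essentially the same route as the paper. Both arguments use the PBH test with a left vector $w\in\ker E$ and the freedom of $U$ in \eqref{parameterize-noisy} (with $V=0$) to force $w^T[A,B]=[\lambda w^T,0]$ whenever $E_\perp\neq0$; when $E_\perp=0$, $E$ is nonsingular, $\mathbb D=\{0\}$, and the question reduces to the noiseless condition \eqref{rankcondition}. Your explicit rank-one choice $U=wu^T$ with real $\lambda$ and your check that the modified pair stays in $\Sigma^n_{\mathcal D}$ make precise the realness and membership points that the paper leaves implicit.
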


                \begin{proof}
                According to the PBH test, a system $(A, B)$ is controllable if and only if there does not exist any complex scalar $s \in \mathbb{C}$ and any non-zero vector $w \in \mathbb{C}^n$ (a left eigenvector of $A$) such that $w^T [A, B] = [s w^T, 0]$. Substituting the parameterization \eqref{parameterize-noisy} into this condition yields \begin{equation}
                    \label{eq:phb_noise}
                    w^T E_\perp U + w^T V Z_p^\perp = [s w^T, 0] - w^T E^\dagger E X_f Z_p^\dagger.
                \end{equation}
                 Suppose $E_\perp \neq 0$. Then there exists a non-zero vector $w$ such that $w \in \ker(E)$, meaning $w^T E_\perp = w^T \neq 0$.
Because $U$ is a completely free arbitrary matrix, $w^T E_\perp U = w^T U$ can span the entire space $\mathbb{C}^{1 \times (n+m)}$. By picking $V = 0$, we can easily select $U$ such that $w^T U$ exactly equals the right-hand side of \eqref{eq:phb_noise}. This would make the system uncontrollable for that specific $U$. To prevent this and guarantee controllability for all $U$, we must restrict the null space of $E$ to be empty. As $E$ has full row rank, $E$ is non-singular. Then, the informativity problem reduces to the conventional one without noise, and the required results follow from condition \eqref{rankcondition}. 
                \end{proof}

\begin{corollary}
	The data $(U,X)$ are informative for stability on the noise set $\mathbb{D}$ if and only if $E_\perp = 0$ and the conditions in Proposition \ref{stable-no-prior} hold.
\end{corollary}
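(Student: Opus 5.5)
The plan is to follow the proof of the controllability corollary just above, replacing the PBH argument by the stability argument of Proposition~\ref{stable-no-prior}. Consistency $EX_fZ_p^\perp=0$ is assumed, as it must be for $\Sigma_{\mathcal D}^n$ to be nonempty. From \eqref{parameterize-noisy}, every admissible $A$ has the form
\[
A=A_0+E_\perp U E_n^T+V C,\qquad A_0:=E^\dagger E X_f Z_p^\dagger E_n^T,\quad C:=Z_p^\perp E_n^T=Z_{p_1}^\perp,
\]
where $U,V$ are free and $E_n=[I_n\ 0]$.

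\textbf{Necessity of $E_\perp=0$.} Suppose $E_\perp\neq0$. Choose $w\ne0$ with $w\in\ker E$, so $E_\perp w=w$. Set $V=0$ and $U=(\gamma I-A_0)w\,e^T$ (extended by zeros to $n+m$ columns), with $e$ a real vector satisfying $e^TE_n^T w=1$, for instance $e=E_n^T w/\|w\|^2$. Then $E_\perp U E_n^T w=(\gamma I-A_0)w$, because $\operatorname{im}(\gamma I-A_0)w$ need not lie in $\ker E$. To fix this, take $U=E_\perp^T\tilde U$ instead, or work directly: $E_\perp$ is the orthogonal projector onto $\ker E$, so pick the eigenvector $w$ and $U$ so that $Aw=\gamma w$ with $|\gamma|\ge1$.

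The cleaner route is to use a left eigenvector, mirroring the controllability corollary. Take $w\in\ker E$ real and nonzero, so $w^TE_\perp=w^T$. Choose $U$ with $w^TU E_n^T=\gamma w^T-w^TA_0$. This is possible since $w^TU$ ranges over all row vectors. Then $w^TA=\gamma w^T$ with, say, $\gamma=2$, and $A$ is not Schur stable. Hence strong informativity forces $\ker E=\{0\}$. Since $E$ has full row rank, $E$ is then square and invertible, so $E^\dagger E=I$ and $E_\perp=0$.

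\textbf{Reduction to the noiseless case.} With $E$ invertible, \eqref{noise-equation} is equivalent to $X_f=[A\ B]Z_p$, so $\Sigma_{\mathcal D}^n=\Sigma_{\mathcal D}$ (indeed $\mathbb D=\{0\}$). Strong informativity for stability on $\mathbb D$ then coincides with strong informativity for stability without priors. Proposition~\ref{stable-no-prior} gives exactly the stated conditions, which proves both necessity of those conditions and sufficiency.

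The only delicate point is the construction in the $E_\perp\ne0$ case: one needs a real $U$ that produces a real eigenvalue $|\gamma|\ge1$. Using a real $w\in\ker E$ and a real rank-one $U$ settles this, exactly as in the rank-one argument of Proposition~\ref{stable-no-prior}.
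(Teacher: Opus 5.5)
Your proof is correct and follows the paper's route. When $E_\perp\neq0$, you exhibit a real rank-one $U$ that produces the eigenvalue $\gamma=2$, so strong informativity forces $\ker E=\{0\}$ and hence $\Sigma_{\mathcal D}^n=\Sigma_{\mathcal D}$; Proposition~\ref{stable-no-prior} then gives both directions. Your left-eigenvector construction ($w\in\ker E$, hence $w^TE_\perp=w^T$) is a clean way to make the paper's terse ``following the logic of Proposition~\ref{stable-no-prior}'' precise. In a final write-up, delete the abandoned right-eigenvector attempt, because its identity $E_\perp UE_n^Tw=(\gamma I-A_0)w$ holds only when $(\gamma I-A_0)w\in\ker E$.
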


\begin{proof}
	Following the logic in the proof of Proposition \ref{stable-no-prior}, for all realizations $A$ parameterized by \eqref{parameterize-noisy} to be stable regardless of the choices for $U$ and $V$, it is necessary that $E_{\perp} = 0$ (and consequently ${\mathbb D}=\{0\}$). Once $E_\perp = 0$, the required results follow precisely from Proposition \ref{stable-no-prior}.
\end{proof}

\begin{remark}
	If the available prior knowledge constitutes a single-affine constraint, one could follow a similar research line: first derive the parameterization for the intersection set $\hat{\Sigma}_{\mathrm{\rm stru}} \cap \Sigma^n_{{\mathcal D}}$, and then establish conditions under which all elements in this parameterized set are controllable or stable. However, parameterizing $\hat{\Sigma}_{\mathrm{\rm stru}} \cap \Sigma^n_{\mathcal D}$ entails solving heavily coupled matrix equations, which is significantly more intricate than the noise-free case. We leave such extensions for future work.
\end{remark}

				\section{Conclusions}

This paper developed a unified framework distinguishing between weak and strong data informativity under prior structural knowledge. Weak informativity describes the intermediate
regime between strong informativity for a property
and strong informativity for its complement. By explicitly parameterizing the set of systems jointly consistent with data and structural priors as affine functions of free parameters via the Moore--Penrose pseudoinverse, we characterized these notions for fundamental system properties, including controllability, stabilizability, and stability. For controllability, we also established a generic dichotomy: either no data-consistent system is controllable, or almost all such systems are controllable.  We further extended this framework to accommodate noisy data belonging to known linear subspaces.

           	Our work shows that, even when the data are not informative for property ${\mathcal P}$, they may still be weakly informative (or informative for the complementary property ${\mathcal P}^c$). Consistent with \cite{berberich2022combining,chiuso2025harnessing}, prior structural knowledge relaxes data excitation requirements, but also reduces the degree of freedom for system parameterization, making the analysis more complicated. 
            Future research directions will focus on controller design with prior structural knowledge and expanding these parameterization techniques to handle quadratic noise bound constraints.

 \section*{Appendix: Proofs of some technical results} 
			\textbf{Proof of Theorem \ref{theo-weak}:}
Let $A_0=X_fZ_{p_1}^\dagger$, $B_0=X_fZ_{p_2}^\dagger$, $C_1=Z_{p_1}^\perp$, and $D_1=Z_{p_2}^\perp$. Then $(A_Y,B_Y)$ can be written in the linearized parameterization as $A_Y=A_0+YC_1$ and $B_Y=B_0+YD_1$. Hence, the family of all data-consistent systems is a linearly parameterized family of the form
$
A_p=A_0+B_1P_1C_1,\, B_p=B_0+B_1P_1D_1,
$
with ${\bf k}=\{1\}$, $P_1=Y$, and $B_1=I_n$.

We now apply Lemma \ref{lemma:sc}. The graph $\mathcal{G}$ has only two nodes, namely nodes $0$ and $1$. The transfer matrix from node $0$ to node $1$ is $H_{10}(\lambda)=C_1(\lambda I-A_0)^{-1}B_0+D_1$. If $Z_p^\perp=0$, then $[C_1,D_1]=0$ and the family is a singleton and Condition (2) is its PBH controllability test. If $[C_1,D_1]\ne 0$, condition (1) of Lemma \ref{lemma:sc} reduces to $H_{10}(\lambda)\not\equiv 0$.

For condition (2), ${\bf k}=\{1\}$ gives only two subsets. When ${\bf s}=\emptyset$, the condition is
\[
\operatorname{rank}\begin{bmatrix}\lambda I-A_0&-B_0\\C_1&D_1\end{bmatrix}\ge n.
\]
When ${\bf s}=\{1\}$, the rank condition is automatic because the matrix contains $B_1=I_n$. Thus Lemma \ref{lemma:sc} reduces  Conditions (1)--(2).

				
				{\bf Proof of Corollary \ref{coro:weak-stabilizability}}: Let $A_0,B_0,C_1,D_1$ be defined in Remark \ref{remark_compare}.   Let $\mathbb{C}_u = \{\lambda \in \mathbb{C} : |\lambda| \ge 1\}$ denote the unstable region of the complex plane.  
				
			{\bf{Necessity:}} Stabilizability requires $\operatorname{rank}[\lambda I-A\ \ B]=n$ for all $\lambda\in\mathbb C_u$. Since $[\lambda I-A\ \ B]=[\lambda I-A_0\ \ B_0]-Y[C_1\ \ D_1]$,
				\[
				\operatorname{rank}[\lambda I-A\ \ B]\le\operatorname{rank}\begin{bmatrix}\lambda I-A_0&B_0\\C_1&D_1\end{bmatrix},
				\]
				which proves necessity of \eqref{eq:weak-stabilizability}.
				
				{\bf{Sufficiency:}} We establish sufficiency by checking the structural connection through the transfer matrix $H(\lambda) :=H_{10}(\lambda)= C_1(\lambda I - A_0)^{-1}B_0 + D_1$.
				
				{\bf{Case 1: $H(\lambda) \not\equiv 0$.}}
				In this case, the system is structurally path-connected from the inputs (c.f. Lemma \ref{lemma:sc}). From \cite{zhang2019structural}, the set of parameters $Y$ that result in uncontrollability at any non-fixed mode constitutes a proper algebraic variety. Since condition \eqref{eq:weak-stabilizability} holds, there are no fixed uncontrollable modes in $\mathbb{C}_u$. Thus, a generic choice of $Y$ ensures that any remaining uncontrollable modes lie 
				in the stable region $\mathbb{C} \setminus \mathbb{C}_u$.
				
				{\bf{Case 2: $H(\lambda) \equiv 0$.}}
				In this case,  $D_1 = 0$ and $C_1 A_0^j B_0 = 0$ for all $j \ge 0$. This implies the reachable subspace of $(A_0, B_0)$, denoted as $\mathcal{R}$, satisfies $\mathcal{R} \subseteq \ker(C_1)$. 
				
				Let $\operatorname{dim}(\mathcal{R}) = n_1$. We choose a Kalman canonical transformation matrix $T = [T_1 \,\, T_2]$ where the columns of $T_1$ span $\mathcal{R}$. Applying this coordinate transformation, the nominal system matrices decouple into:
				\begin{equation}
					\tilde{A}_0 = T^{-1}A_0T = \begin{bmatrix} A_{11} & A_{12} \\ 0 & A_{22} \end{bmatrix}, \tilde{B}_0 = T^{-1}B_0 = \begin{bmatrix} B_1 \\ 0 \end{bmatrix}
				\end{equation}
				where the sub-pair $(A_{11}, B_1)$ is reachable by construction. Since $T_1$ spans $\mathcal{R}$ and $\mathcal{R} \subseteq \ker(C_1)$, it follows that $C_1 T_1 = 0$. Therefore, $\tilde{C}_1 = C_1 T = \begin{bmatrix} 0 & C_{12} \end{bmatrix}$.
				
				Let us perform this same similarity transformation on the original rank condition \eqref{eq:weak-stabilizability}. We have
				\begin{align}
					&\operatorname{rank} \begin{bmatrix} \lambda I - A_0 & -B_0 \\ C_1 & 0 \end{bmatrix} \\
					&= \operatorname{rank} \begin{bmatrix} \lambda I - A_{11} & -A_{12} & -B_1 \\ 0 & \lambda I - A_{22} & 0 \\ 0 & C_{12} & 0 \end{bmatrix} \ge n_1 + n_2
				\end{align}
				Because $(A_{11}, B_1)$ is reachable, $\begin{bmatrix} \lambda I - A_{11} & -B_1 \end{bmatrix}$ exhibits full row rank $n_1$ for all $\lambda \in \mathbb{C}$. The above inequality reduces to
				\begin{equation} \label{eq:detectable}
					\operatorname{rank} \begin{bmatrix} \lambda I - A_{22} \\ C_{12} \end{bmatrix} \ge n_2, \quad \forall \lambda \in \mathbb{C}_u
				\end{equation}
				Equation \eqref{eq:detectable} corresponds to the PBH test for the detectability of the pair $(A_{22}, C_{12})$. 
				
				Now look at the parameterized state matrix $A = A_0 + Y C_1$. Transforming it to our canonical coordinates with a partitioned parameter matrix $T^{-1}Y = \begin{bmatrix} Y_1^T & Y_2^T \end{bmatrix}^T$ yields:
				\begin{equation}
					\tilde{A} = \tilde{A}_0 + (T^{-1}Y)\tilde{C}_1 = \begin{bmatrix} A_{11} & A_{12} + Y_1 C_{12} \\ 0 & A_{22} + Y_2 C_{12} \end{bmatrix}
				\end{equation}
				The uncontrollable modes of this system are the eigenvalues of the autonomous block $A_{22} + Y_2 C_{12}$. Since the pair $(A_{22}, C_{12})$ is detectable by virtue of \eqref{eq:detectable}, linear pole placement guarantees that there exists a specific choice of $Y_2$ that shifts all eigenvalues of $A_{22} + Y_2 C_{12}$  into the stable domain. Consequently, there exists a $Y$ for which $(A_Y,B_Y)$ is stabilizable, completing the proof of sufficiency.

					{\bf  Proof of Proposition \ref{auxiliary_lemma}: }
				Consider the linear operator $\mathcal{A}: \mathbb{R}^{n \times q} \to \mathbb{R}^r$ defined by $\mathcal{A}(W) = (\langle M_1, W \rangle, \dots, \langle M_r, W \rangle)^T$. The system can be written as $\mathcal{A}(W) = \gamma$.
				
			We first establish the consistency condition.  Let $\mathcal{M} = \operatorname{span}\{M_1, \dots, M_r\}$ and let $\mathcal{M}^\perp = \{X \in \mathbb{R}^{n \times q} : \langle M_i, X \rangle = 0, \forall i=1,\dots,r\}$. Then, any matrix $W \in \mathbb{R}^{n \times q}$ can be uniquely decomposed as $W = W_{\mathcal{M}} + W_{\mathcal{M}^\perp}$, where $W_{\mathcal{M}} \in \mathcal{M}$ and $W_{\mathcal{M}^\perp} \in \mathcal{M}^\perp$.
				Since $\langle M_i, W_{\mathcal{M}^\perp} \rangle = 0$, the vector $\gamma$ must lie in the image of the subspace $\mathcal{M}$ under $\mathcal{A}$. Let $W_{\mathcal{M}} = \sum_{j=1}^r \alpha_j M_j$ for some coefficients $\alpha \in \mathbb{R}^r$. Substituting this into \eqref{eq:system} yields:
				$ \sum_{j=1}^r \alpha_j \langle M_i, M_j \rangle = \gamma_i \implies G\alpha = \gamma. $
				This linear system for $\alpha$ has a solution if and only if $\gamma \in \operatorname{im}(G)$. 
				
	 Let $W_p = \sum_{j=1}^r \alpha_j M_j$ with $\alpha = G^\dagger \gamma$. Then
				$ \langle M_i, W_p \rangle = \sum_{j=1}^r (G^\dagger \gamma)_j \langle M_i, M_j \rangle = (G G^\dagger \gamma)_i. $
				If $\gamma \in \operatorname{im}(G)$, then $G G^\dagger \gamma = \gamma$, satisfying the non-homogeneous equations.
				 The homogeneous solution $W_h$ must lie in $\mathcal{M}^\perp$. We can characterize any $W_h \in \mathcal{M}^\perp$ as the orthogonal projection of an arbitrary matrix $Y \in \mathbb{R}^{n \times q}$ onto $\mathcal{M}^\perp$, given by $P_{\mathcal{M}^\perp}(Y) = Y - P_{\mathcal{M}}(Y)$. 
				
				To find $P_{\mathcal{M}}(Y)$, we note $P_{\mathcal{M}}(Y) = \sum_{i=1}^r c_i M_i$ for some $c$. The orthogonality condition $\langle M_k, Y - P_{\mathcal{M}}(Y) \rangle = 0$ implies $\sum_{i=1}^r c_i \langle M_k, M_i \rangle = \langle M_k, Y \rangle$, or $Gc = b$ where $b_j = \langle M_j, Y \rangle$. Solving for $c$ using the pseudoinverse gives $c_i = \sum_{j=1}^r (G^\dagger)_{ij} \langle M_j, Y \rangle$. Substituting this back:
				\[ P_{\mathcal{M}}(Y) = \sum_{i=1}^r \left( \sum_{j=1}^r (G^\dagger)_{ij} \langle M_j, Y \rangle \right) M_i. \]
				Thus, $W_h = Y - \sum_{i=1}^r \sum_{j=1}^r (G^\dagger)_{ij} \langle M_j, Y \rangle M_i$ represents all elements in the kernel of the system. Combining $W_p$ and $W_h$ yields \eqref{eq:solution}.

{\bf Proof of Theorem \ref{theo:informativity-controllability}:}
By Proposition~\ref{thm:block_structured_param}, every jointly consistent system has the form
\[
[A_Y,B_Y]=X_fZ_p^\dagger+(M_0+S_{\ell\perp}U+V\Sigma_2^\perp)Z_p^\perp,
\]
where $U,V\in\mathbb R^{n\times(n+m)}$ are arbitrary. The PBH test states that such a system is uncontrollable if and only if there exist $\lambda\in\mathbb C$ and $v\ne0$ such that
\begin{equation}\label{eq:pbh_start}
v^* \left( X_f Z_p^\dagger + (M_0 + S_{\ell \perp} U + V \Sigma_2^\perp) {Z_p^\perp} \right) = \lambda v^* E_n.
\end{equation}
Right-multiplying \eqref{eq:pbh_start} by $Z_p$ and noting ${Z_p^\perp} Z_p = 0$ and $X_fZ_p^\dagger Z_p=X_f$ yields $v\in\mathcal K_\lambda$. For such a vector, $v^*X_fZ_p^\dagger=\lambda v^*E_nZ_pZ_p^\dagger$, so~\eqref{eq:pbh_start} is equivalent to
\begin{equation}\label{eq:pbh_red}
v^*S_{\ell\perp}UZ_p^\perp+v^*VR
=v^*(\lambda E_n-M_0)Z_p^\perp.
\end{equation}
Column-wise vectorization converts this equation into $\mathcal L(v)\xi=b(\lambda,v)$, where $\xi=[\operatorname{vec}(U)^T,\operatorname{vec}(V)^T]^T$ must be real. Separating real and imaginary parts yields
\begin{equation}\label{eq:necessity_real_sys}
\mathcal L_{\mathbb R}(v)\xi=b_{\mathbb R}(\lambda,v).
\end{equation}
This real linear system is infeasible if and only if \eqref{eq:exact-strong-controllability} holds. Requiring infeasibility for every candidate $(\lambda,v)$ proves the equivalence.

\emph{Sufficiency of Conditions~1--2.}
Condition~1 implies $v^*S_{\ell\perp}=0$ for every $v\in\mathcal K_\lambda$. Since $RW=0$, multiplying~\eqref{eq:pbh_red} by $W$ would then give
$
v^*(\lambda E_n-M_0)Z_p^\perp W=0.
$
Together with $v\in\mathcal K_\lambda\setminus\{0\}$, this contradicts Condition~2. Thus, no jointly consistent system is uncontrollable.

\emph{Necessity when $S_\ell=I_n$.}
Here $S_{\ell\perp}=0$, so Condition~1 is automatic. If Condition~2 fails at some $\lambda\in\mathbb C$, there is a $v\in\mathcal K_\lambda\setminus\{0\}$ such that
\[
\begin{aligned}
z&:=v^*(\lambda E_n-M_0)Z_p^\perp,\qquad zW=0,\\
z&=zR^\dagger R,\qquad g:=zR^\dagger.
\end{aligned}
\]
It is therefore enough to find a real $V\in\mathbb R^{n\times(n+m)}$ satisfying $v^*V=g$, because then $v^*VR=gR=z$ and~\eqref{eq:pbh_red} holds with $U=0$. Write $v=v_{\rm R}+\mathrm{i}v_{\rm I}$ and $g=g_{\rm R}+\mathrm{i}g_{\rm I}$. If $v_{\rm R}$ and $v_{\rm I}$ are linearly independent, then
\[
\mathcal V:=\begin{bmatrix}v_{\rm R}^T\\v_{\rm I}^T\end{bmatrix},\qquad
G:=\begin{bmatrix}g_{\rm R}\\-g_{\rm I}\end{bmatrix},\qquad
V:=\mathcal V^\dagger G
\]
is real and satisfies $\mathcal V V=G$, since $\mathcal V$ has full row rank. Equivalently, $v_{\rm R}^TV=g_{\rm R}$ and $v_{\rm I}^TV=-g_{\rm I}$, so $v^*V=g$.

If $v_{\rm R}$ and $v_{\rm I}$ are linearly dependent, write $v=\alpha w$ with $\alpha\ne0$ and $w\in\mathbb R^n\setminus\{0\}$. Dividing the two equations defining the violating pair by $\bar\alpha$, we may replace $v$ by $w$. If $\lambda$ is real, then $z$ and $g$ are real, and $V=wg/(w^Tw)$ satisfies $w^TV=g$. If $\lambda$ is nonreal, separating real and imaginary parts gives
\[
\begin{aligned}
w^TX_p&=w^TX_f=0,\\
w^TE_nZ_p^\perp W&=w^TM_0Z_p^\perp W=0.
\end{aligned}
\]
Hence, for any real $\mu$, the pair $(\mu,w)$ also violates Condition~2: with $z_\mu:=w^T(\mu E_n-M_0)Z_p^\perp$ and $g_\mu:=z_\mu R^\dagger$, one has $z_\mu W=0$, and the real matrix $V=wg_\mu/(w^Tw)$ satisfies $w^TVR=g_\mu R=z_\mu$. Thus, in every case, a real $V$ yields a jointly consistent system that violates the PBH controllability test. Condition~2 is therefore necessary when $S_\ell=I_n$.

					\bibliographystyle{elsarticle-num}
					{\footnotesize
						\bibliography{yz}

@ARTICLE{6093751,
  author={Hou, Zhongsheng and Jin, Shangtai},
  journal={IEEE Transactions on Neural Networks}, 
  title={Data-Driven Model-Free Adaptive Control for a Class of MIMO Nonlinear Discrete-Time Systems}, 
  year={2011},
  volume={22},
  number={12},
  pages={2173-2188}}

@article{nemirovskii1993several,
  title={Several {NP}-hard problems arising in robust stability analysis},
  author={Nemirovskii, Arkadii},
  journal={Mathematics of Control, Signals and Systems},
  volume={6},
  number={2},
  pages={99--105},
  year={1993},
  publisher={Springer}
}

@article{vandenhof2026informativity,
  title   = {Data-informativity conditions for structured linear systems
             with implications for dynamic networks},
  author  = {Van den Hof, Paul M. J. and Shi, Shengling and
             Fonken, Stefanie J. M. and Ramaswamy, Karthik R. and
             Hjalmarsson, H{\aa}kan and Dankers, Arne G.},
  journal = {Automatica},
  volume  = {191},
  pages   = {113110},
  year    = {2026}
}

@unpublished{cao2026informativity,
  title  = {Data Informativity for Linear Systems with Partially Known
            Parameters},
  author = {Cao, Ao and Wang, Fuyong and You, Keyou},
  year   = {2026},
  note   = {Preprint}
}

@article{Kogan2025Robust,
  author    = {Kogan, Mark M. and Stepanov, Andrey V.},
  title     = {Robust {$H_\infty$} Control With Transients for {LTV} Systems Based on Prior Knowledge and Data},
  journal   = {Automatica},
  volume    = {173},
  pages     = {112083},
  year      = {2025}
}

@article{shakouri2026data,
  title={Data-driven stabilization using prior knowledge on stabilizability and controllability},
  author={Shakouri, Amir and van Waarde, Henk J and Baltussen, Tren MJT and Heemels, WPMH},
  journal={IEEE Transactions on Automatic Control},
  year={2026},
  publisher={IEEE}
}

@article{Breschi2023StructuralPriors,
  author    = {Breschi, Valentina and Sassella, Alessandro and Formentin, Simone},
  title     = {Data-Driven Design of Explicit Predictive Controllers With Structural Priors},
  journal   = {IEEE Control Systems Letters},
  volume    = {7},
  pages     = {1616--1621},
  year      = {2023}
}

@inproceedings{krishnan2021direct,
  title={On direct vs indirect data-driven predictive control},
  author={Krishnan, Vishaal and Pasqualetti, Fabio},
  booktitle={2021 60th IEEE Conference on Decision and Control (CDC)},
  pages={736--741},
  year={2021},
  organization={IEEE}
}

@article{eising2021informativity,
  title={Informativity of noisy data for structural properties of linear systems},
  author={Eising, J. and Trentelman, H. L.},
  journal={Systems \& Control Letters},
  volume={158},
  pages={105058},
  year={2021}
}

@article{huang2025datadriven,
  title={Data-driven stabilization of polynomial systems using density functions},
  author={Huang, Huayuan and Camlibel, M Kanat and Carloni, Raffaella and van Waarde, Henk J},
  journal={Automatica},
  volume={193},
  pages={113229},
  year={2026},
  publisher={Elsevier}
}

@article{zhu2024data,
  title={Data informativity for robust output regulation},
  author={Zhu, Lijun and Chen, Zhiyong},
  journal={IEEE Transactions on Automatic Control},
  volume={69},
  number={10},
  pages={7075--7080},
  year={2024},
  publisher={IEEE}
}

@article{van2022data,
  title={Data-driven dissipativity analysis: Application of the matrix S-lemma},
  author={Van Waarde, Henk J and Camlibel, M Kanat and Rapisarda, Paolo and Trentelman, Harry L},
  journal={IEEE Control Systems Magazine},
  volume={42},
  number={3},
  pages={140--149},
  year={2022},
  publisher={IEEE}
}

@article{darouach2002design,
  title     = {Design of Observers for Descriptor Systems},
  author    = {Darouach, Mohamed and Boutayeb, Mohamed},
  journal   = {IEEE Transactions on Automatic Control},
  volume    = {40},
  number    = {7},
  pages     = {1323--1327},
  year      = {1995},
  publisher = {IEEE}
}

@article{chiuso2025harnessing,
  title={Harnessing uncertainty for a separation principle in direct data-driven predictive control},
  author={Chiuso, A. and Fabris, M. and Breschi, V. and Formentin, S.},
  journal={Automatica},
  volume={173},
  pages={112070},
  year={2025}
}

@article{berberich2022combining,
  title={Combining prior knowledge and data for robust controller design},
  author={Berberich, Julian and Scherer, Carsten W and Allg{\"o}wer, Frank},
  journal={IEEE Transactions on Automatic Control},
  volume={68},
  number={8},
  pages={4618--4633},
  year={2022},
  publisher={IEEE}
}

@article{Hou1999Observer,
  author    = {Hou, M. and Muller, P. C.},
  title     = {Observer design for descriptor systems},
  journal   = {IEEE Transactions on Automatic Control},
  year      = {1999},
  volume    = {44},
  number    = {1},
  pages     = {164--169}
}

@inproceedings{miller2025data,
  title={Data-driven structured robust control of linear systems},
  author={Miller, Jared and Eising, Jaap and D{\"o}rfler, Florian and Smith, Roy S},
  booktitle={2025 European Control Conference (ECC)},
  pages={96--101},
  year={2025},
  organization={IEEE}
}

@article{zhang2023real,
  title={On real structured controllability/stabilizability/stability radius: Complexity and unified rank-relaxation based methods},
  author={Zhang, Yuan and Xia, Yuanqing and Zhan, Yufeng},
  journal={Systems \& Control Letters},
  volume={178},
  pages={105578},
  year={2023},
  publisher={Elsevier}
}

@ARTICLE{8358007,
  author={Liu, Chun and Jiang, Bin and Zhang, Ke},
  journal={IEEE Transactions on Systems, Man, and Cybernetics: Systems}, 
  title={Adaptive Fault-Tolerant H-Infinity Output Feedback Control for Lead–Wing Close Formation Flight}, 
  year={2020},
  volume={50},
  number={8},
  pages={2804-2814}}

@article{kang2023minimum,
  title={Minimum input design for direct data-driven property identification of unknown linear systems},
  author={Kang, Shubo and You, Keyou},
  journal={Automatica},
  volume={156},
  pages={111130},
  year={2023},
  publisher={Elsevier}
}

@ARTICLE{informativity2020,
  author={van Waarde, Henk J. and Eising, Jaap and Trentelman, Harry L. and Camlibel, M. Kanat},
  journal={IEEE Transactions on Automatic Control}, 
  title={Data Informativity: A New Perspective on Data-Driven Analysis and Control}, 
  year={2020},
  volume={65},
  number={11},
  pages={4753-4768}}

@ARTICLE{1101135,
  author={Corfmat, J. and Morse, A.},
  journal={IEEE Transactions on Automatic Control}, 
  title={Structurally controllable and structurally canonical systems}, 
  year={1976},
  volume={21},
  number={1},
  pages={129-131}}

@article{2013Model,
  title={Model free adaptive control: Theory and applications},
  author={ Hou, Z.  and  Jin, S. },
  year={2013},
}

@ARTICLE{formulas2020,
  author={De Persis, Claudio and Tesi, Pietro},
  journal={IEEE Transactions on Automatic Control}, 
  title={Formulas for Data-Driven Control: Stabilization, Optimality, and Robustness}, 
  year={2020},
  volume={65},
  number={3},
  pages={909-924}}

@misc{lofberg2026static,
  title         = {Static Output-Feedback Stabilization Is {NP}-Hard},
  author        = {L{\"o}fberg, Johan},
  year          = {2026},
  eprint        = {2609.16886},
  archivePrefix = {arXiv},
  primaryClass  = {math.OC},
  url           = {https://arxiv.org/abs/2609.16886}
}

@misc{ahmadi2026complexity,
  title         = {Complexity of Output Feedback Stabilization},
  author        = {Ahmadi, Amir Ali and Chaudhry, Abraar and Narang, Ijay and Tang, Yukai},
  year          = {2026},
  eprint        = {2609.20636},
  archivePrefix = {arXiv},
  primaryClass  = {math.OC},
  url           = {https://arxiv.org/abs/2609.20636}
}

@book{Modern_Control_Ogata,
title={Modern Control Engineering},
  author={Ogata, Katsuhiko and Yang, Yanjuan},
  volume={4},
  year={2002},
  publisher={Prentice-Hall}
}

@article{zhang2019structural,
  title={Structural Controllability of an {NDS} with {LFT} Parameterized Subsystems},
  author={Zhang, Yuan and Zhou, Tong},
  journal={IEEE Transactions on Automatic Control},
  year={2019},
  volume={64},
  number={12},
  pages={4920--4935},
  publisher={IEEE}
}

@article{anderson1982structural,
  title={Structural controllability and matrix nets},
  author={Anderson, Brian DO and Hong, Hui-min},
  journal={International Journal of Control},
  volume={35},
  number={3},
  pages={397--416},
  year={1982},
  publisher={Taylor \& Francis}
}

@article{zhang2024zero,
  title={Zero-norm distance to controllability of linear dynamic networks},
  author={Zhang, Yuan and Xia, Yuanqing and Zhan, Yufeng and Sun, Zhongqi},
  journal={IEEE Transactions on Cybernetics},
  volume={54},
  number={12},
  pages={7368--7380},
  year={2024},
  publisher={IEEE}
}
			}   }}
		\end{document}